\def\draft{0}
\def\highlight{0}
\def\sigconf{0}
\def\bigfont{0}
\def\anonymous{0}
\def\shownomenclature{0}
\def\masterthesis{0}
\def\shortver{0}
\def\preprint{1}
\def\cryptology{0}
\def\llncs{0}
\def\lipics{0}
\def\acmtops{0}
\def\nextv{0}
\def\finance{0} 
\mathchardef\mhyphen="2D
\preprint=1
\theoremstyle{acmdefinition}
\newtheorem{remark}[theorem]{Remark}
\newtheorem{claim}[theorem]{Claim}}
        \xpatchcmd{\thenomenclature}{%
        \section*{\nomname}
        }{
        \section{\nomname}\label{sec:nomenclature}}{\typeout{Success}}{\typeout{Failure}}
\theoremstyle{plain}
\newcommand{\snote}[1]{\authnote{Shai}{#1}{red}}
    \newcommand{\authnote}[3]{{\color{#3} {\bf  #1:} #2}}
    \newcommand{\authnote}[3]{}
        \newcommand{\myhl}[1]{\hl{#1}}
        \newcommand{\myhl}[1]{#1}
    \newcommand{\myhl}[1]{#1}
\newcommand{\ket}[1]{\vert #1 \rangle}
\newcommand{\hashed}{\hyperref[col:tx_hashed]{hashed}\xspace}
\newcommand{\derived}{\hyperref[col:tx_derived]{derived}\xspace}
\newcommand{\Derived}{\hyperref[col:tx_derived]{Derived}\xspace}
\newcommand{\naked}{\hyperref[col:tx_naked]{naked}\xspace}
\newcommand{\stealable}{\hyperref[col:tx_steal]{stealable}\xspace}
\newcommand{\lost}{\hyperref[col:tx_lost]{lost}\xspace}
\newcommand{\Hashed}{\hyperref[col:tx_hashed]{Hashed}\xspace}
\newcommand{\Doomed}{\hyperref[col:tx_doom]{Doomed}\xspace}
\newcommand{\Naked}{\hyperref[col:tx_naked]{Naked}\xspace}
\newcommand{\Stealable}{\hyperref[col:tx_steal]{Stealable}\xspace}
\newcommand{\Lost}{\hyperref[col:tx_lost]{Lost}\xspace}
\newcommand{\secorssec}[1]{
    \ifnum\shortver=0
        \subsection{#1}
    \else
        \section{#1}
    \fi
}
\newcommand{\chr}[2]{
        \hyperref[#1]{#2}
}
\newcommand{\los}[2]{\ifnum\shortver=0 #1\else #2\fi}
\newcommand{\cof}[2]{\ifnum\finance=0 #1\else #2\fi}
\newcommand{\nextver}[1]{\ifnum\nextv=1 #1\fi}
\newcommand{\sC}{\mathcal{C}}
\newcommand{\sD}{\mathcal{D}}
\newcommand{\sA}{\mathcal{A}}
\newcommand{\sK}{\mathcal{K}}
\newcommand{\sS}{\mathcal{S}}
\newcommand{\sT}{\mathcal{T}}
\newcommand{\sE}{\mathcal{E}}
\newcommand{\DD}{\mathbb{D}}
\newcommand{\xsk}{\mathsf{xsk}}
\newcommand{\xpk}{\mathsf{xpk}}
\newcommand{\msk}{\mathsf{msk}}
\newcommand{\ds}{\ensuremath{\mathsf{DS}}\xspace}
\renewcommand{\sign}{\ensuremath{\mathsf{Sign}}\xspace}
\newcommand{\ver}{\ensuremath{\mathsf{Ver}}\xspace}
\newcommand{\zo}{\{0,1\}}
\renewcommand{\seufcma}{\ensuremath{\mathsf{\mathsf{EUF\mhyphen CMA}}}\xspace}
\newcommand{\keygen}{\ensuremath{\mathsf{KeyGen}}\xspace}
\newcommand{\der}{\ensuremath{\mathsf{Der}}\xspace}
\newcommand{\kdf}{\ensuremath{\mathsf{KDF}}\xspace}
\newcommand{\sseuflcma}{\ensuremath{\mathsf{\mathsf{EUF\mhyphen LCMA}}}\xspace}
\newcommand{\utxo}{\ensuremath{\mathsf{\mathsf{UTXO}}}\xspace}
\newcommand{\utxos}{\ensuremath{\mathsf{\mathsf{UTXO}}}s\xspace}
\newcommand{\ptpk}{\ensuremath{\mathsf{\mathsf{P2PK}}}\xspace}
\newcommand{\ptpkh}{\ensuremath{\mathsf{\mathsf{P2PKH}}}\xspace}
\newcommand{\ecdsa}{\ensuremath{\mathsf{\mathsf{ECDSA}}}\xspace}
\newcommand{\secp}{\ensuremath{\mathsf{\mathsf{secp256k1}}}\xspace}
\newcommand{\tx}{\ensuremath{\mathsf{\mathsf{tx}}}\xspace}
\newcommand{\sha}{\ensuremath{\mathsf{\mathsf{SHA\mhyphen 256}}}\xspace}
\newcommand{\shaf}{\ensuremath{\mathsf{\mathsf{SHA\mhyphen 512}}}\xspace}
\newcommand{\HH}{\ensuremath{\mathsf{\mathsf{H}}}\xspace}
\newcommand{\pkec}{\ensuremath{\mathsf{PK^{EC}}}\xspace}
\newcommand{\cmark}{\ding{51}}%
\newcommand{\xmark}{\ding{55}}%
\def\hcmark{\hspace{-0.2cm}
\parbox{10pt}{
    \begin{tikzpicture}
        \protect\node at (0, 0)   (a) {\cmark};
        \protect\draw[scale=0.4,fill=black,line width=0.4mm]  (-0.1,0.3) -- (0.25,-0.1);
    \end{tikzpicture}
    }
}
\tikzset{join/.code=\tikzset{after node path={%
\ifx\tikzchainprevious\pgfutil@empty\else(\tikzchainprevious)%
edge[every join]#1(\tikzchaincurrent)\fi}}}
\tikzset{>=stealth',every on chain/.append style={join},
         every join/.style={->}}
\tikzstyle{labeled}=[execute at begin node=$\scriptstyle,
\newcommand{\qpt}{\ensuremath{\mathsf{QPT}}\xspace}
\newcommand{\picnic}{\ensuremath{\mathsf{PICNIC}}\xspace}
    \newtheorem{theorem}{Theorem} 
    \newtheorem{corollary}{Corollary}
    \newtheorem{proposition}{Proposition}
    \newtheorem{definition}{Definition}
    \newtheorem{claim}{Claim}
    \newtheorem*{theorem*}{Theorem}
    \newtheorem*{lemma*}{Lemma}
    \newtheorem*{corollary*}{Corollary}
    \newtheorem*{proposition*}{Proposition}
    \newtheorem*{claim*}{Claim}
    \theoremstyle{definition}
    \theoremstyle{remark}
    \newtheorem{remark}[theorem]{Remark}
    \theoremstyle{plain}
\let\expandafter\savedflalignstar\csname flalign*\endcsname
\let\expandafter\savedendflalignstar\csname endflalign*\endcsname
\let\csname flalign*\endcsname\savedflalignstar
\let\csname endflalign*\endcsname\savedendflalignstar
\newcommand{\ab}[1]{}  
\DeclareMathAlphabet{\mathpzc}{OT1}{pzc}{m}{it}
\begin{document}

\newcommand\tikzmark[1]{%
\tikz[remember picture,baseline] \node[inner sep=2pt,outer sep=0] (#1){};%
}

\cof{
\ifnum\llncs=1
\title{Protecting Quantum Procrastinators with Signature Lifting}
\subtitle{A Case Study in Cryptocurrencies}
\else
\title{Protecting Quantum Procrastinators with Lifted Signatures: a Case Study in Cryptocurrencies}
\fi
}
{\title{Protecting Quantum Cryptocurrency Procrastinators with Signature Lifting}}
\ifnum\anonymous=0
    \ifnum\llncs=1
        %
        \author{Or Sattath\inst{1}
        \and
        Shai Wyborski\inst{2,1}
        }
        %
        %
        \institute{Department of Computer Science, Ben-Gurion University of the Negev, Israel \\
        \email{sattath@bgu.ac.il} \and
        School of Computer Science and Engineering, The Hebrew University of Jerusalem, Israel\\
        \email{shai.wyborski@mail.huji.ac.il}
        }
        
        \authorrunning{O. Sattath and S. Wyborski}
    \fi
    \ifnum\acmtops=1
    
        \author{Or Sattath}
        \orcid{000-0001-7567-3822}
        \affiliation{%
          \institution{Department of Computer Science, Ben-Gurion University of the Negev}
          \city{Beer Sheva}
          \country{Israel}}
        
        \author{Shai Wyborski}
        \orcid{0000-0001-6847-5668}
        \affiliation{%
          \institution{School of Computer Science and Engineering, The Hebrew University of Jerusalem}
          \city{Jerusalem}
          \country{Israel}}
        \affiliation{%
          \institution{Department of Computer Science, Ben-Gurion University of the Negev}
          \city{Beer Sheva}
          \country{Israel}
          }
    \fi
    \ifnum\preprint=1
        \author[1]{Or Sattath}
        \author[1,2]{Shai Wyborski}
        \affil[1]{Computer Science Department, Ben-Gurion University of the Negev}
        \affil[2]{School of Computer Science and Engineering, The Hebrew University of Jerusalem, Israel}
    \fi
    
\fi

\ifnum\acmtops=1

\begin{abstract}
\cof{
    Current solutions to quantum vulnerabilities of widely used cryptographic schemes involve migrating users to post-quantum schemes \emph{before} quantum attacks become feasible. This work deals with protecting \emph{quantum procrastinators}: users that failed to migrate to post-quantum cryptography in time.

    To address this problem in the context of digital signatures, we introduce a technique called \emph{signature lifting}, that allows us to lift a deployed pre-quantum signature scheme satisfying a certain property to a post-quantum signature scheme that uses the \emph{same} keys. Informally, the said property is that a post-quantum one-way function is used "somewhere along the way" to derive the public-key from the secret-key. Our constructions of signature lifting relies heavily on the post-quantum digital signature scheme Picnic (Chase et al., CCS'17).

    Our main case-study is cryptocurrencies, where this property holds in two scenarios: when the public-key is generated via a key-derivation function or when the public-key hash is posted instead of the public-key itself. We propose a modification, based on signature lifting, that can be applied in many cryptocurrencies for securely spending pre-quantum coins in presence of quantum adversaries. Our construction improves upon existing constructions in two major ways: it is not limited to pre-quantum coins whose \ecdsa public-key has been kept secret (and in particular, it handles all coins that are stored in addresses generated by HD wallets), and it does not require access to post-quantum coins or using side payments to pay for posting the transaction.}
{
    Most cryptocurrencies rely on signature schemes that are susceptible to quantum attacks. The straightforward way to resolve this risk is for users to migrate to post-quantum signature schemes \emph{before} quantum attacks become feasible.
    In this work, we show how to protect \emph{quantum procrastinators}: users that failed to migrate in time\footnote{In other words, everyday users (authors included).}. 

    To address this problem we introduce \emph{signature lifting}, a technique for lifting a deployed pre-quantum signature scheme to a post-quantum scheme that \emph{uses the same keys}, given that a post-quantum one-way function is used "somewhere along the way" to derive the public-key from the secret-key. This property is common in cryptocurrencies, mostly when the public-key is generated via a key-derivation function or the public-key's hash is posted instead of the public-key itself. Our constructions rely heavily on the post-quantum digital signature scheme Picnic (Chase et al., CCS'17).

    We provide a modification, based on signature lifting, that is applicable to Bitcoin and many other cryptocurrencies. This modification furnishes the first protocol not limited to pre-quantum coins whose \ecdsa public-key has been kept secret (in particular, it handles all coins that are stored in addresses generated by HD wallets), and not requiring access to post-quantum coins (or using side payments) to pay transaction fees. The importance of the first improvement is augmented by the Taproot update---now used by more than 30\% of Bitcoin transactions---where the wallet address is an \ecdsa public-key given in the clear.

    We also discuss \emph{quantum canaries}---a mechanism designed to detect and determine \emph{in consensus} the emergence of capable quantum entities---and provide a rudimentary game-theoretic analysis thereof.
}
\end{abstract}
    \maketitle
\else
    {
    \maketitle}
    
\fi

\newpage
\ifnum\shortver=0
\fi
\ifnum\shortver=0\ifnum\lipics=0\ifnum\acmtops=0
    \setcounter{tocdepth}{2}
    \tableofcontents
\fi\fi\fi

\newpage

\snote{
Nextvers:
\begin{itemize}
     \item Next version:
    \begin{itemize}
        \item Beautify protocols 2 - use some designated environment to preset the protocols, maybe defer all explicated protocols to a designated appendix.
        \item Discuss the fact that an adversary with > 50\% certainty that a utxo is lost has positive expected profit from trying to pilfer it.
        \item Discuss how a public-key could be chosen uniformly (or close enough to uniformly) without leaking information about the public-key (e.g. sample a random $x$ coordinate, calculate $y$ coordinates and flip a coin to choose one of them. Another idea, modify the generator some way and then take a random power of the modified generator. I think there is a way to do so such that DLOG is reducible to figuring out the order w.r.t. the original generator even with knowledge of the order w.r.t. the new generator).
        \item Using FC to post fraud-proofs for LFC
    \end{itemize}
    \item More research required before I decide if to include in next version or not at all:
    \begin{itemize}
        \item Discussion about how we expect people to use our methods. Do we expect most of them to spend to post-quantum addresses?
        \item Value of deposit: discuss the possibility that the deposit could depend on the year the \utxo was created
        \item The discussion about pure vs. mixed Nash equilibria
    \end{itemize}
\end{itemize}
}


\section{Introduction} 

\cof{
The canonical solution for quantum attacks on widely used cryptographic schemes is to migrate users to post-quantum schemes before quantum attacks become feasible. The focus of this work is \emph{quantum procrastinators}, namely, users who remain dependent on the security of pre-quantum cryptography in the era of scalable quantum computation. 

There are many possible reasons for procrastinators to exist:
\begin{itemize}
    \item Most notably, the tendency of individuals and organizations\footnote{and cryptographers} to procrastinate: some users would not act in a timely manner. Even after the vulnerability is identified and a solution is available, it might take a long time for organizations to adapt to changes due to security risks. For example, recall the WEP standard for securing local wireless networks. Even though the vulnerabilities of WEP to were identified and fixed by the succeeding WPA standard within months, it took about a decade for the majority of systems their systems due to critical vulnerabilities. Criminals were able to exploit these delays to steal millions of credit cards \cite[Section~6.2.1]{Bar21}.
    \item An unexpected technological breakthrough, perhaps done in secret, could make quantum attacks faster than expected.
    \item Post-quantum technologies have not been battle-tested and scrutinized to a fraction of the extent widely used cryptographic schemes have, and are still rapidly evolving and improving in terms of security and performance, which incentivizes delaying adoption.
\end{itemize}

In this work, we focus on users who rely on pre-quantum signature schemes whose public-key is known to the adversary. To help such procrastinators, we introduce post-quantum \emph{signature lifting}: a technique for lifting an already deployed pre-quantum signature scheme to a post-quantum signature scheme \emph{with the same keys}, given that a post-quantum one-way function is applied "somewhere along the way" when computing the public-key from the secret-key. This property does not hold for commonly used signature schemes such as \ecdsa or Schnorr signatures. However, there are several applications of digital signature that could be recast such that it does hold. The two scenarios we consider are a) when a hash of the public-key is provided rather than the public-key itself, and b) when the secret-key is generated using a key-derivation function.


\subsection{Case Study: Cryptocurrency Procrastinators}
}{}

Bitcoin, Ethereum, and most other cryptocurrencies rely on the security of signature schemes that are susceptible to quantum attacks, such as the \ecdsa and Schnorr signature schemes. Even an unspent transaction (\utxo) whose public-key is kept secret, while only its hash is available on the blockchain, is susceptible to attacks while the transaction spending it is in the mempool \cite{But13}. 

The canonical solution for quantum attacks on widely used cryptographic schemes is to migrate users to post-quantum schemes before quantum attacks become feasible. The focus of this work is \emph{quantum procrastinators}, namely, users who remain dependent on the security of pre-quantum cryptography in the era of scalable quantum computation. 

\cof{}{
There are many possible reasons for procrastinators to exist:
\begin{itemize}
    \item Most notably, the tendency of individuals and organizations\footnote{and cryptographers} to procrastinate:



    some users would not act in a timely manner. Even after the vulnerability is identified and a solution is available, it might take a long time for organizations to adapt to changes due to security risks. For example, recall the WEP standard for securing local wireless networks. Even though the vulnerabilities of WEP to were identified and fixed by the succeeding WPA standard within months, it took about a decade for the majority of systems to upgrade due to critical vulnerabilities. Criminals were able to exploit these delays to steal millions of credit cards \cite[Section~6.2.1]{Bar21}.
    \item An unexpected technological breakthrough, perhaps done in secret, could make quantum attacks faster than expected.
    \item Post-quantum technologies have not been battle-tested and scrutinized to a fraction of the extent widely used cryptographic schemes have, and are still rapidly evolving and improving in terms of security and performance, which incentivizes delaying adoption.
\end{itemize}
}
We consider the following question:
\begin{tcolorbox}
    \textbf{Question 1:} how can cryptocurrencies maximize the number of coins that procrastinators could securely spend?
\end{tcolorbox}

We say that a method to spend pre-quantum \utxos is \emph{quantum-cautious} if it is secure even in the presence of quantum adversaries. 
Current quantum-cautious spending methods \cite{BM14,SIZ+18,IKS19} allow cautiously spending pre-quantum \utxos that are \emph{hashed}, that is, whose public-key has not leaked (e.g., leakage can occur by using the same address for several \utxos and only spending some of them). However, these methods have a considerable drawback: to pay a fee to the miner for including the transaction, the user must either use a side payment or have access to post-quantum \utxos.

Our cautious spending methods provide two major improvements. First, they allow cautiously spending hashed pre-quantum \utxos without requiring a side-payment or a post-quantum \utxo. Second, they extend the set of cautiously spendable \utxos to include:
\begin{itemize}
        \item \utxos whose address was generated by an HD wallet \myhl{(see }\cref{ssec:hdwallet}\myhl{)}, \emph{even if} the public-key has leaked,
        \item \utxos whose public-key has leaked \emph{even if} it was not generated by a key-derivation function, as long as the adversary is unaware\snote{footnote with reference} that the public-key was not generated by a key-derivation function, and
        \item \utxos whose secret-key was \emph{lost}, but the fact that it was lost was kept secret by the user.
\end{itemize}

The protocol is designed such that attempting to recover a  \utxo of the second or third type requires leaving a deposit that can be claimed by the true owner of the \utxo if it happens to be of the first type, making it costly for an attacker to attempt stealing arbitrary transactions, hoping to land on a stealable \utxo.

To make the latter two types spendable, all procrastinators who own pre-quantum \utxos whose public-keys have been leaked must be \emph{online} periodically (e.g., once a year) to notice attempts at stealing their coins. Allowing only spending the first type avoids this drawback while still considerably increasing the set of cautiously-spendable \utxos.

We stress that a large majority of bitcoin arguably resides in \utxos whose address is a derived public-key (or a hash thereof). In particular, all \utxos generated by a Hierarchical Deterministic (HD) wallet \cite{Wui13} fall under this category. As of 2022, all \href{https://bitcoin.org/en/choose-your-wallet?step=5}{bitcoin.org recommended wallets} are HD wallets, and have been HD wallets since as early as 2015\footnote{As could be verified by tediously looking them up one by one.}. According to an analysis by CryptoQuant \cite{Cry22}, as of September 2022, \utxos generated during and after 2015 contain more than 16.5 million bitcoins.

Furthermore, in the newly deployed Bitcoin update Taproot \cite{WNT20}, the public-key of the owner of a \utxo is posted to the blockchain in the clear. Hence, current methods are unable to quantum-cautiously spend \utxos generated by following the Taproot specification, whereas our methods can. Taproot was deployed in November 2021, and its adoption has been steadily increasing since. 
As of May 2023, about 30\% of Bitcoin's transaction throughput is spent to Taproot addresses \cite{Tra23}.

Our methods can be deployed in either \emph{restrictive} mode, in which some funds of honest procrastinators are effectively burned, or \emph{permissive} mode, in which those funds are spendable but require a large \emph{deposit} as well as a very long \emph{challenge period} (during which anyone can \emph{challenge} the transaction by posting a \emph{fraud proof}). Hence, it is desirable to defer deploying these methods as much as reasonable. Furthermore, since the current state of affairs is that quantum computers are not an immediate concern, users will tend to procrastinate as they will only realize that quantum threats are relevant \emph{after} their coins are stolen. Both considerations motivate the next question:

\begin{tcolorbox}
    \textbf{Question 2:} how can we warn users \emph{in advance} that quantum adversaries will emerge in the near future? Can we make this warning \emph{in consensus}?
\end{tcolorbox}

Towards this goal, we present \emph{quantum canaries}, a mechanism for signaling the emergence of quantum adversaries in a way that could be read off the blockchain. Quantum canaries are puzzles designed to be infeasible for classical computers but solvable for quantum computers whose scale is significantly smaller than required to attack \ecdsa signatures. We propose to use quantum canaries to determine the block from which restrictions on spending pre-quantum \utxos will take effect.

Even though we try to be as general as possible, some of our results may be hard to adjust, for example, when considering cryptocurrencies with advanced smart-contract capabilities, such as Ethereum, which may have more complicated logic; or privacy-oriented cryptocurrencies, such as Zerocash \cite{BCG+14}, where special attention is given to privacy and anonymity---aspects which we completely ignore in this work.

\subsection{Types of \textsf{UTXOs}}

\ifnum\masterthesis=0{In this paper, we discuss several methods for quantum cautious spending of \utxos. Each method is only applicable to \utxos satisfying some conditions, }\else{The different spending methods we propose apply to \utxos satisfying different conditions, }\fi so we first identify several useful subsets of the \utxo set, defined by the information available to the owner of the \utxo and to an adversary trying to steal the coins in the \utxo.
When considering addresses that were derived by an HD wallet (see \ifnum\masterthesis=0{\cref{ssec:hdwallet}}\else{\cref{ssec:prelim_other_hd}}\fi), we always assume that the adversary has no access to the seed
used to generate the address of the \utxo (that is, we consider anyone holding the derivation seed an "owner" of the \utxo). We summarize the properties of these sets in \cref{table:utxos}.\ifnum\masterthesis=0{ \footnote{\ifnum\acmtops=0 For an overview of the \utxo model and blockchains in general, we refer the reader to \cite{NBF+16}\else For an overview of the \utxo model and blockchains in general, we refer the reader to \cite{NBF+16}. For an in-depth introduction and analysis of key-derivation functions and HD wallets providing all details required for the current work, we refer the reader to the full version \cite{SW23}.\fi}}\fi

\begin{itemize}
    \item \textbf{Hashed}\label{col:tx_hashed}: a \utxo whose secret-key is known to the owner and whose public-key is not known to the adversary.
    \item \textbf{Derived}\label{col:tx_derived}: a \utxo whose owner knows a derivation seed from which the signature keys were derived (see \ifnum\masterthesis=0\cref{ssec:hdwallet}\else\cref{ssec:prelim_other_hd}\fi).
    \item \textbf{Naked}\label{col:tx_naked}: \myhl{a }\utxo\myhl{ whose public-key is known to the adversary, and whose owner knows the secret-key but \emph{not} the derivation seed (either because it was lost or because the public-key is non-derived), yet the adversary does not know that the owner does not have access to the derivation seed.}
    \item \textbf{Lost}\label{col:tx_lost}: a \utxo whose secret-key is not known to anyone, but whose public-key is known to the adversary, yet the adversary does not know the user does not know the secret-key. Some of our methods allow owners of lost \utxos to recover their funds (see \cref{ssec:pfc}).
    \item \textbf{Stealable}\label{col:tx_steal}: a \utxo whose derivation seed is not known to the owner (either because they lost it or because the signature keys were not derived from a key-derivation function), and whose public-key is known to the adversary as well as the fact that the owner does not have the derivation key. Such \utxos are called stealable as they would be stolen by a quantum adversary if unrestrictive FawkesCoin is implemented (see \cref{ssec:nfc}). If permissive FawkesCoin is implemented (see \cref{ssec:pfc}), these funds would also be stolen by \emph{classical} adversaries. 
    \item \textbf{Doomed}\label{col:tx_doom}: a \utxo whose secret-key and public-key are unknown to anyone. We call such \utxo doomed as no one (including a quantum adversary) can spend these with any of our methods.
\end{itemize}

\begin{remark}
    Note that most pairs of the two sets above are disjoint, with \hashed and \derived \utxos being the only exception. If a \utxo is both \hashed and \derived, the user could treat it as any of the two when spending it.
\end{remark}

\setlength{\tabcolsep}{6pt}
\begin{table}[!htb]
\begin{adjustbox}{max width=1.1\textwidth,center}
\begin{tabular}{c@{\hskip 0.2cm}|@{\hskip 0.1cm}llcccccc}
\multicolumn{1}{l}{} &  &  & \multicolumn{1}{c}{\rotatebox[origin=r]{270}{\Hashed}} & \multicolumn{1}{c}{\rotatebox[origin=r]{270}{\Derived}} & \multicolumn{1}{c}{\rotatebox[origin=r]{270}{\Naked}} & \multicolumn{1}{c}{\rotatebox[origin=r]{270}{\Lost}} & \multicolumn{1}{c}{\rotatebox[origin=r]{270}{\Stealable}} & \multicolumn{1}{c}{\rotatebox[origin=r]{270}{\Doomed}} \\ \toprule 
\multirow{3}{*}{\rotatebox[origin=r]{270}{Owner}} &  & Derivation Seed & \hcmark & \cmark & \xmark & (\xmark) & \xmark & (\xmark) \\
 &  & Secret-Key & \cmark & (\cmark) & \cmark & \xmark & \hcmark & (\xmark) \\
 &  & Public-Key & (\cmark) & (\cmark) & (\cmark) & \cmark & \cmark & \xmark \\ \midrule 
\multirow{2}{*}{\rotatebox[origin=r]{270}{Adv.}} &  & Public-Key & \xmark & \hcmark & \cmark & \cmark & \cmark & (\xmark) \\
 &  & Non-derived/lost & \hcmark & (\xmark) & \xmark & \xmark & \cmark & \hcmark \\ \bottomrule
\end{tabular}
\end{adjustbox}

\caption{
    The different types of \utxos are defined by the information available to the owner and an adversary. Each column represents a subset of the \utxo set, and each row represents a particular datum regarding this \utxo and whether it is known to the owner/adversary. The non-derived/lost row, in particular, represents a scenario where the adversary knows that one of the following holds: either the \utxo address was not generated by an HD wallet, or the owner lost access to the corresponding secret-key (note these scenarios are not mutually exclusive).
    \hcmark indicates that the property described in the corresponding row is irrelevant to the type defined by the column (e.g., a \utxo whose public-key is known to the user but not known to the adversary is \hashed regardless of whether it was derived or not), a value is parenthesized to indicate that it could be inferred from other rows in the same column (e.g., in a \derived \utxo the owner knows the derivation seed, so they can derive the secret- and public-keys). 
}
\label{table:utxos}
\end{table}


\subsection{Signature lifting}\label{intro:lifted}

A key ingredient in our proposed solutions is a novel technique we call \emph{signature lifting}. While we apply this technique to cryptocurrencies, it is useful in various scenarios and therefore is of independent interest. We overview lifted signatures in the current section and defer a more formal treatment to \cref{ssec:zk}.

Chase et al. \cite{CDG+17} present \emph{\picnic}: a construction that transforms any post-quantum one-way function $f$ into a post-quantum signature scheme $\picnic(f)$. In the scheme $\picnic(f)$, the secret-key is generated by uniformly sampling a point $x$ in the domain of $f$, and the corresponding public-key is $f(x)$.

Our contribution is the observation that in some cases, by instantiating the \picnic scheme with an appropriate function, we can \emph{lift} an existing \emph{pre}-quantum signature scheme to a \emph{post}-quantum signature scheme \emph{with the same keys}. Loosely speaking, the required property is that a post-quantum one-way function is applied "somewhere along the way" when generating the public-key from the secret key.

One setting where this property hold is \hashed \utxos, which (as we explain in \cref{ssec:keylift}) can be interpreted as a slight modification of \ecdsa: instead of using the \ecdsa public-key $\pk$ as a public-key, the hash thereof $\HH(\pk)$ is used; when signing a message, $\pk$ is attached to the \ecdsa signature $\sigma$. Since we assume the hash function is post-quantum one-way, we can lift the resulting scheme to obtain a quantum-cautious method for \hashed \utxos. We call this method \emph{key lifting}.

In \cref{ssec:seedlift} we introduce \emph{seed lifting} -- an application of signature lifting towards cautiously spending \derived \utxos. The idea is to instantiate \picnic with the \emph{key-derivation function} used to generate the master secret-key (see \cref{ssec:kdf}). We call this method \emph{seed lifting},

\subsection{Lifted FawkesCoin}

As we explain in \cref{ssec:liftedsize}, one problem with lifted signatures is that the size of such a signature is at least a few dozen kilobytes. Consequently, using such signatures directly will be expensive and will have a strong detrimental effect on the throughput of the network.

We overcome this by combining signature lifting with the \emph{FawkesCoin} protocol of \cite{BM14} to obtain a method we call \emph{lifted FawkesCoin}. 

The core idea of the FawkesCoin protocol is that the payer first posts a \emph{post-quantum commitment} to their transaction, then she \emph{waits} a predetermined number of blocks, and only then does she \emph{reveal} the transaction. Since the public-key is only leaked during the reveal phase, the payer obtains security against a front-running adversary, as the adversary would have to complete an entire commit-wait-reveal phase before the revealed transaction is posted to the blockchain.

The FawkesCoin protocol (along with follow-up work leveraging it towards quantum cautious spending \cite{SIZ+18,IKS19}) has a prohibitive limitation: it does not provide a spam-secure way to pay the miner for including the commitment (see \cref{sec:lfc} for an in-depth discussion of FawkesCoin and the limitations thereof).

Lifted FawkesCoin solves this problem: instead of using a standard signature, the payer uses a lifted signature. The signature can be verified by the miner, but is only posted on the blockchain in case the user fails to reveal the transaction in time. Thus, as long as all parties are honest, lifted FawkesCoin requires as much storage as vanilla FawkesCoin, while protecting miners from spammers.

We fully review the Lifted FawkesCoin protocol for \hashed \utxos in \cref{chap:lfc}. Combined with seed lifting, the Lifted FawkesCoin could be similarly used for spending \derived \utxos (see \cref{cell:lfc_derived}).

\subsection{Restrictive and Permissive FawkesCoin}

\myhl{Restrictive FawkesCoin is an extension of Bonneau's and Miller original algorithm (or rather, their proposed application of this algorithm to Bitcoin) that additionally allows spending }\derived \utxos,\myhl{ using the derivation seed rather than the }\ecdsa\myhl{ public-key as the secret hashed data.}

A pleasant consequence of signature lifting is that it allows us to extend the applicability of the original (non-lifted) FawkesCoin protocol to spend \derived, \naked, and even \lost \utxos.

The idea is simple: allow anyone to spend any \utxo without providing any proof of ownership. However, transactions spending a \utxo this way must provide a \emph{deposit} and wait a long \emph{challenge period} during which the owner of the \utxo is allowed to post a lifted signature as a \emph{proof of ownership}. If they do so, the transaction is invalidated, and the deposit is paid to the owner of the \utxo.

The advantage of the honest user over the adversary is that the adversary cannot know whether a \utxo is \derived or not. An adversary attempting to steal a \utxo will lose the deposit unless the \utxo is non-\derived or \lost.

We flesh out the permissive FawkesCoin protocol in full detail in \cref{sec:fawkes}.

\myhl{Unrestrictive FawkesCoin is a compromise between Permissive and restrictive Fawkescoin, allowing to spend} \naked \utxos\myhl{ but not }\lost \utxos\myhl{. Its advantage over permissive FawkesCoin is that it can be implemented in a soft-fork (see }\cref{table:intro}\myhl{).

We note that these solutions do not allow using the spent }\utxo\myhl{ to pay the transaction fees (we discuss this further in }\cref{sec:lfc}\myhl{). This motivates us to propose a combined solution that allows FawkesCoin and Lifted FawkesCoin to work together (see }\cref{ssec:putting}\myhl{).}

\begin{remark}
    \myhl{The proof of ownership in both permissive and lifted FawekesCoin is an example of a \emph{fraud-proof}. That is, instead of prohibiting foul play we design the protocol such that a fraud is detectable and provable. This allows us to penalize foul play. Fraud-proofs have been used before in cryptocurrencies, so for example Al-Bassam et al. }\cite{AS+21}.
\end{remark}

\subsection{Quantum Canaries}\label{intro:canaries}

A necessary precaution against quantum looters is to decide on a point in time after which pre-quantum spending is no longer allowed. The simplest way to do so is for the community to agree on an arbitrary point. However, since disabling pre-quantum spending effectively implies burning many coins, isolating such a point could be a source of great contention. Moreover, since the progress of quantum technologies is very hard to predict, the chosen time might be too early (leading to unnecessary loss of coins) or too late (allowing quantum adversaries to loot the chain). We propose \emph{quantum canaries}, a mechanism designed to detect, and respond to, increasing quantum capabilities \emph{in consensus}. We stress that our spending methods are agnostic to how the time to initiate them was determined.

In the early 19th century, miners used caged canaries to notice leakages of deadly gasses that were otherwise hard to detect. The miners would go into the mine carrying a cage with a canary, and as long as the canary lived, they would assume that the air was safe to breathe. Using canaries over human beings had two advantages: loss of canary life was considered much preferable to loss of human life, and canaries were more sensitive to lethal gasses such as carbon monoxide, so the death of a canary signified the presence of hazardous gasses well before the human holding the cage would be exposed to a lethal dosage.

Inspired by this solution, \ifnum\acmtops=0\los{in \cref{sec:quantum_canaries} }{}\fi we propose to set up \emph{quantum canaries} on the blockchain. Quantum canaries are bounties locked away behind classically infeasible challenges devised to require quantum computers considerably smaller than those required to steal unsafe coins. 
The cost of paying the bounty (be it by raising funds from the community or by a hard-fork to mint bounty funds) is far smaller than the consequences of a full-blown quantum thief capable of stealing unsafe coins. Killing the canary (that is, claiming the bounty) requires a much smaller "dosage" of quantum capability. Most importantly, the canary would act as a global beacon providing a warning that quantum adversaries are about to emerge. 

Entities capable of killing the canary and claiming the bounty might rather wait for their quantum technology to ripen sufficiently for stealing coins. However, while they wait, another entity could emerge that would kill the canary, depriving the original entity of \emph{both} the bounty and the loot. This dynamic incentivizes quantum entities to claim the bounty. \ifnum\acmtops=0 We provide an analysis of this dynamic in \cref{ssec:canarygame}\fi.

Since this act of ornithicide is recorded on the blockchain, policies could be implemented with respect to the well-being of the canary, such as "do not accept transactions spending \naked \utxos if the canary has been dead for at least ten thousand blocks."

\ifnum\acmtops=1
In the extended version of this paper \cite{SW23} we provide a thorough discussion of quantum canaries and the cryptographic subtleties one has to be conscious of when designing them, and provide a rudimentary game theoretic analysis to provide evidence that an intermediate-scale quantum adversary would prefer to kill the canary rather than wait for their quantum technology to sufficiently scale to loot the network.
\fi

\begin{remark}
    While completing this work, we became aware that quantum canaries were already proposed by Justin Drake in the Ethereum research forum \cite{Dra18} as a general method to protect the network from exploits discovered in widely used cryptographic primitives. \ifnum\acmtops=0 We retain the discussion about canaries as it is useful for our overall solution, and since we expand upon the discussion in \cite{Dra18} in two regards: in \cref{ssec:canarypuzzle} we argue that canaries need to be as similar as possible to the scheme they are protecting, pointing out existing research particularly relevant to the special case where the canary is used to protect \ecdsa signatures from quantum adversaries; in \cref{ssec:canarygame} we provide a game theoretic analysis of entities competing for the bounty furnished by the quantum canaries. We also borrowed the name "quantum canaries" from \cite{Dra18} and adapted the text accordingly.\else In the full version \cite{SW23} we discuss aspects of this idea that are particular to our application.\fi \ifnum\anonymous=0{ We thank Andrew Miller for bringing this to our attention.}\fi

    The idea of posting \utxos that could be spent by solving a cryptographic challenge was already suggested and implemented by Peter Todd in \cite{Tod13}. He created several \utxos, each only spendable by providing a collision in a particular hash function, including \textsf{SHA1}, $\mathsf{RIPEMD\mhyphen 160}$, \sha and several compositions thereof. Of these \utxos, only the first has been claimed so far.
\end{remark}

\secorssec{Properties and Comparison of Spending Methods}


{
\newcommand{\advtg}[1]{
\begin{tikzpicture}[baseline=(char.base)]
\node(char)[draw, color=black!30!green,fill=white, line width=0.4mm,
  shape=rounded rectangle, text=black]
  {#1};
\end{tikzpicture}
}
\newcommand{\dadvtg}[1]{
\begin{tikzpicture}[baseline=(char.base)]
\node(char)[draw, color=black!30!red,fill=white, line width=0.4mm,
  shape= rectangle, text=black]
  {#1};
\end{tikzpicture}
}

\newcommand\link[2]{%
\begin{tikzpicture}[remember picture, overlay, >=stealth, shift={(0,0)}, bend left]
  \draw[->] (#1) to (#2);
\end{tikzpicture}%
}
\setlength{\tabcolsep}{6pt}
\renewcommand{\arraystretch}{1.5} 
\begin{table}[!htb]
\begin{adjustbox}{max width=\los{1.1}{1.3}\textwidth,center}
\begin{tabular}{lccccc@{\hskip 3pt}cllccccccl}
\textbf{} & \multicolumn{4}{c}{\textbf{\begin{tabular}[c]{@{}c@{}}\hyperref[col:cautious]{Cautiously}\\ \hyperref[col:cautious]{Spendable}\end{tabular}}} & \textbf{} & \multicolumn{3}{c}{\textbf{\begin{tabular}[c]{@{}c@{}}\hyperref[col:conf_blocks]{Confirmation}\\ \hyperref[col:conf_blocks]{Times}\end{tabular}}} & \multicolumn{1}{l}{} & \multicolumn{1}{l}{} & \multicolumn{1}{l}{} & \multicolumn{1}{l}{} & \multicolumn{1}{l}{} & \multicolumn{1}{l}{} &  \\ \cline{2-5} \cline{7-9}
 & \rotatebox[origin=r]{270}{\hyperref[col:tx_hashed]{Hashed}} & \rotatebox[origin=r]{270}{\hyperref[col:tx_derived]{Derived}} & \rotatebox[origin=r]{270}{\hyperref[col:tx_naked]{Naked}} & \rotatebox[origin=r]{270}{\hyperref[col:tx_lost]{Lost}} &  & \rotatebox[origin=r]{270}{\hyperref[col:tx_hashed]{Hashed}/\hyperref[col:tx_derived]{Derived}} & \multicolumn{1}{c}{\rotatebox[origin=r]{270}{\hyperref[col:tx_naked]{Naked}}} & \multicolumn{1}{c}{\rotatebox[origin=r]{270}{\hyperref[col:tx_lost]{Lost}}} & \rotatebox[origin=r]{270}{\hyperref[col:tx_size_inc]{Transaction Size Increase}} & \rotatebox[origin=r]{270}{\hyperref[col:spend_thres]{Spendability Threshold}} & \rotatebox[origin=r]{270}{\hyperref[col:pq_utxo]{Works Without PQ \utxos}} & \rotatebox[origin=r]{270}{\hyperref[col:delay]{No Delay Attacks}} & \rotatebox[origin=r]{270}{\hyperref[col:soft-fork]{Soft-Fork}} & \rotatebox[origin=r]{270}{\hyperref[col:online]{Offline Users Not Risked}} &  \\ \cline{1-15}
\tikzmark{cur}Current~State & \xmark & \xmark & \xmark &  &  & $\times 1$ & \multicolumn{1}{c}{} & \multicolumn{1}{c}{} & $\times 1$ & $\times 1$ & \cmark & \cmark &  & \cmark &  \\
\tikzmark{FC}\hyperref[row:fc]{FawkesCoin} &  &  &  &  &  &  &  &  &  &  &  &  &  & \multicolumn{1}{l}{} &  \\
\quad\tikzmark{CFC} \hyperref[row:cfc]{Restrictive} & \advtg{\chr{cell:cfc_hashed}{\cmark}} & \advtg{\chr{cell:cfc_derived}{\cmark}} &  &  &  & \chr{cell:cfc_hdconf}{$\sim\times 10$} &  &  & \chr{cell:cfc_tx}{$\sim + 10$} & \chr{cell:cfc_spendthres}{$\sim\times 1$} & \dadvtg{\chr{cell:fc_pqutxo}{\xmark}} & \cmark & \chr{cell:cfc_softfork}{\cmark} & \cmark &  \\
\quad\tikzmark{NFC} \hyperref[row:nfc]{Unrestrictive} & \chr{cell:nfc_hashed}{\cmark} & \chr{cell:nfc_derived}{\cmark} & \advtg{\chr{cell:nfc_naked}{\hcmark}} &  &  & \chr{cell:nfc_hdconf}{$\sim\times 10$} & \multicolumn{1}{c}{\chr{cell:nfc_nconf}{$\sim\times 10^3$}} &  & \chr{cell:nfc_tx}{$\sim + 10$} & \chr{cell:nfc_spendthres}{$\sim\times 1$} & \chr{cell:fc_pqutxo}{\xmark} & \cmark & \chr{cell:nfc_softfork}{\cmark} & \dadvtg{\chr{cell:pfc_offline}{\xmark}} &  \\
\quad\tikzmark{PFC} \hyperref[row:pfc]{Permissive} & \chr{cell:pfc_hashed}{\cmark} & \chr{cell:pfc_derived}{\cmark} & \chr{cell:pfc_naked}{\hcmark} & \advtg{\chr{cell:pfc_lost}{\hcmark}} &  & \chr{cell:pfc_hdconf}{$\sim\times 10$} & \multicolumn{1}{c}{\chr{cell:pfc_nconf}{$\sim\times 10^3$}} & \multicolumn{1}{c}{\chr{cell:pfc_lconf}{$\sim\times 10^3$}} & \chr{cell:pfc_tx}{$\sim + 10$} & \chr{cell:pfc_spendthres}{$\sim\times 1$} & \chr{cell:fc_pqutxo}{\xmark} & \cmark & \dadvtg{\chr{cell:pfc_softfork}{\xmark}} & \chr{cell:nfc_offline}{\xmark} &  \\
\tikzmark{LS}\hyperref[row:ls]{Lifted~Spending} & \chr{cell:ls_hashed}{\cmark} & \chr{cell:ls_derived}{\cmark} &  &  &  & \chr{cell:ls_hdconf}{$\times 1$} &  &  & \dadvtg{\chr{cell:ls_tx}{$\sim\times 10^3$}} & \chr{cell:ls_spendthres}{$\sim\times 10^3$} & \advtg{\chr{cell:ls_pqutxo}{\cmark}} & \cmark & \chr{cell:ls_softfork}{\xmark} & \cmark &  \\
\tikzmark{LFC}\hyperref[row:lfs]{Lifted~FawkesCoin} & \chr{cell:lfc_hashed}{\cmark} & \chr{cell:lfc_derived}{\cmark} &  &  &  & \chr{cell:lfc_conf}{\dadvtg{$\sim\times 10$}} &  &  & \advtg{\chr{cell:lfc_tx}{$\sim\times 1$}} & \chr{cell:lfc_spendthres}{$\sim\times 10^3$} & \chr{cell:lfc_pqutxo}{\cmark} & \chr{cell:lfc_delay}{\hcmark} & \chr{cell:lfc_softfork}{\xmark} & \cmark & \link{LS}{CFC}
\end{tabular}
\end{adjustbox}
\caption{
    A comparison of the quantum-cautious spending methods with the current state as a baseline. Each row represents a method, and the columns represent different properties of the methods reviewed below the table. \myhl{Each method is compared to the preceding method, except the ``Lifted Spending'' method is compared with \emph{restrictive} FawkesCoin, as indicated by the arrow}. Green circles and red boxes designate what we consider the most significant advantages and disadvantages respectively of each method with respect to the method it is being compared to. Cells that are not applicable (e.g., using the current state to spend a \lost \utxo) are left blank. The meaning of each column is discussed below. The \hcmark symbol indicates that the property corresponding to the column is partially satisfied in a manner explicated in the description of the relevant column. The table headers and cells hyperlink to relevant sections in the text.
    \myhl{All methods below ``Current State'' are original to the current work. The FawkesCoin protocol was first introduced in }\cite{BM14}\myhl{, but even our most restrictive version expands on the original protocol by allowing spending }\derived \utxos\ifnum\masterthesis=1{.}\else\myhl{. See} \cref{ssec:int_rel}\myhl{ for details}.\fi
}

\label{table:intro}
\end{table}
}

\ifnum\masterthesis=0
In this work we provide several different quantum cautious spending methods, that can be compared to each other through several different properties. In this section we describe the properties we consider relevant, which are summarized in \cref{table:intro}. 
\fi

We use Bitcoin as the baseline for our comparison. In particular, we consider the number of confirmation blocks in the "current state" to be six blocks, which is the default number of blocks required by the standard Bitcoin wallet.

We stress that the properties stated in \cref{table:intro} are correct under the assumption that no two solutions are implemented together. In \cref{intro:compat} we discuss how combining different solutions might render them insecure, and propose ways to combine them securely.

\paragraph*{Cautiously Spendable} \label{col:cautious} A pre-quantum \utxo is considered \emph{cautiously-spendable} with respect to a given method if implementing this method allows spending it in a way that is not vulnerable to quantum attackers. 
We use \hcmark to denote the subtle scenario where a \utxo can be stolen, but an attempt to do so requires the adversary to take a considerable monetary risk. This happens in unrestrictive and permissive FawkesCoin for \naked and \lost \utxos. If an adversary claims such a \utxo, they will manage to steal the coins therein. However, if an adversary claims a non-\hashed \derived \utxo, the rightful owner of the \utxo would post a fraud-proof, and the adversary would lose her deposit. Hence, the owners of \naked and \lost \utxos are protected by the adversary's inability to distinguish them from \derived \utxos.

\paragraph*{Confirmation Times}
The number of blocks a user must wait after a transaction has been posted to the network before the transaction is considered confirmed. We write the multiplicative factor by which the number of blocks increases compared with the current state (where we consider the current confirmation times in Bitcoin to be 6 blocks). In methods where spending a \utxo requires posting two messages to the blockchain (a commit and a reveal), we count the number of blocks since the first message.\label{col:conf_blocks}

\paragraph*{Transaction Size Increase} The size of a transaction, compared with the size of a current transaction spending \ecdsa signed \utxos. Multiplicative notation $\times x$ means that the size of a transaction is $x$ times that of a current transaction. 
Additive notation $+ x$ means that spending several \utxos is possible by adding a fixed amount of data larger than a current transaction by a factor of $x$. In particular, the amount of added data does not depend on the number of \utxos spent.\label{col:tx_size_inc}

\paragraph*{Spendability Threshold}\label{col:spend_thres} The minimal value a \utxo should have so it could be spent using this method without the user or the miner risking losing money. The values in the table represent the factor of increase compared to the current spendability threshold. In most methods, this is the same as the cost of spending the transaction (i.e., the transaction fee). However, in Lifted FawkesCoin, the spendability threshold is actually much higher. This follows since the \utxo must be valuable enough to cover the costs of posting a proof of ownership in case the user fails to reveal the transaction. In methods where many \utxos may be spent at once (e.g., in non-Lifted FawkesCoin, one can post \emph{many} commitments in the payload of a \emph{single} post-quantum \utxo), we consider the threshold in the "many \textsf{UTXO}s" limit. We stress that spendability threshold is not the same as dust (see \ifnum\masterthesis=0{\cref{par:dust}}\else{\cref{ssec:perlim_bitcoin_dust}}\fi): dust refers to \utxos whose value is in the same order of magnitude as it would cost to spend them, \emph{regardless} of what method is used; on the other hand, a \utxo could be valuable enough to be above the spendability threshold of one method while being below the spendability threshold of another method.

\paragraph*{Works Without Post-Quantum \utxos}\label{col:pq_utxo} Some of our methods require the user to have a post-quantum \utxo (namely, a \utxo whose address is associated with a post-quantum signature public key), for paying fees or deposits. Such methods are marked with \xmark, while methods that work even without a post-quantum \utxo are marked with \cmark.

\paragraph*{No Delay Attacks}\label{col:delay} A \emph{delay attack} is a way to make an arbitrary \utxo unspendable for a fixed but meaningful period of time (say, a few hours) at a low cost (say, the cost of posting a transaction on the blockchain). We do not consider methods that allow cheap delay attacks suitable. Hence, no row in \cref{table:intro} has \xmark\xspace in this column. Lifted FawkesCoin is the only method marked with \hcmark\xspace to indicate that in this method, delay attacks are possible but are automatically detected, and the subject of the attack is compensated by the perpetrator.

\paragraph*{Can be implemented in a Soft-fork}\label{col:soft-fork} It is true if the method can be implemented as a soft-fork, that is, in a way that does not require the entire network to upgrade their nodes (see \ifnum\masterthesis=0{\cref{par:softfork}}\else{\cref{ssec:perlim_bitcoin_forks}}\fi).

\paragraph*{Offline Users not Risked} \label{col:online} Some methods require owners of \derived \utxos which are not \hashed to monitor the network for attempts to spend their coins. We use \cmark\xspace in methods for which this is \emph{not} the case.


\subsection{Putting it all Together}\label{ssec:putting}

It is evident from \cref{table:intro} that different protocols provide different trade-offs, which are both suitable for different users. However, combining solutions carelessly can render them insecure. For example, allowing spending pre-quantum \utxos as they are spent today renders FawkesCoin insecure, since once the transaction is revealed the adversary can use the public-key without having to wait. For identical reasons, implementing lifted spending also renders FawkesCoin insecure. We therefore recommend \emph{not} to implement lifted spending.

Security issues also arise when trying to combine FawkesCoin and lifted FawkesCoin together. In \cref{ssec:combining} we explain in detail how FawkesCoin and Lifted FawkesCoin can be securely combined by dividing the network into \emph{epochs}, during which only one protocol is allowed.

We propose to implement a quantum canary. The death of the canary would initiate a countdown, after which the consensus rules of the \emph{quantum era} are enforced:
\begin{itemize}
    \item FawkesCoin and Lifted FawkesCoin are activated and rotated in epochs.
    \item Directly spending pre-quantum \utxos becomes prohibited.
    \item Permissive FawkesCoin is allowed for \naked \utxos whose address was not posted to the blockchain prior to 2013.
\end{itemize}

In \cref{intro:proposal} we specify a fully fleshed protocol and provide and motivate concrete values for the canary bounty, countdown length, epoch lengths, etc.

\subsection{Paper organization}

The paper is organized to introduce the technique of signature lifting, followed by an application thereof to construct the lifted FawkesCoin protocol. In \cref{prelim:ds} and \cref{prelim:bitcoin} we provide a brief exposition of digital signatures and blockchains, respectively, sufficient to render the current version self-contained. In \cref{ssec:zk} we define the notion of a signature lifting, and apply it to two common types of signature schemes: schemes with a hashed public-key, and schemes whose secret-key was created by a key-derivation function. In \cref{sec:lfc} we introduce and thoroughly discuss the lifted FawkesCoin protocol. In \cref{sec:fawkes} we present permissive FawkesCoin, and discuss its advantages and disadvantages over lifted FawkesCoin. Finally, in \cref{intro:proposal}, we propose a holistic solution by carefully combining quantum canaries, permissive FawkesCoin, and lifted FawkesCoin into a concrete protocol.

This text is a summarized version of our work, where many details and discussions were omitted for the sake of brevity, with the full version available online \cite{SW23}. While the current version is mostly self contained, the full version provides a more in-depth discussion of most aspects of our work, including a more complete discussion of the quantum threats on blockchain and the solutions proposed so far, a fully detailed proof of the security of seed-lifting (provided with a formal introduction and treatment of key-derivation functions and HD wallets), and an extended discussion of quantum canaries and the game theory of the incentives thereof. Throughout the text, we provided references to the full version when applicable.

\secorssec{Related Works}\label{ssec:int_rel}

The loot currently waiting for a quantum adversary on the Bitcoin network was explored in \cite{IKK20}.

The observation that secure signature schemes could be constructed from a one-way function (not necessarily a cryptographic hash function) given reliable time-stamping (or append-only log) was first made by Anderson et al. \cite{ABC+98}, who used it to construct \emph{Fawkes signatures}. A crucial property of this construction is that the signature only requires a single preimage as a key and a single image as a signature, unlike Lamport signatures and similar hash-based one-time signatures whose signatures are typically much larger.

\myhl{FawkesCoin was first introduced by Bonneau and Miller }\cite{BM14}\myhl{, who realized that the blockchain itself could be used as a reliable time-stamping service (at least for blocks sufficiently temporally separated) and thus can be used to implement Anderson et al.'s Fakwes signatures. Bonneau and Miller concede that the security properties of Bitcoin are preferable to their protocol (especially in the way it responds to chain forks) but argue that their protocol could be combined with Bitcoin to mitigate ``a catastrophic algorithmic break of discrete log on the curve P-256 or rapid advances in quantum computing.'' The authors note the problem of paying fees for transfer messages and suggest a few possible solutions. However, these solutions do not prevent denial-of-service attacks, as we discuss in }\cref{sec:lfc}.

As was noted in \cite{But13}, spending \hashed \utxos directly is not quantum-cautious, as a quantum adversary listening to the mempool could recover the public-key from the transaction and use it to post a competing transaction spending the same \utxo. Applying the ideas of \cite{BM14} to solve the problem of safely migrating funds from a pre-quantum to a post-quantum \utxo was first discussed in \cite{SIZ+18,IKS19} (the authors thereof point out several threads from Twitter and the Bitcoin developer mailing list to which they attribute first suggesting this approach) by means of \emph{key surrogacy}. They use the techniques of \cite{BM14} to build a mechanism allowing a user to publicly attach to any existing pre-quantum public-key a post-quantum \emph{surrogate key}, so that the network would expect \utxos signed with the pre-quantum key to pass verification with respect to the post-quantum key. This approach achieves some minor improvements upon FawkesCoin but still only allows spending \hashed \utxos and fails to address the problem of paying the fees from the pre-quantum \utxo.

Coladangelo and Sattath \cite[Section~4.1]{CS20} originally proposed the idea of allowing users to claim lost coins by leaving a deposit, which could be claimed by the rightful owner in case of a theft attempt. We use the same approach in unrestrictive and permissive FawkesCoin (see \cref{sec:fawkes}).

The technique of using arbitrary post-quantum one-way functions to construct post-quantum digital signatures was presented in \cite{CDG+17},\los{ we review this work in more depth \cref{ssec:zk}.}{} A follow-up work \cite{KZ20} reduces the signature size considerably.

The canary introduced in \cref{intro:canaries} has some shared features with bug bounties. Breidenbach et al.~\cite{BDTJ18} introduces \emph{Hydra}, a systematic approach for bug bounties for smart-contracts. Their approach is tailored to finding implementation bugs in smart contracts and, therefore, cannot be used in our setting. Ref.~\cite{RMD+21} presents and analyzes a framework for prediction markets, which can be used to estimate the future risk of an attack on ECDSA, which would break a cryptocurrency. The motivation and goals for their prediction market are far removed from the quantum canary discussed in this work.

\ifnum\anonymous=0\ifnum\lipics=0
\secorssec{Acknowledgements}

We would like to thank Andrew Miller for valuable discussions and, in particular for pointing out the existence of quantum canaries, and Ori Newman and Elichai Turkel for their invaluable feedback and discussion.

This work was supported by the Israel Science Foundation (ISF) grant No. 682/18 and 2137/19, and by the Cyber Security Research Center at Ben-Gurion University.
\BeforeBeginEnvironment{wrapfigure}{\setlength{\intextsep}{0pt}}

\begin{wrapfigure}{r}{90px}
    \includegraphics[width=40px]{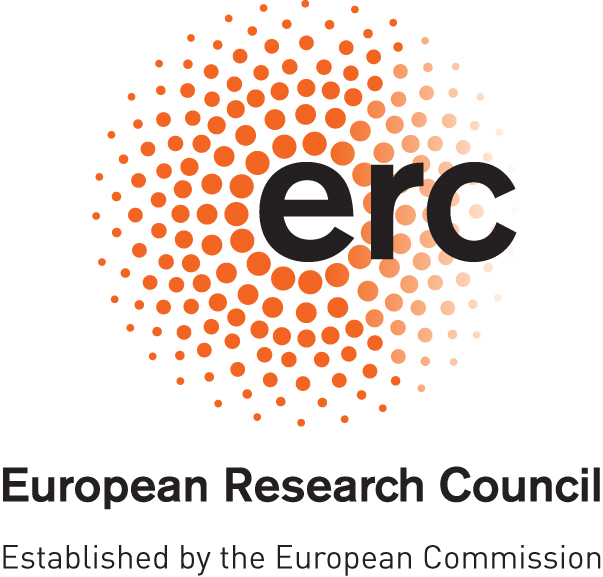}  \includegraphics[width=40px]{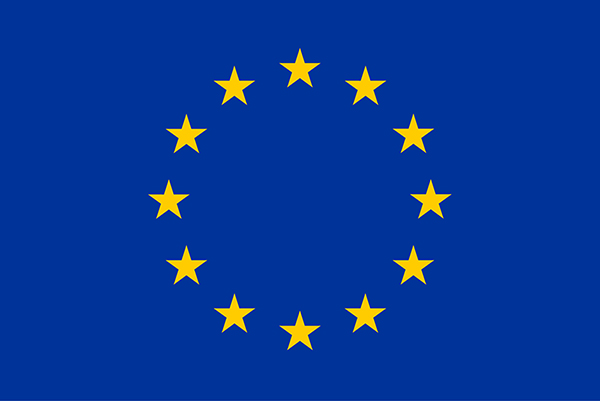}
\end{wrapfigure}This work was funded by the European Union (ERC-2022-COG, ACQUA, 101087742). Views and opinions expressed are however those of the author(s) only and do not necessarily reflect those of the European Union or the European Research Council Executive Agency. Neither the European Union nor the granting authority can be held responsible for them.
\fi\fi

\ifnum\acmtops=1
\section{Preliminaries: Digital Signatures}\label{prelim:ds}

A \emph{digital signature scheme} is a cryptographic primitive that allows users to sign messages such that only they can sign the message, but anyone can verify the authenticity of the message. In digital signature schemes, the signer creates a \emph{secret signature key} that can be used to sign messages and a \emph{public-key} that can be used to verify these messages.

\ifnum\masterthesis=0
\subsection{Security of Digital Signatures}\label{ssec:dssec}
\fi

The security notion we consider is \emph{existential unforgeability under chosen message attack} ($\mathsf{EUF\mhyphen CMA}$). In this notion, the adversary is given access to the public-key as well as access to a \emph{signing oracle} which she could use to sign any message she wants. Her task is to produce a signature for any message she did not use the oracle to sign. A scheme is \emph{\seufcma secure} if an efficient adversary cannot achieve this task with more than negligible probability (for a more formal treatment of the security of digital signatures we refer the reader to \cite{Gol04,KL14}). 


Extending the security notions of digital signatures to accommodate general quantum adversaries is not quite straightforward. The main difficulty is in the setting where the adversary has access to sign messages of their own, and they can sign a \emph{superposition} of messages. The established classical notions of security do not generalize directly to this setting, since the notion of "a message she did not use the oracle to sign" becomes ill-defined when discussing superimposed queries.

However, in the setting of cryptocurrencies, the signature oracle reflects the adversary's ability to read transactions signed by the same public-key off the blockchain. This ability is captured even when assuming that the adversary, albeit quantum, may only ask the oracle to sign classical messages. A scheme that remains secure against quantum adversaries with classical oracle access is called \emph{post-quantum} $\mathsf{EUF\mhyphen CMA}$ secure. For brevity, we use the term $\mathsf{EUF\mhyphen CMA}$ security to mean post-quantum $\mathsf{EUF\mhyphen CMA}$ security, unless stated otherwise.

\subsection{Elliptic Curve based Signatures}\label{sssec:ec}

The \ecdsa and Schnorr signature schemes are based on a particular mathematical object called an \emph{elliptic curve}. Bitcoin uses the \ecdsa signature scheme instantiated with the \secp curve, which is considered to admit 128-bit security against \emph{classical} attackers \cite{Bit22b}. The Taproot update \cite{WNT20} replaces \ecdsa with Schnorr signatures instantiated with the same \secp curve, which is also considered to admit 128-bit security \cite{WNR20}. The key-generation procedure is hence identical in the \ecdsa and Schnorr variants implemented in Bitcoin.

For the purpose of this work, it is not required to understand the details of how neither of the schemes work, it suffices to know that:
\begin{itemize}
    \item Given the secret-key, one can efficiently calculate the corresponding public-key. We use $\pkec$ to denote the function mapping secret-keys to public-keys.
    \item Under common hardness assumption, both schemes are secure. However, these assumptions do not hold when assuming a quantum adversary. In fact, an efficient quantum algorithm is known for inverting $\pkec$ \cite{BL95}, proving that an efficient adversary with access only to the public-key (and to no signatures whatsoever) can recover the secret-key, whereby she can sign arbitrary messages.
\end{itemize}

\subsection{\picnic Signatures}\label{ssec:picnicov}

The principal tool we use for signature lifting is the \emph{\picnic} signature scheme of Chase et al. \cite{CDG+17}. The \picnic scheme signature scheme can be instantiated using \emph{any} post-quantum one-way function $f$  to obtain a signature scheme which is post-quantum \seufcma secure in the \textsf{QROM}. In the obtained scheme, a secret-key is a random point $x$ in the domain of $f$, and the corresponding public key is $f(x)$.

The \picnic scheme, instantiated with a particular block cipher called \emph{LowMC}, was submitted to NIST for standardization. Their design prevailed the first two rounds of the competition. It was decided that \picnic will not proceed to the third round due to the novelty of techniques it applies compared with other candidates, however, it was decided to retain \picnic as an alternative candidate \cite{AASA+20}. This means that, while \picnic was not chosen to be NIST standardized, it successfully withstood heavy scrutiny.

\section{Preliminaries: Bitcoin and Blockchain}\label{prelim:bitcoin}

In this section, we briefly introduce some of the aspects of Bitcoin relevant to our discussion. We assume the reader is familiar with core concepts of Bitcoin such as transactions and \utxos, chain reorganizations, etc. For a review of these concepts, we refer the reader to \cite{NBF+16}. For a more complete review of the aspects of Bitcoin and blockchains relevant to the discussion, see the full version \cite{SW23}.

\subsection{Quantum Threats on Blockchains}\label{ssec:qthreats}

Roughly speaking, quantum computers affect Bitcoin on two different fronts: quantum mining and attacks on pre-quantum cryptography.

A common misconception is that quantum computers have no drastic effects on Bitcoin mining (beyond increased difficulty due to Grover's quadratic speedup). This was debunked in~\cite{Sat20,LRS19}. Our work is orthogonal to aspects related to quantum mining, especially since quantum attacks on \secp are projected to occur a few years before quantum mining starts~\cite{ABL+17}.  

The most immediate risk is in the form of \utxos with leaked public-keys. A quantum adversary could use the public-key to sign arbitrary messages and, in particular, spend any \utxo whose address is this public-key. The public-key can be exposed in a variety of ways, including (but not limited to) writing it in the transaction in-the-clear (as in \ptpk and Taproot \cite{WNT20} addresses), accepting coins to the same hashed address from several sources and then spending one of these \utxos (whereby revealing the non-hashed secret-key), spending coin on one fork of the chain (whereby revealing the key for the same \utxo on the other fork), and so on. A survey by Deloitte estimates \cite{Del22} that, as of 2022, the addresses of at least 2 million hashed \utxos are exposed this way. As of August 2023, about  25\% of newly created \utxos are spent to Taproot addresses \cite{Tra23}.

\subsection{Hierarchical Deterministic wallets}\label{ssec:hdwallet}\label{ssec:kdf}

\emph{Hierarchical Deterministic wallets}, commonly abbreviated as \emph{HD wallets}, were first introduced in BIP-32 \cite{Wui13} to simplify the task of storing secret-keys. The idea is that a single secret called the \emph{seed}, that is sampled from a distribution with sufficient entropy (commonly in the form of a \emph{mnemonic phrase}, as specified in BIP-39 \cite{PRV+13}), is used to derive \emph{many} secret-keys using a \emph{key-derivation function}. In the full version \cite{SW23} we provide a formal treatment of key-derivation functions and hardware wallets. For the sake of exposition, we introduce the function \der, mapping a seed $s$ and a derivation path $P$ to an \ecdsa secret-key, without specifying it explicitly.

\subsection{Software Forks}\label{par:softfork} A \emph{software fork} is a change to the code that nodes are expected to run that affects the conditions under which a block is considered valid. There are two types of software forks, a \emph{soft-fork} and a \emph{hard-fork}. The difference is that in a hard-fork, there exist blocks that the new version considers valid while the old version does not. In order to adopt a soft-fork, only the miners are required to update their nodes, and non-mining nodes will operate correctly even with the outdated version. A hard-fork, on the other hand, causes the chain to split into two separate chains that can not accept each other's blocks. Notable examples of hard-forks include Bitcoin Cash (forked from Bitcoin) and Ethereum Classic (forked from Ethereum).

One particularly relevant scenario where hard-forks are required, is when the protocol allows spending coins without access to the corresponding secret-key.

\subsection{Dust} \label{par:dust}  \utxos whose value is too small to cover the costs of spending them are called \emph{Dust}. The difference between dust and unspendability is that a \utxo can be spendable with respect to one spending method but not the other. For a \utxo to be considered \emph{dust}, it must not be spendable using \emph{any} method available. Since the transaction fees fluctuate, the set of dust \utxos changes in time.
\fi

\section{Signature Lifting and Lifted Spending}\label{ssec:zk}

Chase et al. \cite{CDG+17} introduce a signature scheme called \emph{\picnic}, that could be instantiated using \emph{any} post-quantum one-way function $f$ to obtain a signature scheme which is post-quantum \seufcma secure in the \textsf{QROM}\ifnum\acmtops=0 (see \cref{ssec:rom})\fi. In the obtained scheme, a secret-key is a random point $x$ in the domain of $f$, and the corresponding public key is $f(x)$.

In \cref{intro:lifted} we gave an overview of how this might be useful to protect procrastinators. The purpose of the current \los{section}{appendix} is to provide an explicit construction and formally treat its security properties.

We stress that while the resulting method is secure, it is highly inefficient in terms of block space (see \cref{ssec:liftedsize}). Therefore, we do not recommend using it directly, but we rather use it as a building block for the \emph{lifted FawkesCoin} protocol we present in \cref{chap:lfc}.

\subsection{Lifting Signature Schemes}
\ifnum\masterthesis=0{Recall that a \emph{correct signature scheme} is a tuple of three polynomial time procedures $\ds = (\keygen, \sign, \ver)$ such that if $(\pk,\sk)\gets \keygen(\secparam)$ then it holds for any $m$ that if $\sigma\gets \sign_\sk(m)$ then $\ver_\pk(m,\sigma)$ accepts (see \ifnum\masterthesis=0\cref{prelim:ds}\else\cref{sec:prelim_ds}\fi).}\fi

\begin{definition}
    Let $\ds = (\keygen, \sign, \ver)$ be a correct digital signature scheme\ifnum\masterthesis=1{ (recall \cref{sec:prelim_ds})}\fi. A \emph{lifting} of $\ds$ is two procedures $\widetilde{\sign},\widetilde{\ver}$ such that $(\keygen, \widetilde{\sign},\widetilde{\ver})$ is a correct digital signature scheme.
\end{definition}

In \ifnum\masterthesis=0{\cref{prelim:ds} }\else{\cref{ssec:prelim_ds_sec} }\fi we described the notion of \seufcma security, we extend this notion in a way appropriate for liftings:

\begin{definition}\label{defn:lifting}
    Let $\ds = (\keygen, \sign, \ver)$ and let $(\widetilde{\sign},\widetilde{\ver})$ be a lifting, we define the \sseuflcma security game as follows:
    \begin{itemize}
        \item The challenger $\sC$ samples $(\sk,\pk)\gets \keygen(\secparam)$ and gives $\pk$ and $\secparam$ as input to the adversary $\sA$
        \item $\sA$ is allowed to make \emph{classical} queries both the oracles $\sign_\sk$ and $\widetilde{\sign}_\sk$
        \item $\sA$ outputs a tuple $(m,\sigma)$
        \item $\sA$ wins the game if $\widetilde{\sign}_\sk$ was never queried on $m$, and $\widetilde{\ver}_\pk(m,\sigma)$ accepts.
    \end{itemize}
    Note that if we remove the access to the oracle $\sign_\sk$ we recover the post-quantum \seufcma game (see \ifnum\masterthesis=0\cref{ssec:dssec}\else\cref{def:euf-qcma}\fi) for the scheme $(\keygen, \widetilde{\sign},\widetilde{\ver})$.
    
    The lifting is a \emph{post-quantum (strong) lifting} if the winning probability of $\sA$ in the \seufcma game (\sseuflcma game) is $\negl$ for any \qpt $\sA$.
\end{definition}


\subsection{Key-lifted Signature Schemes}\label{ssec:keylift}


Let $(\keygen,\sign,\ver)$ be a correct signature scheme (e.g. the \ecdsa (or Schnorr) digital signature scheme over the curve \secp), let $\HH$ be a hash function modeled as a random oracle, and define the following modified scheme:
\begin{itemize}
    \item $\keygen'(\secparam)$: samples $(\pk,\sk)\gets \keygen(\secparam)$, and outputs $(\sk,\pk' = \HH(\pk))$. 
    \item $\sign'_{\sk}(m)$: outputs $\sigma' = (\sigma,\pk)$ where $\sigma \gets \sign_{\sk}(m)$ (recall that $\pk = \pkec(\sk)$, see \ifnum\masterthesis=0\cref{sssec:ec}\else\cref{ssec:prelim_ds_ec}\fi).
    \item $\ver'_{\pk'}(m,\sigma')$: interprets $\sigma'$ as $(\sigma,\pk)$, accepts iff $\HH(\pk) = \pk'$ and $\ver_{\pk}(m,\sigma)$ accepts.
\end{itemize}

This modification reflects how \ecdsa is used in Bitcoin, where the user typically posts a \sha hash of their public-key, and only posts the public-key when signing a message.

\ifnum\masterthesis=0{

We now define a lifting of $(\keygen',\sign',\ver')$, using $\picnic(\HH)$ to denote the $\picnic$ scheme instantiated with $\HH$ (see \ifnum\masterthesis=0\cref{ssec:picnicov}\else\cref{ssec:prelim_ds_picnic}\fi):
}\else{
We use the \picnic scheme (recall \cref{ssec:prelim_ds_picnic}) to define a lifting of $(\keygen',\sign',\ver')$:
}\fi
\begin{itemize}
    \item $\widetilde{\sign}_{\sk} \equiv \picnic(\HH).\sign_{\pkec(\sk)}$
    \item $\widetilde{\ver} \equiv \picnic(\HH).\ver$
\end{itemize}
It is straightforward to check that $(\widetilde{\sign},\widetilde{\ver})$  is a lifting of $(\keygen',\sign',\ver')$. We call this scheme the \emph{key-lifted scheme}.

\begin{proposition}\label{prop:keylift}
    If $\HH$ is modeled as a random-oracle, then $(\widetilde{\sign},\widetilde{\ver})$ is a post-quantum lifting (see \cref{defn:lifting}).
\end{proposition}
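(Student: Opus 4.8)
The plan is to reduce an adversary against the key-lifted lifting to the post-quantum $\seufcma$ security of $\picnic(\HH)$, which we may invoke by the result of Chase et al.\ cited in \cref{ssec:picnicov}, using the fact that $\HH$ is a post-quantum one-way function when modeled as a random oracle. First I would unpack what the \sseuflcma adversary $\sA$ actually receives and what it must produce: it is given $\pk' = \HH(\pk)$ together with oracle access to both $\sign_\sk$ (the plain \ecdsa/Schnorr signer) and $\widetilde{\sign}_\sk = \picnic(\HH).\sign_{\pkec(\sk)}$, and it must output $(m,\sigma)$ with $\widetilde{\ver}_{\pk'}(m,\sigma) = \picnic(\HH).\ver_{\pk'}(m,\sigma)$ accepting, for an $m$ never queried to $\widetilde{\sign}_\sk$. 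The key observation is that $\pk' = \HH(\pk) = \HH(\pkec(\sk))$ is exactly the $\picnic(\HH)$ public-key corresponding to $\picnic$ secret-key $\pk = \pkec(\sk)$, so $\widetilde{\sign}_\sk$ and $\widetilde{\ver}_{\pk'}$ are literally the $\picnic(\HH)$ signing and verification algorithms for that key.

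Next I would build the reduction $\sB$. The $\picnic(\HH)$ challenger hands $\sB$ a $\picnic$ public-key $Y$ and a $\picnic$ signing oracle; $\sB$ forwards $Y$ to $\sA$ as $\pk'$. Whenever $\sA$ queries $\widetilde{\sign}_\sk$ on some $m$, $\sB$ relays it to its own $\picnic$ signing oracle and returns the answer; this is a perfect simulation since the definitions coincide. When $\sA$ finally outputs $(m,\sigma)$, $\sB$ outputs the same pair. If $\sA$ wins, then $m$ was never sent to $\widetilde{\sign}_\sk$ (hence never to $\sB$'s $\picnic$ oracle) and $\picnic(\HH).\ver_Y(m,\sigma)$ accepts, so $\sB$ wins its $\seufcma$ game with the same probability. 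The only subtlety is the \emph{extra} oracle $\sign_\sk$ that $\sA$ may query but $\sB$ cannot answer from the $\picnic$ game alone.

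The main obstacle is therefore simulating the plain-\ecdsa signing oracle $\sign_\sk$ without knowing $\sk$. I expect the resolution to go via the random-oracle model: $\sB$ does \emph{not} know $\sk$ or $\pk = \pkec(\sk)$ (it only sees $\pk' = \HH(\pk)$), so it cannot honestly compute \ecdsa signatures, but it can \emph{program} the random oracle $\HH$. Concretely, $\sB$ samples a fresh \ecdsa key pair $(\widehat{\pk},\widehat{\sk}) \gets \keygen(\secparam)$ itself, answers $\sA$'s $\sign_\sk$ queries honestly as $(\sign_{\widehat{\sk}}(m),\widehat{\pk})$, and whenever $\sA$ queries $\HH$ at $\widehat{\pk}$, responds with $Y$ (the $\picnic$ challenge public-key), keeping a consistent lazily-sampled table for all other points. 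Since $\widehat{\pk}$ is distributed exactly as a real \ecdsa public-key and $Y$ is distributed as $\HH$ of a uniformly generated \ecdsa public-key (this uses that $\keygen$ for \ecdsa/Schnorr samples the secret-key uniformly so that $\picnic(\HH)$ is instantiated on a uniformly random point in the domain of $\HH$, matching how $\picnic$ keys are generated), this reprogramming is undetectable except with the negligible probability that $\sA$ had already queried $\HH$ at $\widehat{\pk}$ before it was chosen, or that some collision in $\HH$ occurs. I would make this precise with a short game-hop argument bounding the statistical distance by $\poly(\lambda)/2^{\lambda}$, after which $\sA$'s success probability is negligibly close to $\sB$'s, which is negligible by the $\seufcma$ security of $\picnic(\HH)$. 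One should also double-check that a winning $(m,\sigma)$ for $\sA$ cannot be ``trivially'' derived from a $\sign_\sk$ answer --- but this is immediate because $\widetilde{\ver}$ ignores \ecdsa signatures entirely and only checks a $\picnic(\HH)$ proof, so $\sign_\sk$ queries give $\sA$ no $\picnic$ signatures at all.
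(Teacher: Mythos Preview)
You have misread what ``post-quantum lifting'' means in \cref{defn:lifting}. A post-quantum lifting requires only that the adversary's advantage in the plain \seufcma game for $(\keygen',\widetilde{\sign},\widetilde{\ver})$ is negligible; the adversary there has access to $\widetilde{\sign}_\sk$ \emph{only}. Access to both oracles is the \sseuflcma game, and negligible advantage there is the definition of a \emph{strong} lifting. The proposition claims the former, not the latter. Once this is observed, your ``main obstacle'' (simulating $\sign'_\sk$) disappears entirely, and the first half of your reduction --- forwarding $\pk'$ and $\widetilde{\sign}$ queries to the $\picnic(\HH)$ challenger and arguing that $\pkec(\sk)$ is uniform so the views match --- is exactly the paper's proof.

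More seriously, the statement you set out to prove is \emph{false}, as the paper remarks immediately after the proof: $(\widetilde{\sign},\widetilde{\ver})$ is not a strong lifting. A single query to $\sign'_\sk$ returns $(\sigma,\pk)$, and $\pk=\pkec(\sk)$ is literally the $\picnic(\HH)$ secret-key, so the adversary can forge $\picnic$ signatures on any message thereafter. Your random-oracle programming trick cannot rescue this: if you hand $\sA$ a fresh $\widehat{\pk}$ and program $\HH(\widehat{\pk})=Y$, then either $\sA$ uses $\widehat{\pk}$ as its $\picnic$ witness --- in which case the resulting forgery verifies only under your \emph{programmed} oracle, not under the real $\HH$ used by the $\picnic$ challenger, so $\sB$ does not win --- or the programming is irrelevant and you are back to being unable to answer $\sign'$ queries consistently with the challenge key. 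In short, the reduction you sketch does not go through, and it cannot, because the target statement does not hold.
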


\begin{proof}

    Let $\sA$ be an \seufcma adversary for $(\keygen',\widetilde{\sign},\widetilde{\ver})$ which wins with probability $\varepsilon$, we construct an \seufcma adversary $\sA_p$ for $\picnic(\HH)$ which wins with probability $\varepsilon$.

    $\sA_p$ uses the $\pk$ she received from $\sC$ as input to $\sA$, and responds to oracle queries on $m$ by querying her own oracle. Since $\pkec$ is a bijection and $\sk$ distributes uniformly it follows that $\pkec(\sk)$ also distributes uniformly. Hence, the view of $\sA$ distributes identically in the simulation and in the real \seufcma game. Hence, with probability $\varepsilon$ the output $(\sigma,m)$ wins the \seufcma game for $\picnic(\HH)$. But since $\sA$ and $\sA_p$ make the same queries, $(\sigma,m)$ is winning output for $\sA$ iff it is a winning output for $\sA_p$.

    We now invoke \cite[Theorem~4]{CLQ20}, which asserts that a random oracle is post-quantum one-way (even against a non-uniform \qpt adversary with a quantum advice), and \cite[Corollary~5.1]{CDG+17}, which asserts that $\picnic(\HH)$ is post-quantum \seufcma secure whenever $\HH$ is post-quantum one-way, to conclude that $\varepsilon = \negl$.
\end{proof}

We note that $(\widetilde{\sign},\widetilde{\ver})$ is \emph{not} a strong lifting, since the output of $\sign'_{\sk}$ (on any message) contains a copy of $\pk$, from which a \qpt adversary can calculate $\sk$.


\subsection{Seed Lifted Signature Schemes}\label{ssec:seedlift}

\Derived \utxos (either \naked or \hashed) also have the property that a post-quantum hash was applied "somewhere along the way", namely, to produce the master secret-key from the seed (see \ifnum\masterthesis=0{\cref{ssec:hdwallet}}\else{\cref{ssec:prelim_other_hd}}\fi). This suggests that lifted signatures could also be provided to such addresses by using the seed-phrase as a secret-key.



The main difficulty in realizing this approach is that the \emph{same} seed could be used to generate many \emph{different} public-keys. More precisely, the function mapping the secret-key to the public-key depends on a \emph{derivation path}, and we cannot reasonably assume that the adversary only has access to signatures corresponding to a particular derivation path. Security against an adversary augmented with access to signatures corresponding to many different derivation paths doesn't follow in a black-box fashion from the security of \picnic (indeed, \picnic could be modified to leak a single bit of the secret-key, depending on the derivation path. This does not affect security against a standard adversary, but allows the augmented adversary to recover the secret-key completely).

Like in key-lifting, we modify the $\ecdsa$ scheme to reflect how such addresses are used in practice. Unlike key-lifting, here the \emph{public}-key remains the same, whereas we modify the \emph{secret}-key. Instead of using the \ecdsa secret-key $\sk$, we use the seed $s$ and password $pw$ used to generate the master public-key (see \ifnum\masterthesis=0\cref{ssec:kdf}\else{\cref{ssec:prelim_other_kdf}}\fi) and the derivation path $P$ corresponding to $\pk$ as a secret-key. Since $\sk$ can be easily calculated from $(s,pw,P)$, defining the modified scheme is straightforward (\myhl{given a string $s$ of even length, we use $s_L$ to denote the left half of $s$}):

\begin{itemize}
    \item $\keygen(\secparam)$ samples $\sk = (s,pw,P)$ where $s\gets \zo^\secpar$, and $pw$ and $P$ are chosen arbitrarily such that $P$ is independent from $(s,pw)$, sets $\pk = \pkec(\der(\kdf(s,pw),P)_L)$, outputs $(\sk,\pk)$.
    \item $\sign_{(s,pw,P)}(m)$ outputs $\sigma\gets\ecdsa.\sign_{\sk'}(m)$ where $\sk' = \der(\kdf(s,pw),P)_L$.
    \item $\ver \equiv \ecdsa.\ver$
\end{itemize}
Where $\kdf = \HH^{2048}$ (that is, $2048$ applications of $\HH$) as in BIP-32, and $\der$ is the key-derivation function \ifnum\masterthesis=0 (for a formal treatment of BIP-32 and the key derivation function used therein, see the full version \cite{SW23})\else (see \cref{ssec:sl_seed_hd})\fi.

We now define a lifting for this scheme. The lifting is constructed in such a way that its security assures that the adversary is unable to forge a signature verifiable by \emph{any} key in the wallet, not just the key used by the honest user. Set $\kdf^{pre} = \HH^{2047}$ and $\kdf^{pq} = \HH$ (here $\kdf^{pq}$ captures the "post-quantum step" we will apply signature lifting to, and $\kdf^{pre}$ captures any other processing that preceded it).

\begin{definition}\label{con:seedlift}
    The \emph{seed-lifting} of the scheme above is:
    \begin{itemize}
        \item $\widetilde{\sign}_{(s,pw,P)}$ outputs $(\sigma,\msk,P)$ where $\sigma \gets \picnic(\kdf^{pq}).\sign_{\kdf^{pre}(s,pw)}(m,P)$ and $\msk = \kdf(s,pw)$ 
        \item $\widetilde{\ver}_\pk(m,(\sigma,\msk,P))$ accepts if $\pkec(\der(\msk,P)_L) = \pk$ and $\picnic(\kdf^{pq}).\ver_{\msk}((m,P),\sigma)$ accepts.
    \end{itemize}
\end{definition}

We wish to prove the following:

\begin{theorem}\label{thm:seedlift}
    If $\HH$ is modeled as a random oracle, then the seed-lifting is a strong lifting.
\end{theorem}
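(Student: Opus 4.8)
The plan is to reduce the strong unforgeability of the seed-lifting (\cref{con:seedlift}) to three facts: the post-quantum strong unforgeability of $\picnic(\HH)$ \cite{CDG+17}; the post-quantum one-wayness of the random oracle $\HH$, which holds even against a non-uniform \qpt adversary with quantum advice \cite{CLQ20}; and the collision-resistance of the composition $(y,P)\mapsto\der(\HH(y),P)_L$, which follows once the HMAC inside $\der$ is modeled as a random oracle (again via \cite{CLQ20}, this being a composition of collision-resistant functions in which a path is uniquely recoverable from the HMAC inputs it feeds). Throughout I write $\msk := \kdf(s,pw)$ for the master key produced by $\keygen$, $P^*$ for the honest path, and $\sk' := \der(\msk,P^*)_L$, so that $\pk = \pkec(\sk')$; since \cref{con:seedlift} instantiates \picnic with $\kdf^{pq} = \HH$, the \picnic secret key in play is $x^* := \kdf^{pre}(s,pw) = \HH^{2047}(s,pw)$ and its public key is $\HH(x^*) = \msk$. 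One small point to dispose of first: $x^*$ is not uniform but rather a polynomially-iterated random oracle applied to a uniform seed, hence retains min-entropy $\secpar - O(1)$; since \picnic's security proof invokes only one-wayness of the underlying function on the secret-key distribution, it applies verbatim with $x^*$ in place of a uniform secret key, and I would simply take the \picnic challenger to sample its key accordingly.

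I would first describe the simulation. Given a \qpt adversary $\sA$ against the strong lifting game (\cref{defn:lifting}), the reduction samples $pw$ and $P^*$ as $\keygen$ does, obtains $\msk$ from its $\picnic(\HH)$ challenger, sets $\pk := \pkec(\der(\msk,P^*)_L)$, and runs $\sA$ on $(\pk,\secparam)$. A query $m$ to $\widetilde{\sign}_\sk$ is answered by querying the \picnic signing oracle on $(m,P^*)$ and returning $(\sigma,\msk,P^*)$; a query $m$ to $\sign_\sk = \ecdsa.\sign_{\sk'}$ is answered by computing $\sk' = \der(\msk,P^*)_L$ from $\msk$ and signing directly. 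Note that it is exactly the fact that $\msk$ is anyway disclosed by every lifted signature that makes the plain \ecdsa oracle free to simulate, and, dually, why revealing $\msk$ in \cref{con:seedlift} costs nothing; this simulation is perfect.

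Suppose then that $\sA$ outputs a winning pair $(m^*,\sigma^*)$ with $\sigma^* = (\sigma_p^*,\msk^*,P^\dagger)$, so $\pkec(\der(\msk^*,P^\dagger)_L) = \pk$ and $\picnic(\HH).\ver_{\msk^*}\big((m^*,P^\dagger),\sigma_p^*\big)$ accepts. I split on whether $\msk^* = \msk$. If $\msk^* = \msk$, then $\der(\msk,P^\dagger)_L = \der(\msk,P^*)_L$ since $\pkec$ is a bijection, and collision-resistance of $P \mapsto \der(\msk,P)_L$ lets me assume $P^\dagger = P^*$ (otherwise $\sA$ has produced a $\der$-collision, a negligible event). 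Then $\sigma_p^*$ is a valid \picnic signature under $\msk$ on $(m^*,P^*)$, which the reduction forwards to its challenger. It is a genuine forgery: if $\widetilde{\sign}_\sk$ was never queried on $m^*$ then $(m^*,P^*)$ was never submitted to the \picnic oracle; and if it was, the strong winning condition forces every returned lifted signature $(\sigma_i,\msk,P^*)$ to have $\sigma_i \ne \sigma_p^*$, making $\big((m^*,P^*),\sigma_p^*\big)$ a \emph{strong} \picnic forgery. This is the only use of strong unforgeability of \picnic, and it is what makes the lifting strong: unlike key-lifting (\cref{prop:keylift}), neither oracle hands $\sA$ the signing key, so nothing obstructs strong security.

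The case $\msk^* \ne \msk$ is the crux. Here $\sA$ produced a valid \picnic signature under a public key $\msk^*$ the \picnic oracle was never queried on; since $\picnic(\HH)$ is the Fiat--Shamir collapse of a zero-knowledge proof of knowledge of an $\HH$-preimage, it is an online-extractable argument of knowledge in the \textsf{QROM}, so from $\sigma_p^*$ and $\sA$'s random-oracle transcript one extracts $y^*$ with $\HH(y^*) = \msk^*$. On the other hand $x^*$ satisfies $\HH(x^*) = \msk$ and $\der(\HH(x^*),P^*)_L = \sk'$, whereas $\der(\HH(y^*),P^\dagger)_L = \der(\msk^*,P^\dagger)_L = \sk'$ as well, and $y^* \ne x^*$ because $\HH(y^*) = \msk^* \ne \msk = \HH(x^*)$. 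Thus $\{(x^*,P^*),(y^*,P^\dagger)\}$ is a collision of $(y,P)\mapsto\der(\HH(y),P)_L$, and the reduction --- which knows $x^*$, hence can simulate $\sA$'s whole view honestly and needs only \picnic's extractor at the end --- outputs it, contradicting the collision-resistance cited above. The step I expect to be hardest is precisely this one: invoking the appropriate Fiat--Shamir online-extractability theorem for the \picnic proof system in the \textsf{QROM}, and pinning down collision-resistance of $(y,P)\mapsto\der(\HH(y),P)_L$ from that of its random-oracle components while accounting for the BIP-32 hardened and non-hardened cases (in the latter an adversary may pre-commit $\der(\msk^*,P^\dagger)_L$ to an arbitrary value, but only at the price of fixing $\msk^*$). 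A union bound over the two cases together with the $\der$-collision event then yields that $\sA$ wins with probability $\negl$.
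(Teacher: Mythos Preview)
Your overall architecture is sound and very close to the paper's, but the case split is organized differently and there is one terminological misreading worth flagging.

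\textbf{On ``strong''.} In \cref{defn:lifting}, a \emph{strong lifting} is not $\mathsf{sEUF\mhyphen CMA}$-style strong unforgeability; it is \sseuflcma security, i.e.\ the adversary additionally gets the \emph{original} signing oracle $\sign_\sk$. Your simulation of $\sign_\sk$ from $\msk$ is exactly how the paper discharges this extra oracle (it first shows \seufcma $\Rightarrow$ \sseuflcma by the same ``$\msk$ is already revealed'' argument), so that part is fine. But the sentence ``if it was, the strong winning condition forces \ldots a \emph{strong} \picnic forgery'' is superfluous: the \sseuflcma winning condition only requires that $m^*$ was never queried to $\widetilde{\sign}_\sk$, so you never need $\mathsf{sEUF\mhyphen CMA}$ of \picnic, and the paper only invokes plain \seufcma of \picnic.

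\textbf{On the decomposition.} The paper splits not on $\msk^* \stackrel{?}{=} \msk$ but on the $\der$-suffix relation between $(\msk,P^*)$ and $(\msk^*,P^\dagger)$, yielding four cases: equal (reduced to \picnic \seufcma, your $\msk^*=\msk$ branch after forcing $P^\dagger=P^*$); non-suffix collision (handled by a lemma that $\der$ is ``collision-resistant up to suffixes''); $(\msk,P^*)$ is a $\der$-suffix of $(\msk^*,P^\dagger)$ (path \emph{extension}: infeasible because it would force $\HH^Q(\msk^*)=\msk=\HH(\sk_p)$ for uniform $\sk_p$); and $(\msk^*,P^\dagger)$ is a $\der$-suffix of $(\msk,P^*)$ (path \emph{truncation}: here, and only here, the paper invokes \picnic extractability to obtain a preimage $x$ with $\HH(x)=\msk^*=\HH^Q(\msk)$ and rules this out by a dedicated lemma).

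Your two-case split folds the last three of these into the single assertion that $(y,P)\mapsto\der(\HH(y),P)_L$ is collision-resistant. That statement is correct, but proving it \emph{is} the paper's case analysis: the suffix collisions of $\der$ force exactly the ``$\HH(y^*)=\HH^Q(\HH(x^*))$'' shape, and showing that shape is infeasible is the content of the paper's preimage lemma. Likewise, in your $\msk^*=\msk$ branch you assert collision-resistance of $P\mapsto\der(\msk,P)_L$; this too is not immediate (suffix-type collisions would require $\msk$ to be a fixed point of some $\HH^Q$) and is covered by the paper's suffix lemma. So your plan is a valid and arguably cleaner repackaging, but be aware that the lemma you defer to is where all the BIP-32-specific work lives, and it does not follow from ``a composition of collision-resistant functions'' alone---the additive structure of non-hardened derivation and the suffix phenomenon must be handled explicitly, just as the paper does.
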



We provide an overview of the proof here, and delegate a complete formal argument to \ifnum\acmtops=0\cref{ssec:seedlift}\else the extended version \cite{SW23}\fi. The crux of the argument is that the key-derivation function used in HD wallets admits a particular structure that implies that we need only consider three possibilities:
\begin{itemize}
    \item The derivation path provided by the adversary is identical to one of the derivation paths exposed to him.
    \item The derivation path provided by the adversary is a suffix of one of the derivation paths exposed to him.
    \item The derivation path provided by the adversary is a prefix of one of the derivation paths exposed to him.
\end{itemize}
The reason for that is that any other type of output by the adversary immediately implies that the adversary can find a collision in the underlying hash function. The proof proceeds by showing that this also holds in the three cases above. The two first cases are rather straightforward. however, the third case requires proving that the \picnic scheme admits a particular extractability property, and use it to show a collision is extractable. In particular, the proof appeals to the details of the construction and is not a black-box argument.

\subsection{Size of Lifted Signatures}\label{ssec:liftedsize}
The scheme resulting from lifting a one-way function might admit prohibitively large signatures. The size of the signature is a consequence of the size of the circuit calculating the one-way function $f$.

Chase et al. \cite{CDG+17} isolate a one-way function particularly suitable for the task (namely, the encryption circuit of a block cipher called LowMC \cite{ARS+15}). They use this cipher to construct the Fish and \picnic schemes. The Fish scheme has a signature length of about 120KB, but its security is only known in the ROM, whereas the signature sizes of \picnic are about 195 KB, and it is proven to be secure in the QROM.

A follow-up work \cite{KZ20} optimizes the \picnic scheme \emph{with respect to the same one-way function} to below 50KB.

In our setting, we are not free to choose the function. We always instantiate $\picnic$ with either \sha or \shaf, depending on the context. Chase et al. manage to apply their technique to \sha to obtain signatures of size 618KB with 128 bits of security (the performance of the optimized version of \cite{KZ20} when instantiated with \sha is not analyzed). We are unaware of empirical data for $\picnic$ signature sizes when instantiated with \shaf.

A further advantage of the constructions in \cite{CDG+17,KZ20}  is that they could be instantiated with various parameters. The signature size could hence be decreased by reducing the number of bits of security. For example, in \cite{CDG+17}, it is noted that the size of the signature instantiated with \sha and admitting 80 bits of security is 385KB long, which is an improvement by a factor of almost a half.

We are hopeful that the scheme in \cite{KZ20} affords shorter signatures when instantiated with the functions we require, and further optimizations could further reduce signature sizes.


\subsection{Lifted Spending} \label{row:ls}
\emph{Lifted spending} is the same as regular spending, except the lifted signature is used instead of the pre-quantum signature.

Say that a user with access to a secret-key $\sk$ (and consequentially, to the corresponding public-key \pk\footnote{Technically, the definition of a signature scheme does not imply that $\pk$ can be computed from $\sk$ alone. However, this property holds in all signature schemes the authors are aware of, and in particular in all signature schemes used in Bitcoin. We hence use this assumption freely. See the discussion in \ifnum\masterthesis=0\cref{sssec:ec}\else\cref{ssec:prelim_ds_ec}\fi}) wants to spend a \utxo whose address is $\HH(\pk)$ without leaking \pk.

They can sign the transaction \textsf{tx} with $\sigma\gets \picnic(\HH).\sign_{\pk}(\mathsf{tx})$ \myhl{(note that the public-key of the original scheme acts as the secret-key of the lifted scheme)}. The correctness of \picnic implies that it suffices to know $\HH(\pk)$ in order to verify $\sigma$, and \cref{prop:keylift} implies that publicly posting $\sigma$ does not compromise the security of the user. Similarly, seed lifting can be used to spend a \derived \utxo, see \ifnum\masterthesis=0{\cref{ssec:seedlift}}\else{\cref{sec:lifting_seed}}\fi.

\ifnum\masterthesis=0
We remind that, due to the signature sizes, we do not propose to implement lifted spending directly, but we rather use it as a building block for a larger protocol in \cref{chap:lfc} protocol we introduce in \cref{chap:lfc}. For completeness, we survey the properties of lifted-spending.
\else
Recall that lifted signatures are prohibitively large -- see \cref{sec:lifting_size} for a discussion about their sizes, and \cref{ssec:perlim_bitcoin_pqsig} for a the effects of signature sizes on throughput. We hence to not propose to implement lifted spending directly, but we rather use it as a building block for the \emph{lifted FawkesCoin} protocol we introduce in \cref{chap:lfc}. For completeness, we survey the properties of lifted-spending.
\fi





\paragraph*{Spending \Hashed \utxos}\label{cell:ls_hashed} To spend a \hashed \utxo, the user signs it with a key-lifted signature (see \ifnum\masterthesis=0{\cref{ssec:keylift}}\else{\cref{sec:lifting_key}}\fi).

\paragraph*{Spending \Derived \utxos}\label{cell:ls_derived} To spend a \derived \utxo, the user signs it with a seed-lifted signature (see \ifnum\masterthesis=0{\cref{ssec:seedlift}}\else{\cref{sec:lifting_seed}}\fi).

\begin{remark}
    We stress that the security of lifted spending relies on the policy that key lifting could only be used to spend \hashed \utxos, and leaked \utxos must be spent using seed lifting. This follows as quantum adversaries exposed to the public-key can compute the secret-key, which they could then use to create valid key-lifted signatures (see \ifnum\masterthesis=0{\cref{ssec:keylift}}\else{\cref{sec:lifting_key}}\fi).
\end{remark}

\paragraph*{\Hashed/\Derived Confirmation Times}\label{cell:ls_hdconf} Besides using a different signature, the spending procedure is the same as in the current state, and so it has the same confirmation times.

\paragraph*{Transaction Size Increase}\label{cell:ls_tx} As we discuss in \ifnum\masterthesis=0{\cref{ssec:liftedsize}}\else{\cref{sec:lifting_size}}\fi, currently the size of a lifted signature is of the order of hundreds of kilobytes.

\paragraph*{High Spendability Threshold}\label{cell:ls_spendthres} Since the \utxo is spent directly, the only barrier to spendability is that it is valuable enough to cover the cost of a transaction fee. Since lifted signatures are several orders of magnitude larger than pre-quantum signatures (see \ifnum\masterthesis=0{\cref{ssec:liftedsize}}\else{\cref{sec:lifting_size}}\fi), the spendability threshold for using lifted signatures also grows several orders of magnitude larger.

\paragraph*{Works Without Post-Quantum \utxo} \label{cell:ls_pqutxo} The fee for spending coins is taken from the spent \utxo exactly like it is currently, so no additional coin is required.

\paragraph*{Requires a hard-fork}\label{cell:ls_softfork} Recall that in \ecdsa the public-key is recoverable from a signature, whereby it is insecure to include an \ecdsa signature in the transaction\footnote{In contrast, in some implementations of the Schnorr signature scheme, including the one used in Bitcoin, recovering the public-key from a valid signature requires inverting a hash function and is thus considered infeasible, even for quantum adversaries.}. This means that valid lifted spending transactions can not be shaped as valid \ecdsa transactions. Hence a hard-fork is required.
\section{Lifted FawkesCoin}\label{chap:lfc}\label{row:lfs}\label{sec:lfc}

The \emph{Lifted FawkesCoin} method combines FawkesCoin (see \cref{sec:fawkes}) and signature lifting (see \ifnum\masterthesis=0{\cref{ssec:zk}}\else{\cref{chap:sl}}\fi) to allow spending \hashed and \derived \utxos without using a side payment or a post-quantum \utxo, \emph{while keeping the transaction size small}.


As briefly touched upon in \cref{sec:fawkes}, a major difficulty in FawkesCoin is dealing with \emph{denial-of-service attacks}. Namely, we need to prevent adversaries from cheaply spamming the blockchain. However, we need to allow users to post commitments, and at the time of commitment, it is not known what the commitment is for and if it will ever be revealed. Since the user is only charged a fee during the reveal, this opens up attacks where fake commitments are posted, or commitments for the same \utxo are posted several times, wasting valuable blockchain space for free.

\ifnum\masterthesis=0{
In \cref{sec:fawkes}, we will explain how to overcome this issue by requiring the user to pay \emph{at commitment time} by using a post-quantum \utxo. However, it is desirable that users without access to post-quantum \utxos could quantum-cautiously spend their money.
}\fi

Bonneau and Miller~\cite{BM14} propose two approaches to the denial-of-service problem, but both fall short of solving the problem:

\begin{itemize}
    \item \textbf{Zero-Knowledge}\quad The authors suggest that the user sends a ZK proof to the miner that the \utxo is valid and pays a "useful amount" of fees. However, as long as the transaction has not been revealed, there is nothing forcing the user to finalize the transaction or preventing them from committing to the same \utxo several times.
    \item \textbf{Merkle Trees}\quad The authors suggest that the miner arranges all commitments in a Merkle tree and only posts the root, requiring users to post Merkle proofs in their reveal message. In this scenario, a spammer can not increase the size of the Merkle root, but by spamming the miner, they could increase the size of a Merkle \emph{proof}.
\end{itemize}

We propose a way to avoid denial-of-service attacks by using signature lifting. The idea is to limit the time the user has to post a commitment. If a user fails to post a commitment within this time, the miner can post the signature as a proof and claim the coin the user tried to spend as their own. To prevent users from committing to the same \utxo on several blocks, we require the user to reveal in advance the \utxo $u$ they are spending, and the miner has to post $u$ to the blockchain along the commitment. No other commitments for that \utxo will be accepted as long as the commitment does not expire. To prevent miners from abusing this power to censor \utxos, if the user failed to reveal a transaction spending $u$, the miner is \emph{required} to post a proof of ownership. If a commitment to a \utxo was posted, but neither a reveal nor a proof of ownership was posted before the \utxo expired, we assume the miner delayed the \utxo and penalize them to compensate the owner.

After the commitment has been posted, the spender has a limited time to post a reveal of the commitment to the blockchain. If they fail to do so, the miner can post the proof of ownership to claim the entire \utxo, making spam attempts costly. Note that the proof of ownership is never posted to the blockchain if both parties are honest and follow the protocol. Hence, the transaction size is about the same size as current transactions spending \emph{pre}-quantum \utxos.

However, the protocol described so far has a different problem: it allows \emph{censoring} \utxos. Note that a miner can commit a \utxo even without a proof of ownership, preventing the owner from spending it. To prevent this, we limit the time the miner has to post a proof of ownership, and require the miners to leave a \emph{deposit} proportional to the value of the \utxo, that will be awarded to the owner of the \utxo if the miner fails to post the proof. The deposit makes such delay attacks extremely costly for the miner, and reimburses the aggrieved user, effectively eliminating this drawback.

The resulting protocol for spending a \utxo $u$ is as follows:
\begin{enumerate}
    \item (Spender creates transaction) The spender prepares a transaction \textsf{tx} spending $u$ and paying fees as usual. 
    \item (Spender creates commitment) The spender prepares  $(\HH(\mathsf{tx}), \sigma, u, \alpha)$, where $\alpha$ is the amount of fees paid by \textsf{tx}, and $\sigma$ is a lifted signature on $(\HH(\mathsf{tx}),\alpha)$ which acts as a \emph{proof of ownership} of $u$. The type of the signature $\sigma$ depends on the type of \utxo being spent, as explained below.
    \item (Spender broadcasts commitment) The spender broadcasts $(\HH(\mathsf{tx}),\sigma, u, \alpha)$ to the mempool.
    \item (Miners validate commitment) If $\HH(u)$ appears in a previous commitment that has not yet expired (that is, $u$ is locked), the transaction is invalid. Moreover, if $\sigma$ is a key lifted (i.e. not seed lifted) signature and $u$ is not \hashed (or---more accurately---not \emph{leaked}, see \cref{ssec:leaked}), the transaction is also considered invalid. 
    \item (Miners include commitment) The miner posts $(\HH(\mathsf{tx}),\HH(u),\alpha)$ to the blockchain. The miner does not get a fee at this point, and the source of the fee is not yet determined, but an honest miner is guaranteed they will either get the fee or will have a chance to post $\sigma$ in order to claim all the coins in $u$.
    \item (Spender waits) After $(\HH(\mathsf{tx}))$ is posted to the blockchain, the user waits for 100 blocks.
    \item (Spender broadcasts reveal) The user must broadcast \textsf{tx} to the mempool soon enough so it will be included within the following 100 blocks. \textsf{tx} is considered invalid if it does not include a fee of $\alpha$. The fee paid for including \textsf{tx} is split equally between the miner who included \textsf{tx} and the miner who included $(\HH(\mathsf{tx}))$.
    \item (Receiver waits for confirmation) The receiver considers the transaction \textsf{tx} completed six blocks after the reveal was posted.
    \item (Miner posts proof of ownership) If the user fails to have \textsf{tx} included to the blockchain within 100 blocks, the miner must post the proof of ownership $\sigma$ to the blockchain within 100 blocks. If the miner posts the proof of ownership, they are given the entire value of $u$,
    \item (Compensation of delayed \utxos) If the miner fails to post the proof of ownership in time, they pay 3.35\% of the value of $u$ to the address of $u$.
\end{enumerate}

The disadvantage of Lifted FawkesCoin over FawkesCoin is that miners are not likely to include transactions whose value is less than the fee required to post a proof of ownership, as they risk losing money in case the commitment is not revealed. Hence, the spendability threshold (see p. \pageref{col:spend_thres}) of this method remains as high as the cost of posting a lifted signature.

Another issue that has to be addressed is obtaining commitment fees. The payment a miner earns for including a commitment is typically lower than the cost of posting a transaction, making it unspendable. Hence, creating a new \utxo paying the fee for every resolved commitment would increase the amount of dust (see \ifnum\masterthesis=0{\cref{par:dust}}\else{\cref{ssec:perlim_bitcoin_dust}}\fi). To address this, we propose ways to aggregate the payment of commitment fees in \cref{app:lfc_fees}.

\subsection{Properties of Lifted FawkesCoin}

\paragraph*{Spending \Hashed \utxos}\label{cell:lfc_hashed} To spend a \hashed \utxo, the user follows the protocol above using a key-lifted signature (see \ifnum\masterthesis=0{\cref{ssec:keylift}}\else{\cref{sec:lifting_key}}\fi). Additionally, the miner has to verify that the \utxo is not leaked (see \cref{ssec:leaked}).

\paragraph*{Spending \Derived \utxos}\label{cell:lfc_derived} To spend a \derived \utxo, the user follows the protocol above using a seed-lifted signature (see \ifnum\masterthesis=0{\cref{ssec:seedlift}}\else{\cref{sec:lifting_seed}}\fi), and includes $\msk,P$ in the payload of $\tx$. It must hold that $\pkec(\der(\msk,P)) = \pk$ for the revealing transaction to be considered valid.

\paragraph*{\Hashed/\Derived Confirmation Times}\label{cell:lfc_conf}The confirmation times are the same as in regular FawkesCoin.

\paragraph*{Transaction Size Increase}\label{cell:lfc_tx}The transaction size is the same size of a regular \emph{pre}-quantum transaction plus 64 bytes for commitment and the hash of the \utxo.

\paragraph*{Spendability Threshold}\label{cell:lfc_spendthres}While transaction sizes are small, they should be valuable enough to cover the cost of posting the proof of ownership. Including a less valuable \utxos imposes a risk on the miner, so it will probably not be included. We further discuss the size of lifted signatures in \ifnum\masterthesis=0{\cref{ssec:liftedsize}}\else{\cref{sec:lifting_size}}\fi.

\paragraph*{Works Without Post-Quantum \utxo} \label{cell:lfc_pqutxo} The protocol makes it secure to pay the fees from the original \utxo, so no additional coin is required.

\paragraph*{Limited Delay Attacks}\label{cell:lfc_delay} A miner can delay spending an arbitrary \utxo by posting its hash alongside a fake commitment. However, once the transaction is expired, they will compensate the user.

\paragraph*{Requires a Hard-Fork}\label{cell:lfc_softfork} In case the user fails to reveal the transaction in time, the miner is awarded the coins in the \utxo even though they do not have access to a valid signature. This requires nodes to consider a transaction valid even though it does not contain a signature for the \utxo it is trying to spend. As we explain in \ifnum\masterthesis=0{\cref{par:softfork}}\else{\cref{ssec:perlim_bitcoin_dust}}\fi, this requires a hard-fork.

\paragraph*{Offline Users Not Risked}\label{cell:lfc_offline} \Derived \utxo are only spendable using the derivation seed, so \naked and \lost \utxos can not be spent using this method.

\subsection{Obtaining Commitment Fees}\label{app:lfc_fees}

When processing a block, it is impossible to know how much fee the miner accrued for including Lifted FawkesCoin commitment. The most straightforward solution is, once the transaction is revealed, to create a \utxo spending the fee to an address specified in the block which contains the commitment. The problem with this solution is that it creates many \utxos whose value is comparable to the cost of spending them, thereby increasing the amount of dust (see \ifnum\masterthesis=0{\cref{par:softfork}}\else{\cref{ssec:perlim_bitcoin_dust}}\fi).

We propose to overcome this by delaying the coinbase processing. The Lifted FawkesCoin protocol guarantees that the exact amount of fees accrued by the miner is resolved after the expiry period is over (which amounts to 300 blocks using the 100 blocks waiting time we recommend in \cref{ssec:wait_time}). Hence, processing the coinbase transaction of the block could be delayed for this length of time. This could be implemented by having the coinbase transaction of each block pay to the block that appears three waiting periods before it, if that block contains Lifted FawkesCoin commitments. The \href{https://kaspa.org/}{Kaspa cryptocurrency} resolves a similar situation (where the final transaction fees are not known at the time the blocks are created) by implementing such an approach (see \href{https://github.com/kaspanet/docs/blob/main/Reference/Block\%20Rewards\%2C\%20Transaction\%20Fees\%2C\%20and\%20Coinbase\%20Transactions.md}{relevant documentation}). Note that coinbase cooldown \emph{starts} to take effect only after the expiry period is \emph{finished}.

A disadvantage of this approach is that the elongated wait times hold for the entire coinbase transaction, including the block reward and the fees for transactions not made using Lifted FawkesCoin. This could be resolved by allowing the coinbase transaction to include the block reward and all fees except the Lifted FawkesCoin fees, and having each block reward Lifted FawkesCoin fees to the block appearing three wait periods before it (if there are any).

\subsection{Increasing Lifted FawkesCoin's Throughput}\label{app:throughput}

In the current design, a miner only has a limited time to post the \myhl{proof of ownership}. This has the effect that the number of commitments risk-averse miners will agree to include is, at most, the number of fraud-proofs the network can support. Otherwise, \myhl{the miners} take the risk that they would have to post more fraud-proofs than possible, eventually causing them to pay delay fines albeit being honest.

Given that fraud proofs are much larger than commitments, this greatly decreases the commitment throughput. For example, if a block can only contain one fraud-proof, then no more than 300 commitments will be included during a $500$ blocks long FawkesCoin epoch.

Our approach to increasing the throughput is by extending the expiry time of Lifted FawkesCoin commitments in scenarios where many fraud-proofs are posted. If at the end of the Lifted FawkesCoin epoch (see \cref{ssec:combining}) sufficiently many fraud-proofs were posted, a new Lifted FawkesCoin epoch starts immediately, during which fraud-proofs for commitments from previous epochs are allowed. We refer to this new epoch as an \emph{extension}.

In more detail, we propose the following:
\begin{itemize}
    \item Currently, if the spender failed to post a reveal, the miner has a period of $100$ blocks to post a fraud-proof. We modify this rule such that the miner is allowed to post the fraud-proof at any time up to the \emph{end of the epoch}. However, we still do not allow commitments to be posted within the last $300$ blocks of the epoch, to allow miners a period of \emph{at least} $100$ blocks to post a proof.
    \item Say that $100$ blocks can contain up to $k$ fraud-proofs. At the end of the epoch:
    \begin{itemize}
        \item If more than $k/2$ fraud-proofs were posted in the last $100$ blocks, then:
        \begin{itemize}
            \item Do not pay the miner's deposits to the users,
            \item Start a new Lifted FawkesCoin epoch,
            \item During this epoch, allow miners to post fraud-proofs for any unrevealed commitment that has not yet expired (even if the commitment was included in a previous epoch).
        \end{itemize}
        \item Else: 
        \begin{itemize}
            \item For each non-expired unsettled commitment, pay the fine left for that commitment to the address of the \utxo, and consider the commitment expired.
            \item Do not initiate a new Lifted FawkesCoin epoch, but rather allow the epoch rotation to continue.
        \end{itemize} 
    \end{itemize}
\end{itemize}

More generally, one can set the threshold to trigger an extension at $kp$ fraud-proofs for any $0<p<1$. However, we now discuss possible attacks on this mechanism and conclude that $p=1/2$ is a natural choice (or more precisely, that $p$ should be \emph{at most} $1/2$, but should not be set too low either).

There are two ways to abuse the extension mechanism: either by \emph{forcing} or by \emph{denying} an extension.

In order to force an extension, an adversary must post $kp$ fraud proofs. The consequence of such an attack is threefold: 
\begin{itemize}
    \item It prevents users from using non-lifted FawkesCoin and in particular prevents spending \lost \utxos.
    \item Inability to use non-lifted FawkesCoin also makes it impossible to spend pre-quantum \utxos whose value is below the cost of posting a fraud-proof.
    \item It extends the length of ongoing delay attacks without increasing the delay attack fine eventually paid to the owners of the attacked \utxo.
\end{itemize}
Note that during the attack, users can still use Lifted FawkesCoin and spend post-quantum \utxos.

The adverse effects of this attack are disruptive but do not last once the attack is over, and maintaining the attack is very expansive: note that the total size of the maximal amount of fraud-proofs a block can contain is at least half a block, hence, the cost of posting $kp$ fraud-proofs is at least as high as the fee for consuming a space equivalent to $50p$ blocks (moreover, maintaining such an attack for an extended period wastes a lot of space, plausibly increasing the cost of posting fraud-proofs). 
In particular, if $p=1/2$, the cost of forcing \emph{each} extension is at least as high as the cost of block space equivalent to $25$ blocks. Setting $p$ too low might make such attacks affordable. 
We point out that the cost of an extension forcing attack could also be increased by increasing the length of the lifted FawkesCoin epoch, whereby increasing $k$.

An extension \emph{denying} attack is more dangerous since it may be \emph{profitable}. On the other hand, it can only be carried by a miner with a fraction of $q>1-p$ of the total hash rate. Such an adversarial miner can deny an extension by simply refusing to include fraud-proofs in her blocks. The adversarial miner can post many commitments to the mempool, so that they would be included by honest miners, and then deny the extension. By the end of the epoch, at most $(1-q)k$ of the corresponding fraud proofs will have been posted, and the adversarial miner would gain the delay attack fines for the remaining commitments.

Hence, we strongly recommend setting $p\le 1/2$. This assures that such an attack can be only carried by $>50\%$ attackers (recall that such powerful attackers can already severely damage the network in many other ways\ifnum\masterthesis=1{, as explained in \cref{ssec:prelim_bitcoin_revert})}\fi.

\subsection{Delay Attack Fines}\label{ssec:fines}

The fine for delay attacks should reflect the damage done to the user. However, it should not be chosen too large, as it would make it hard for miners to include commitments on highly valuable \utxos. We propose setting the fine to be the equivalent of an \emph{annual} 100\% interest, paid over the period the transaction was delayed.

\ifnum\masterthesis=0{Note that if rotation is employed with the periods we proposed in \cref{intro:proposal}, then a delay attack actually prevents the user from spending their \utxo in the current Lifted FawkesCoin epoch, forcing them to wait as much as 2,500 blocks. Hence, we propose that the interest should be calculated as an annual interest of 100\% accrued over a period of 25,000 minutes, which is 3.35\% of the value of the \utxo.}\else{
We propose that the interest should be calculated as an annual interest of 100\% accrued over the maximal period of time a \utxo could be locked for. That period of time depends on the implementation details and constants. In the concrete implementation we propose in \cref{sec:procrast_full}, this time is about 2,500 blocks. So, if our specification is to be followed, we propose that the fee should be an annual interest of 100\% accrued over a period of 25,000 minutes, which is 3.35\% of the value of the \utxo.
}\fi

The miner should be able to cover delay fines for all \utxos whose hash is used in a Lifted FawkesCoin commitment. Let the \emph{guaranteed coinbase value} be the value of the block reward and all transactions which are not Lifted FawkesCoin commitments (this is a lower bound on the block reward in case the miner is honest. Obviously, the final block reward could be lower if the miner performs delay attacks). If the guaranteed coinbase value is lower than the sum of all required deposits, then the miner needs to include an additional transaction covering the difference.

\section{Restrictive and Permissive FawkesCoin}\label{sec:fawkes}

FawkesCoin \cite{BM14} is a blockchain protocol that avoids public signature schemes altogether by employing the Fawkes signatures of \cite{ABC+98}. The core idea is that in order to spend a \utxo the user commits to a certain transaction and reveals the transaction once some predetermined period of time has passed (we further discuss the considerations for choosing the waiting time in \cref{ssec:wait_time}, where we propose a waiting time of 100 blocks). The security of the protocol follows from the observation that if the waiting time is chosen long enough, then after revealing the transaction, an attacker has a negligible chance to complete a commit-wait-reveal cycle before the honest user manages to include the revealed transaction in the blockchain.

In the original FawkesCoin design, the secret-key is a random string $r$, and the public address for spending to this key is $\HH(r)$, where $\HH$ is some agreed upon collision-resistant hash function. In order to spend a \utxo with address $\HH(r)$ to another address $\HH(s)$, the user posts $\HH(r,\HH(s))$ as a commitment, and $r$ as a reveal. Given $r$ and $\HH(s)$, anyone can verify that $\HH(r)$ and $\HH(r,\HH(s))$ evaluate correctly.

Bonneau and Miller note that their solution could be integrated into Bitcoin and that a \utxo could be spent by using a hash of a valid transaction as a commitment and the transaction itself as a reveal. This allows cautious spending of pre-quantum \hashed \utxos. They point out that this approach could mitigate "a catastrophic algorithmic break of discrete log on the curve P-256 or rapid advances in quantum computing."

The main obstacle to adopting FawkesCoin is in incentivizing miners to include commitment messages in the blocks in the first place. However, this issue could be completely circumvented if the user already has access to post-quantum \utxos that they could use to pay the fees. In this section, we assume that it is the case. 

While it is desirable that users could use the committed transaction to pay the miner fee, achieving this feature without allowing denial-of-service attacks proves far from trivial. In \cref{sec:lfc} we introduce \emph{Lifted FakwesCoin}, a variation of FawkesCoin which allows paying the transaction fee out of the spent \utxo by using lifted signatures (see \ifnum\masterthesis=0{\cref{ssec:zk}}\else{\cref{chap:sl}}\fi).

However, non-lifted FawkesCoin has its own benefits. Mainly, it has a lower spendability threshold so it is more suitable for small transactions. Also, using lifted signatures allows us to extend the applicability of non-lifted FawkesCoin to extend the set of spendable \utxos to also include \naked and \lost \utxos, we call the extended protocol \emph{permissive} FawkesCoin. We stress that while both permissive FawkesCoin and lifted FawkesCoin are constructed by combining FawkesCoin and lifted signatures, they are inherently different protocol. In particular, in permissive FawkesCoin neither the user nor the miner have to post a lifted signature at any point. Lifted signatures are only used in cases where there is an attempt to steal a \utxo.


\subsection{Overview of FawkesCoin modes}\label{ssec:cfc}

We extend the FawkesCoin protocol by adding two new modes of operation, namely \emph{unrestrictive} and \emph{permissive} modes. These modes allow spending \utxos that are not cautiously spendable using the original (restrictive) design of FawkesCoin. The two new modes of operation work similarly to optimistic rollups~\cite{Eth22}, so we follow the terminology used therein.

The original application of FawkesCoin to quantum cautious spending presented in \cite{SIZ+18} is equivalent to restrictive FakwesCoin further restricted to spending only \hashed \utxos.

\begin{itemize}
    \item \textbf{Restrictive\los{ (\cref{ssec:cfc})}{}}\label{row:cfc}\quad This mode allows users to spend \hashed \utxos using the public-key, and \derived \utxos using the derivation seed. The advantage of this mode of operation is that it minimizes quantum loot. The disadvantage is that it can only support spending \hashed and \derived \utxos (so implementing restrictive FawkesCoin as the \emph{only} way to spend pre-quantum \utxos will make all pre-quantum \utxos which are neither \hashed nor \derived unspendable indefinitely).
    
    \item \textbf{Unrestrictive\los{ (\cref{ssec:nfc})}}\label{row:nfc}\quad This mode further allows users to spend \naked \utxos. 
    However, to do so they are required to provide a \emph{deposit} as large as the value of the \utxos, and then the transaction goes into a long \emph{challenge period} \los{(in \cref{ssec:dispute} we discuss the length of the waiting period and propose to set it to one year)}{} during which the rightful owner can claim the coins in the \utxo \emph{and} the deposit by posting a \emph{fraud proof} showing that they hold the derivation seed. 
    The purpose of the deposit is to make it risky to try stealing a \utxo not known to be \stealable.
    
    The disadvantage of unrestrictive mode is that owners of \derived \utxos that are not \hashed have to occasionally (e.g. at least once a year) scan the blockchain for attempts to spend their money.
    
    \item \textbf{Permissive\los{ (\cref{ssec:pfc})}}\label{row:pfc}\quad This mode further allows users to spend \lost \utxos.
    To achieve this, we allow users to claim \naked \utxos without signing the transaction. This allows recovering of \lost \utxos. Like in unrestrictive mode, we require a deposit for spending a \naked \utxo, making it risky to try to claim a \naked \utxo without knowing that it is \lost.
    Permissive mode has the advantage that \lost \utxos are no longer quantum loot.
    This method allows spending \utxos without presenting a signature, whereby implementing it requires a hard-fork.
\end{itemize}

Using restrictive mode minimizes the quantum loot available to a quantum adversary. 
One might argue that even though unrestrictive and permissive modes allow more honest users to spend their coins, they have a negative side-effect, namely, unrestrictive and permissive modes increase the available quantum loot over restrictive mode. However, this is not quite the case, since holders of \derived \utxos can falsely declare their \utxos lost with the purpose of baiting attackers to place a deposit they could claim. We discuss this further in \cref{ssec:fc_loot}. 
Note that permissive mode could be modified to allow spending \hashed \utxos whose keys were lost. However, we advise against this approach as it would require \emph{all} holders of pre-quantum \utxos, including \hashed \utxos, to be actively cautious against attempts on their money. Also note that permissive mode could be implemented regardless of quantum adversaries as means for recovering lost funds, though this approach also has the drawback of requiring users to be online to maintain the safety of their money.

\subsection{Restrictive FawkesCoin}\label{row:fc}

Restrictive FawkesCoin allows spending a \hashed \utxo by creating a transaction spending it, posting the hash of the transaction as a commitment, and posting the transaction itself to reveal it. It also allows spending \emph{any} \utxo by committing to a derivation seed instead.

\paragraph*{Spending a \Hashed \utxo} \label{cell:cfc_hashed}\label{cell:nfc_hashed}\label{cell:pfc_hashed} The protocol for spending a \Hashed \utxo $u$ is as follows: 
\begin{enumerate}
    \item (Spender creates transaction) The spender creates a transaction \textsf{tx} spending $u$, which includes a standard fee.
    \item (Spender creates commitment) The spender creates a transaction having $\HH(\textsf{tx})$ as its payload, which is signed and pays fees using a post-quantum $\utxo$. We refer to this committing transaction as $\textsf{ctx}$.
    \item (Spender posts commitment) The spender posts \textsf{ctx} to the mempool.
    \item (Miners include commitment) The miners include \textsf{ctx} in their blocks, the fee for \textsf{ctx} goes to the miner who included it first, as usual.
    \item (Spender waits) Once \textsf{ctx} is included in the blockchain, the spender waits for 100 blocks to be mined above it (see \cref{ssec:wait_time}).
    \item (Spender posts reveal) The spender posts \textsf{tx} to the mempool.
    \item (Miners validate) The miners verify that:
    \begin{itemize}
        \item $\mathsf{\HH(tx)}$ appears in the payload of a committing transaction \textsf{ctx} at least 100 blocks old, and
        \item the \utxo was \hashed when the commitment was posted: the public-key used to sign \textsf{tx} does not appear in the blockchain before \textsf{ctx}.
    \end{itemize}
    If \textsf{tx} does not satisfy both conditions, it is considered invalid.
    \item (Miners include reveal) The miners include \textsf{tx} in their block. The fee for \textsf{tx} goes to the miner who included it first, subject to the two conditions above.
    \item (Receiver waits for confirmation) The receiver considers the transaction completed once \textsf{tx} accumulated six confirmations.
\end{enumerate}

\paragraph*{Spending a \Derived \utxo}\label{cell:cfc_derived}\label{cell:nfc_derived}\label{cell:pfc_derived} To spend a derived \utxo whose address is $\pk$ (or a hash thereof), we require the user to commit and reveal a parent extended secret-key $\xsk_{par}$ and a derivation path $P$ such that $\pkec(\der(\xsk_{par},P)) = \pk$ (see \ifnum\masterthesis=0\cref{sssec:ec}\else\cref{ssec:prelim_ds_ec}\fi and \ifnum\masterthesis=0{\cref{ssec:hdwallet}}\else{\cref{ssec:prelim_other_hd}}\fi). We stress that once $\xsk_{par}$ is revealed, anyone could use FawkesCoin to spend any \utxo whose address has been derived from $\xsk_{par}$. Hence, to maintain the safety of their funds, the user must commit to \emph{all} \utxos whose addresses were derived from $\xsk_{par}$ before they start revealing them.

The protocol for spending a \derived \utxo $u$ whose address corresponds to a public-key $\pk$ is as follows (the differences with the protocol for spending a \hashed \utxo are underlined):
\begin{enumerate}
    \item (Spender creates transaction) The spender creates a transaction \textsf{tx} spending $u$, which \ul{contains in its payload $(\xsk_{par},P)$ such that $\pkec(\der(\xsk_{par},P)) = \pk$ } and includes a standard fee.
    \item (Spender creates commitment) The spender creates a transaction having $\HH(\textsf{tx})$ as its payload, which is signed and pays fees using a post-quantum $\utxo$. We refer to this committing transaction as $\textsf{ctx}$.
    \item (Spender posts commitment) The spender posts \textsf{ctx} to the mempool.
    \item (Miners include commitment) The miners include \textsf{ctx} in their blocks, the fee for \textsf{ctx} goes to the miner who included it first, as usual.
    \item (Spender waits) Once \textsf{ctx} is included in the blockchain, the spender waits for 100 blocks to be mined above it (see \cref{ssec:wait_time}).
    \item (Spender posts reveal) The spender posts \textsf{tx} to the mempool.
    \item (Miners validate) The miners verify that:
    \begin{itemize}
        \item $\mathsf{\HH(tx)}$ appears in the payload of a committing transaction \textsf{ctx} at least 100 blocks old, and
        \item \ul{$\pkec(\der(\xsk_{par},P)) = \pk$} 
    \end{itemize}
    If \textsf{tx} does not satisfy both conditions, it is considered invalid.
    \item (Miners include reveal) The miners include \textsf{tx} in their block. The fee for \textsf{tx} goes to the miner who included it first, subject to the two conditions above.
    \item (Receiver waits for confirmation) The receiver considers the transaction completed once \textsf{tx} accumulated six confirmations.
\end{enumerate}

\begin{remark}
    Note that when spending a \derived \utxo, the pre-quantum signature is not actually required for validation, as suffices to verify $\pkec((\xsk_{par})_{(i,p)}) = \pk$. We include the signature in \textsf{tx} so that restrictive FawkesCoin could be implemented as a soft-fork (see \ifnum\masterthesis=0{\cref{par:softfork}}\else{\cref{ssec:perlim_bitcoin_forks}}\fi). If FawkesCoin is implemented in a hard-fork, the signature could be removed to conserve space.
\end{remark}

\paragraph*{\Hashed/\Derived Confirmation Times}\label{cell:cfc_hdconf}\label{cell:nfc_hdconf}\label{cell:pfc_hdconf} The confirmation time is the length of a single commit-wait-reveal cycle (which we propose setting to 100 blocks in \cref{ssec:wait_time}), followed by waiting the current number of confirmation blocks once the reveal message is included (e.g., six blocks in Bitcoin).

\paragraph*{Transaction Size Increase}\label{cell:nfc_tx}\label{cell:cfc_tx}\label{cell:pfc_tx} The component of the transaction dominating its size is the signature of the post-quantum \utxo, which is about an order of magnitude larger than transaction spending a pre-quantum \utxo\ifnum\acmtops=0 (see \cref{ssec:nist-qpds})\fi. However, a single post-quantum transaction can be used to spend several pre-quantum \utxos (either by committing to several transaction, or by committing to a transaction spending several \utxos), making the size increase additive.

\paragraph*{Spendability Threshold}\label{cell:nfc_spendthres}\label{cell:cfc_spendthres}\label{cell:pfc_spendthres} Since the fee is paid using the post-quantum \utxo, the only obstruction to spendability is if the \utxo is less valuable then the fees required to post it. Disregarding the post-quantum \utxo (see p. \pageref{col:spend_thres}), the size of a FawkesCoin transaction is slightly larger than spending it usually (exactly by how much depends on whether the transaction spends a \hashed or \derived \utxo, and on whether we are in a soft- or hard-fork), whereby the spendability threshold stays the same up to a factor close to $1$.

\paragraph*{Requires a Post-Quantum \utxo} \label{cell:fc_pqutxo} A post-quantum \utxo is required in order to pay the transaction fee.

\paragraph*{Can be Implemented in Soft-Fork}\label{cell:nfc_softfork}\label{cell:cfc_softfork} Commitments are ordinary post-quantum transactions whereas reveals are ordinary pre-quantum transactions, whereby they would also be accepted by outdated nodes. However, a hard-fork implementation could conserve block space as the pre-quantum signature can be removed from \derived \utxos.

\subsection{Unrestrictive FawkesCoin}\label{ssec:nfc}

\emph{Unrestrictive} mode extends the functionality of restrictive mode by providing a way to spend a \naked \utxo. \Naked \utxos are spent like \hashed \utxos, except the user must leave a \emph{deposit} as valuable as the \utxo they are trying to spend. After the transaction is revealed, it goes into a long \emph{challenge period} during which any user can post a \emph{fraud proof} by spending that same transaction in FawkesCoin \emph{using the derivation seed}. If a fraud proof is posted, the revealed transaction is considered invalid, and the deposit goes to the address of the user who posted the fraud proof.

\paragraph*{Spending \Naked \utxos}\label{cell:nfc_naked}\label{cell:pfc_naked}The protocol for spending a \naked \utxo $u$ is as follows: 
\begin{enumerate}
    \item (Spender creates transaction) The spender creates a transaction \textsf{tx} spending $u$, which also spends a post-quantum \utxo $d$ called the \emph{deposit}. The value of $d$ must be at least the value of $u$ plus the fees paid for including \textsf{tx}.
    \item (Spender creates commitment) The spender creates a transaction having $\HH(\textsf{tx})$ as its payload, which pays fees using a post-quantum $\utxo$. We refer to this committing transaction as $\textsf{ctx}$.
    \item (Spender posts commitment) The spender posts \textsf{ctx} to the mempool.
    \item (Miners include commitment) The miners include \textsf{ctx} in their blocks, the fee for \textsf{ctx} goes to the miner who included it first, as usual.
    \item (Spender waits) Once \textsf{ctx} is included in the blockchain, the spender waits for 100 blocks to be mined above it (see \cref{ssec:wait_time}).
    \item (Spender posts reveal) The spender posts \textsf{tx} to the mempool.
    \item (Miners validate) The miners verify that $\mathsf{\HH(tx)}$ appears in the payload of a committing transaction \textsf{ctx} at least 100 blocks old.
    If \textsf{tx} does not satisfy both conditions, it is considered invalid.
    \item (Miners include reveal) The miners include \textsf{tx} in their block. The fee for \textsf{tx} goes to the miner who included it first, subject to the condition above.
    \item (Challenge period start) The transaction \textsf{tx} enters a \emph{challenge period} of one year.
    \item (Owner can post fraud proof) During that period, any user holding the derivation seed for the address of $u$ can post a \emph{fraud proof}: a transaction \textsf{fp} spent using the FawkesCoin protocol for \derived \utxos spending the \utxo $u$. If a fraud proof is posted, the transaction \textsf{tx} is considered invalid. The miner is paid the fee they were supposed to for posting \textsf{tx} from the deposit, and the rest of the deposit goes to same address \textsf{fp} is spent to.
    \item (Receiver waits for confirmation) If no fraud proof was posted, the receiver consider the transaction \textsf{tx} completed six blocks after the challenge period is over.
\end{enumerate}

\begin{remark}
    For simplicity, we presented the method in a manner that requires a hard-fork: if a fraud proof is posted, then the deposit needs to be spent to a different address than the one specified in its output. This could be rectified by instead requiring that the deposit output is spent to an anyone-can-spend address, as explained in \ifnum\masterthesis=0{\cref{par:softfork}}\else{\cref{ssec:perlim_bitcoin_forks}}\fi.
\end{remark}

\paragraph*{Naked Confirmation Times}\label{cell:nfc_nconf} \label{cell:pfc_nconf} The challenge period is very large to allow honest users ample time to notice if anyone attempted to steal their transactions and respond accordingly (see \cref{ssec:dispute}).

\paragraph*{Offline Users Risked}\label{cell:nfc_offline}\label{cell:pfc_offline} Users holding naked derived \utxos have to monitor the network for attempts to spend their money. In particular, if an adversary knows of a user that would not be online for a duration longer than a dispute period (say, if he's denied modern society and became a recluse, joined an Amish community, or is just forever trapped beyond the event horizon of a black hole), they can safely steal their naked derived \utxos.

\subsection{Permissive FawkesCoin}\label{ssec:pfc}

\emph{Permissive mode} extends the functionality of unrestrictive mode by providing a way to spend \emph{any} leaked \utxo without providing any proof of ownership. That is, we allow anyone to spend any leaked \utxo. However, to spend a leaked \utxo without providing a valid signature or a derivation seed, the spender must provide a deposit and wait for a lengthy challenge period, exactly like \naked \utxos are spent in unrestrictive mode. The deposit acts to deter adversaries from attempting to steal \utxos, as they can not know whether the owner of the \utxo they are trying to steal can produce a fraud proof.

\paragraph*{Spending Lost \utxo}\label{cell:pfc_lost} The protocol for spending a \lost \utxo $u$ is as follows: 
\begin{enumerate}
    \item (Spender creates transaction) The spender creates a transaction \textsf{tx} spending $u$, which also spends a post-quantum \utxo $d$ called the \emph{deposit}. \emph{The \utxo need not include a signature on $u$}. The value of $d$ must be at least the value of $u$ plus the fees paid for including \textsf{tx}.
    \item (Spender creates commitment) The spender creates a transaction having $\HH(\textsf{tx})$ as its payload, which  pays fees using a post-quantum $\utxo$. We refer to this committing transaction as $\textsf{ctx}$.
    \item (Spender posts commitment) The spender posts \textsf{ctx} to the mempool.
    \item (Miners include commitment) The miners include \textsf{ctx} in their blocks, the fee for \textsf{ctx} goes to the miner who included it first, as usual.
    \item (Spender waits) Once \textsf{ctx} is included in the blockchain, the spender waits for 100 blocks to be mined above it (see \cref{ssec:wait_time}).
    \item (Spender posts reveal) The spender posts \textsf{tx} to the mempool.
    \item (Miners validate) The miners verify that $\mathsf{\HH(tx)}$ appears in the payload of a committing transaction \textsf{ctx} at least 100 blocks old.
    If \textsf{tx} does not satisfy both conditions, it is considered invalid.
    \item (Miners include reveal) The miners include \textsf{tx} in their block. The fee for \textsf{tx} goes to the miner who included it first, subject to the condition.
    \item (Challenge period start) The transaction \textsf{tx} enters a \emph{challenge period} of one year.
    \item (Owner can post fraud proof) During that period, any user holding the derivation seed for the address of $u$ can post a \emph{fraud proof}: a transaction \textsf{fp} spent using the FawkesCoin protocol for \derived \utxos spending the \utxo $u$. If a fraud proof is posted, the transaction \textsf{tx} is considered invalid. The miner is paid the fee they were supposed to for posting \textsf{tx} from $d$, and the remaining coin in $d$ goes to the same address \textsf{fp} is spent to.
    \item (Receiver waits for confirmation) If no fraud poof was posted, the receiver consider the transaction \textsf{tx} completed six blocks after the challenge period is over.
\end{enumerate}

\begin{remark}\label{rem:deposit}
    In the specification above we chose the value of the deposit $d$ to be as high as the value of the spent \utxo $u$ (plus the fees for including $\textsf{tx}$). This implies that an adversary has a negative expected profit from such an attack as long as they can't make an educated guess that a particular \utxo is \lost with a probability higher than $1/2$. It might be the case that such educated guesses are feasible, and if so the value of the deposit should be increased appropriately. Setting the value of the deposit to $\frac{p}{1-p}$ times the value of the spent output implies that attempting to steal a \utxo using permissive FawkesCoin has negative expected profit, unless knows that \utxo is \lost with probability at least $p$. Increasing $p$ makes such attacks less feasible at the cost of making spending \lost \utxos less affordable. 
\end{remark}

\paragraph*{\Lost Confirmation Times}\label{cell:pfc_lconf} Same as \naked confirmation times.

\paragraph*{Requires hard-fork}\label{cell:pfc_softfork} This method allows spending \utxos without producing a signature, which can not be implemented in a soft-fork (see \ifnum\masterthesis=0{\cref{par:softfork}}\else{\cref{ssec:perlim_bitcoin_forks}}\fi).

\subsection{Salvaged Coins and Loot} \label{ssec:fc_loot}


In the current state, all but post-quantum and \hashed \utxos can be stolen by a quantum adversary (and \hashed \utxos cannot be spent cautiously).
After FawkesCoin kicks in, all \hashed and \derived \utxos can be spent cautiously. The fate of a \naked, \lost, or \stealable \utxo depends on the chosen mode of FawkesCoin, where each such \utxo can be in one of the following statuses:

\begin{itemize}
    \item Burnt: no longer spendable by anyone.
    \item Spendable: spendable by the owner. A (quantum) adversary can successfully steal the \utxo, but would not do that due to the risk of losing their deposit. Hence, spendable \utxos are less secure than cautiously spendable \utxos.
    \item Loot: the \utxo will be stolen.
    \item Quantum loot: the \utxo will be stolen by a quantum adversary.
\end{itemize}

\cref{table:salvage} summarizes the status \naked \utxos according to the FawkesCoin variant.

\setlength{\tabcolsep}{6pt}
\begin{table}[!htb]
\begin{adjustbox}{max width=1.1\textwidth,center}

\begin{tabular}{l|ccc}
 & \textbf{\Naked} & \textbf{\Lost} & \textbf{\Stealable} \\ \hline
\textbf{Restrictive}   & Burnt     & Burnt    & Burnt \\
\textbf{Unrestrictive} & Spendable      & Unspendable     & Quantum Loot \\
\textbf{Permissive}    & Spendable  & Spendable & Loot
\end{tabular}

\end{adjustbox}

\caption{
    The status of different types of \naked \utxos according to the variant of FawkesCoin used (see \cref{ssec:fc_loot}).
}
\label{table:salvage}
\end{table}

One of the purposes of the canary mechanism is to decrease the set of spendable \utxos, and the purpose of the deposit and fraud-proof mechanism is to create \emph{risk} for adversaries attempting to steal them:

\begin{itemize}
    \item As long as the protocol did not kick in, an owner of a \stealable \utxo who knows the secret-key can migrate their funds to a \utxo that is not \stealable. They could be incentivized to do so by providing sufficient heads-up, e.g. in the form of a killed canary.
    \item \Lost \utxos are protected by the observation that an adversary cannot tell them apart from any other \naked \utxo, so attempting to steal them incurs a risk of losing their deposit.
\end{itemize}

To conclude, the different modes of Fawkes coin provide the following tradeoffs:
\begin{itemize}
    \item Unrestrictive FawkesCoin allows spending \naked \utxos, but \stealable \utxos will be stolen by quantum adversaries.
    \item Permissive FawkesCoin further allows spending \lost \utxos, but implementing the permissive mode requires a hard-fork whereas the unrestrictive mode only requires a soft-fork.
\end{itemize}


We note that while \stealable \utxos become loot in unrestrictive and permissive mode, this loot can only be safely exploitable if the adversary has great confidence that the transaction is indeed stealable. This is augmented by the fact that \utxos can be falsely announced \stealable to bait the adversary to lose their deposit.

\subsection{The Wait Time}\label{ssec:wait_time}

In the specifications above, we used a default value of 100 blocks for the length of a commit-wait-reveal cycle. Longer waiting times make the economy more secure but less usable.

\ifnum\acmtops=0
The main consideration when choosing the wait time is \emph{forking security}: the wait time should be longer than any plausible reorganization of the blockchain. Recall that the discussion in \ifnum\masterthesis=0{p. \pageref{par:cooldown} }\else{\cref{ssec:perlim_bitcoin_dust} }\fi points out that coinbase transactions require a considerably longer confirmation than a standard transaction. The justification for that is that if a reorganization reverts a coinbase transaction, then all transactions which spend money minted by that coinbase transaction become invalid too. By setting the cooldown time of a coinbase transaction high enough to guarantee that a coinbase transaction is \emph{never} reverted, it is guaranteed that no reorganization can make a valid transaction invalid (as long as the owners of the \utxos used as input therein do not attempt double spending them).

Our considerations are more similar to coinbase transactions. Once a reveal is posted, an attacker could use it to post competing commit and reveal messages. If the reorg is deep enough to revert the honest \emph{commit} message, then it is possible that the adversarial commit and reveal will be included itself, making the honest transaction invalid.
    
Thus, taking a cue from Bitcoin, we propose a 100 blocks waiting time.
\else
A similar consideration appears in Bitcoin when considering the fact that the reversal of coinbase transactions is more detrimental than the reversal of standard transaction (see the full version \cite{SW23} for a discussion). As a precaution, the Bitcoin protocol imposes a 100 blocks cooldown on coinbase transaction. Taking a cue from Bitcoin, we propose a wait time of 100 blocks.
\fi

One might argue that 100 waiting blocks is too long as it slows down the spending time too much. It is also arguable that this concern becomes more pressing as it applies to many users (whereas coinbase cooldown only applies to miners). We have several responses to this objection:
    
\begin{itemize}
        \item It is better to err on the side of caution. The consequence of choosing the wait time too short is that spending pre-quantum \utxos becomes insecure, and quantum-cautious spending becomes \emph{impossible}. This is arguably more detrimental to the economy than a predictable slowdown.
        \item The effects of quantum mining on fork rates and reorganization depths are still not yet understood. There is evidence that quantum mining increases the orphan rate \cite{LRS19,Sat20}, and it could be the case that there are more consequences we are not yet aware of.
        \item Even if no forks occur, there is still the risk that a quantum attacker will somehow manage to postpone the inclusion of the reveal message in the blockchain long enough for her to post a competing commit \emph{and} reveal. Such scenarios could also occur e.g. if congestion causes transactions to stall in the mempool for extended periods of time.
        \item Users who prefer short waiting times over small transaction sizes (that is, prefer paying faster over paying less) could use post-quantum signatures at confirmation speeds similar to today.
        \item It is possible to allow users to set their own waiting time when creating the \utxo. It is important to impose a \emph{default} waiting time to prevent front-running attacks. However, there is no harm in allowing users to set a lower waiting time if they choose (conversely, users who feel that the 100 blocks waiting time is insufficient could specify higher waiting times).
\end{itemize}

\subsection{Length of the Challenge Period}\label{ssec:dispute}

We propose a challenge period of \emph{one year}. We argue that the challenge period should be very long for two reasons:
\begin{itemize}
    \item We want to prevent a situation where a \utxo was successfully stolen because the user did not notice an attempted steal in time or had no time to gather the resources required to post a proof of fraud. A year-long period would give users ample time to notice the attempt and make preparations.
    \item Any \utxo that could be spent in restrictive mode could still be spent in the other modes without any challenge period. Even if an adversary is trying to steal the \utxo, proving fraud is just done by regularly spending the \utxo. The only effect of allowing permissive mode on leaked \derived \utxo is that a steal attempt might \emph{force} a user to spend a \utxo earlier than they desired to, and in that case, a long period allows the user more flexibility choosing when to spend it.
\end{itemize}


\section{Our Proposed Protocol}\label{intro:proposal}

In this section, we review considerations crucial for combining FawkesCoin, lifted FawkesCoin and quantum canaries into a cohesive and concrete protocol which arguably answers the two questions posed in the introduction.

\subsection{Leaked \textsf{UTXO}s}\label{ssec:leaked}

So far, we have made a distinction between \hashed and non-hashed \utxos (which we further divided into several sets). In particular, some quantum-cautious spending methods treat \hashed \utxos differently than other \utxos.

The problem with this is that the operation of any spending method should only depend on data available on the blockchain. However, it is impossible to read off the fact that a \utxo is \hashed from the blockchain (as its public-key may have been leaked in other ways).

We thus approximate the set of non-\hashed \utxo by using the set of \emph{leaked} \utxos. A \utxo is considered \emph{leaked} if its public-key appears anywhere \emph{on the blockchain}. For the rest of the section, we use the term \hashed \utxo to mean a \utxo which is not leaked.

To account for the gap between the definitions, we propose a \emph{good Samaritan} mechanism which allows users to post public-keys not already on the blockchain to the mempool, and Samaritan miners to include them on the blockchain, thus transforming non-leaked \utxos into leaked ones. The miners obtain no fees for including the public-keys. By allocating 1 Kilobyte for good Samaritan reports, each block could contain about 30 reported addresses. The increase in both block size and the computational overhead of verifying the block is negligible.

Note that the good Samaritan mechanism should only be available in the pre-quantum era, as a quantum adversary could listen for reported public-keys and attempt to steal them.

The purpose of the good Samaritan mechanism is twofold. First, it allows users to report public-keys they encountered outside the blockchain. Second, it allows users who lost their secret-key but still hold their public-key to report it, making their \utxo \lost so they could spend it using permissive FawkesCoin when the quantum era arrives (see \cref{ssec:nfc}).

\secorssec{Compatibility of Methods}\label{intro:compat}

The properties in \cref{table:intro} only hold when each method is implemented separately. Carelessly implementing several methods together might compromise their security. For example, allowing lifted spending and FawkesCoin simultaneously renders FawkesCoin insecure: a quantum attacker who sees the revealed FawkesCoin transaction in the mempool can try to front-run by creating a competing lifted signature transaction.

Another example is when using FawkesCoin and Lifted FawkesCoin together. Spending a \derived \utxo in Lifted FawkesCoin requires revealing the master secret-key $\msk$. An adversary listening to the mempool can use $\msk$ to derive parent-keys she could use to spend any \utxo in the wallet using FawkesCoin. Hence, if FawkesCoin and Lifted FawkesCoin are implemented together, FawkesCoin must be modified such that the seed from which $\msk$ was derived must be posted instead of the parent key, which has the disadvantage of requiring the user to commit to their \emph{entire wallet} before they can start revealing it.

For that purpose, we propose \emph{not} to implement lifted spending directly, and to segregate FawkesCoin and Lifted FawkesCoin into \emph{epochs}, where in each epoch only one of the solutions is available.

We further require users of lifted FawkesCoin to complete a commit-wait-reveal cycle within the epoch. This acts to prevent users from committing to a \utxo during a lifted FawkesCoin epoch, and then spending it during a non-lifted FawkesCoin epoch, denying compensation form the miner who posted their lifted FawkesCoin commitment. We recall that in non-lifted FawkesCoin the user pays for the commitment as it is posted, regardless of whether the transaction is eventually revealed, so requiring that non-lifted FawkesCoin transactions are committed and revealed in the same epoch is unnecessary.

As the example above shows, segregating the solutions into epochs is insufficient, and some more steps are required for safely combining them. We discuss this further in \cref{ssec:combining}.

We stress that as long as any method that requires users to be online is implemented, this requirement holds even if other methods without this requirement are implemented.

\subsection{Combining FawkesCoin and Lifted FawkesCoin}\label{ssec:combining}

As will be briefly touched upon in \cref{intro:compat}, allowing FawkesCoin and Lifted FawkesCoin to operate in tandem compromises their security. To overcome this problem, we suggest segregating the operation of these solutions into epochs, where we expect Lifted FawkesCoin users to complete a commit-wait-reveal cycle within an epoch.

However, this still allows a vulnerability when spending derived transactions using Lifted FawkesCoin, since such a spend requires exposing $\msk$, which could then be used to spend any other transaction created by the same wallet using non-Lifted FawkesCoin.

The most immediate solution to that is to modify non-Lifted FawkesCoin such that spending a \derived \utxo requires committing and revealing the \emph{seed}. But this has the unfortunate implication that a user must commit to their entire wallet before they could reveal it.

To avoid this issue, we keep a \emph{registry} of known keys in the hierarchy. Let $\xsk_{par}$ be a key used to spend a derived transaction (either as a parent key in non-Lifted FawkesCoin or as $\msk$ in Lifted FawkesCoin). It is considered invalid to use any key derivable from $\xsk_{par}$ to spend a \derived \utxo in non-Lifted FawkesCoin.

The problem is that it is impossible to exhaust the entire space of derivable keys. Typically, hardware wallets use a very limited space of addresses. We can thus agree on a reasonably small set of \emph{regular} derivation paths $\sD$ assuming that most derived addresses of the blockchain are of the form $\msk_{P}$ where $P\in \sD$ and $\msk$ is a master secret-key of an existing HD wallet.

To account for users who might have a \utxo whose derivation path is not in $\sD$, we can allow users to post messages of the form $(\HH(\xsk),P_1,\ldots,P_k)$ to the blockchain. Once the key $\xsk$ is revealed, the paths $P_1,\ldots,P_k$ will be checked along with the regular derivation paths. Let $\sD_\xsk$ contain the set $\sD$ along with any irregular path which appeared alongside $\HH(\xsk)$ on the blockchain, and let $\sK_\xsk$ contain all keys of the form $\xsk_{P'}$ where $P'$ is a prefix of some $P \in \sD_\xsk$. The registry will store a lost of tuples of the form $(\sK_\xsk,b_\xsk)$ where $b_\xsk$ is the height of the block where $\xsk$ was included. Any FawkesCoin transaction spending a \derived \utxo using the key $\xsk_{par}$ will be considered invalid if there is some $\xsk$ such that $\xsk_{par}\in \sK_\xsk$ and the transaction was \emph{committed} in a block whose height is at least $b_\xsk$.

In addition, any \utxo whose $\pk$ is in $\sK_\xsk$ for some $\xsk$ is considered leaked, and cannot be spent as if it is hashed.

\secorssec{The Final Protocol}

\ifnum\masterthesis=0{The purpose of this section is to propose a concrete solution to questions 1 and 2 above by combining the methods described above.}\else{We now combine all the ingredients into one concrete protocol. For concreteness, our proposal contains some "magic numbers" such as the length of the delay period for FawkesCoin, the funding of the canary, the length of the dispute period, the delay between claiming the canary and commencing the quantum era, etc. We have done our best to pick reasonable values and justify their magnitude (based on the magic numbers used in Bitcoin, estimations of the size of the \utxo set and post-quantum signatures, and so on). We note that all these values could be easily tuned, as they only mildly affect the functionality and security of the network.}\fi

We propose the following:
\begin{itemize}
    \item Set up a quantum canary (see \ifnum\masterthesis=0{\ifnum\acmtops=0\los{\cref{sec:quantum_canaries}}{\cref{intro:canaries}}\else \cref{intro:canaries}\fi}\else{\cref{sec:procrast_canary}}\fi) with a bounty of 20,000 BTC. The bounty should be freshly minted for that purpose\footnote{We stress once again that the quantum canaries could be replaced with any other method to determine when the quantum era has started. Doing so does not affect the rest of the specification.}.
    \item Define the \emph{quantum era} to the period starting 8,000 blocks (about two months) after the canary is killed.
    \item Once the quantum era starts:
    \begin{itemize}
        \item Activate FawkesCoin (see \los{\cref{sec:fawkes}}{\cref{row:fc}}) and Lifted FawkesCoin (see \los{\cref{sec:lfc}}{\cref{row:lfs}}). Set up a rotation (see \cref{intro:compat}) such that the FawkesCoin epoch is 1,900 blocks long and the Lifted FawkesCoin epoch is 500 blocks long.
        \item Directly spending pre-quantum \utxos becomes prohibited. That is, FawkesCoin and Lifted FawkesCoin become the only methods for spending pre-quantum \utxos.
        \item The mode of operation for FawkesCoin is restrictive for any \naked \utxo whose \emph{address} was posted to the blockchain prior to 2013, and permissive for the rest of the \utxos. 
        \end{itemize}
\end{itemize}

A core component in our solution is Lifted FawkesCoin, which requires a hard-fork \los{ (see \ifnum\masterthesis=0{\cref{par:softfork}}\else{\cref{ssec:perlim_bitcoin_forks}}\fi)}{}. The hard-fork could be implemented at any stage and does not require waiting for the quantum era.\ifnum\acmtops=0 \los{ A hard-fork is also required to implement permissive FawkesCoin, and makes it easier to set up the bounty for the canary (see \cref{ssec:canary_incent}).}{}\fi

Once the quantum era commences, all \naked \utxos whose public-keys leaked before 2013 would be burned, unless they were spent before the quantum era. Hence, the number of blocks we should wait for after the canary has been killed before the quantum era should be sufficient for all such \utxos to be migrated. On the other hand, we do not want to wait too long since longer waiting times increase the risk that the quantum adversary will scale sufficiently to loot pre-quantum \utxos before the quantum era starts. At the start of 2013, the \emph{entire} \utxo set included about 4 million \utxos. We hence use 4 million as a rough upper bound on the number of \utxos at risk of being burned. \ifnum\acmtops=0 Assuming about half of the block space is spent on signatures (see \ifnum\masterthesis=0{\cref{ssec:nist-qpds}}\else{\cref{ssec:perlim_bitcoin_pqsig}}\fi), spending 4 million \utxos should require about 800 blocks. \else In the full version \cite{SW23} we argue that currently about half of the block space is used for signature, whereby spending 4 million \utxos should require about 800 blocks. \fi However, we should not spend the waiting time too close to 800 blocks since 1. we should account for a possible increase in the number of \utxos whose public-keys leaked before 2013, and 2. it is plausible that once the canary is killed, there will be a rush to spend pre-quantum \utxos (including those who are not at the risk of burning) to post-quantum addresses, and there should be sufficient leeway for owners of soon-to-be-burned \utxos to spend them in this scenario. We thus suggest a waiting period of 8,000 blocks (or about two months) as, on the one hand, it seems unreasonable that quantum computers will increase in scale during this period, and on the other, it gives owners of soon-to-be-burned \utxos ample time to spend their transactions.

The 1,900 block epoch for FawkesCoin gives a 1,800 block window for posting commitments. Commitments can not be posted during the last 100 blocks, as this will not leave enough blocks in the epoch to complete the required waiting time.

For similar reasons, the last 300 blocks of a Lifted FawkesCoin epoch could not contain reveals since we have a 100 blocks wait period, 100 blocks during which the user can post a reveal, and 100 blocks during which the miner can post a proof of ownership in case the user failed to post a reveal. Thus, a 500 blocks Lifted FawkesCoin period gives a 200 block window for posting commitments.

We propose implementing FawkesCoin along with Lifted FawkesCoin, with the intention that a user with no access to post-quantum \utxos could use Lifted FawkesCoin initially to migrate pre-quantum \utxos to post-quantum \utxos, and then use these post-quantum \utxos to pay the FawkesCoin fees in future transactions. We thus expect that users will gradually shift from using Lifted FawkesCoin to using non-Lifted FawkesCoin.

We do not recommend enabling permissive mode before the quantum era since that would require users to be online\los{ (see \cref{ssec:nfc})}{}. We stress that permissive (or even unrestrictive) mode should not be used at all for \utxos prior to 2013 since HD wallets were only introduced in 2013, so \emph{all} \naked \utxos prior to 2013 are known to be non-derived and are thus \stealable.

Enabling unrestrictive mode in the quantum era for \utxos made after 2013 is desirable, as it increases the set of spendable pre-quantum \utxos while not increasing the available quantum loot\los{ (see \cref{ssec:fc_loot})}{}. Since unrestrictive mode already requires users to be online and Lifted FawkesCoin already requires a hard-fork, there is no downside to enabling permissive mode.

Note that we do \emph{not} recommend directly implementing lifted spending, as this method does not provide a meaningful benefit over Lifted FawkesCoin, and is incompatible with non-Lifted FawkesCoin (see \cref{intro:compat}).

\ifnum\masterthesis=0{
\begin{remark}\label{rem:magic}
    For concreteness, our proposal contains some "magic numbers" such as the length of the delay period for FawkesCoin, the funding of the canary, the length of the dispute period, the delay between claiming the canary and commencing the quantum era, etc. We have done our best to pick reasonable values and justify their magnitude (based on the magic numbers used in Bitcoin, estimations of the size of the \utxo set and post-quantum signatures, and so on). We note that all these values could be easily tuned, as they only mildly affect the functionality and security of the network.
\end{remark}}\fi


\ifnum\lipics=0
\ifnum\acmtops=0
\section{Open Questions and Further Research}\label{sec:disc}
\begin{itemize}
    \item \textbf{Reducing the size of lifted signatures}\quad The most pressing issue is the size of lifted signatures, discussed in \cref{ssec:liftedsize}. The size of the signature directly determines the spendability threshold one using Lifted FawkesCoin. Estimating and optimizing the sizes of the signatures we use is an important next step toward implementing our solution. 
    
    \item \textbf{Seed-lifting without exposing $\msk$} \quad The seed-lifted scheme we present in \cref{ssec:seedlift} requires exposing the master secret-key of the HD wallet. While this requirement does not affect the security of our solution, it does cause some inconveniences, so devising a method for cautiously spending \derived \utxos without exposing $\msk$ might be desirable. There are two main drawbacks to exposing $\msk$. First, exposing $\msk$ implies that once a single commitment has been revealed, the entire wallet must be spent using lifted FawkesCoin (see \cref{ssec:combining}). Another concern is that HD wallets use the same seed to derive keys to many different cryptocurrencies, so carelessly exposing $\msk$ on one currency might compromise \utxos from different currencies whose address was derived from the same seed.
    
    The most direct approach to spend a \derived \utxo with derivation path $P$ is to instantiate $\picnic$ with the function $\der(\kdf(\cdot),P)$. However, this requires $\picnic$ to be secure when using the same secret-key in different instantiation. It is unclear whether $\picnic$ is secure against such attacks, though it is very easy to modify $\picnic$ such that it remains $\seufcma$ secure, but becomes completely broken against such attacks (e.g. by modifying the signature to contain the $i$th bit of the secret-key, where $(i,s)$ is the first step in $P$). Hence, making this approach secure requires introducing a formal security notion that prohibits such attacks, and proving that $\picnic$ (or some modification thereof) satisfies this stronger form of security. We point out that another drawback of this approach is that it greatly increases signature sizes.

    We leave the problem of removing the need to expose $\msk$, either following the approach above or by coming up with a different solution, to future research.

    \item \textbf{Instantiating the canary}\quad The discussion in \cref{ssec:canarypuzzle} suggests that the canary puzzle should be forging a signature of an \ecdsa scheme instantiated with an elliptic curve similar to \secp. However, it still remains to choose a particular curve, and provide a way to sample a nothing-up-my-sleeve public-key for that curve.
    \nextver{\item \textbf{Reducing lifted signature sizes}\quad All of our lifted spending methods currently produce prohibitively large signatures. We believe it is possible to optimize the \picnic scheme to our particular choice of a one-way function similarly to how it was optimized for LowMC in \cite{KZ20}.}
    \item \textbf{Further analysis of canary entities}\quad 
    Our game theoretic analysis in \cref{ssec:canarygame} assumes the parties have perfect information about their advesary's capabilities. One step towards a more realistic model is to consider the game theory of canaries in the setting of imperfect information.
\end{itemize}
\fi
\fi

\ifnum\shortver=0



\fi


\ifnum\anonymous=0
\fi
\ifnum\sigconf=1
    \bibliographystyle{ACM-Reference-Format}
\else
    \ifnum\lipics=1
        \bibliographystyle{plainurl}
    \else
        \bibliographystyle{alphaabbrurldoieprint}
    \fi
\fi

\ifnum\lipics=0
    {\footnotesize\bibliography{main}}
\else
    \bibliography{main_lipi}
\fi

\ifnum\shortver=1
    \appendix

\fi
\ifnum\shownomenclature=1
\printnomenclature[1in]
\fi

\ifnum\acmtops=0
\appendix
\section{Preliminaries and Definitions}\label{sec:pre}

\subsection{Digital Signatures}\label{prelim:ds}

A \emph{digital signature scheme} is a cryptographic primitive that allows users to sign messages such that only they can sign the message, but anyone can verify the authenticity of the message. In digital signature schemes, the signer creates a \emph{secret signature key} that can be used to sign messages and a \emph{public-key} that can be used to verify these messages.

\subsubsection{Security of Digital Signatures}\label{ssec:dssec}

The security notion we consider is \emph{existential unforgeability under chosen message attack} ($\mathsf{EUF\mhyphen CMA}$). In this notion, the adversary is given access to the public-key as well as access to a \emph{signing oracle} which she could use to sign any message she wants. Her task is to produce a signature for any message she did not use the oracle to sign. A scheme is \emph{\seufcma secure} if an efficient adversary cannot achieve this task with more than negligible probability (for a more formal treatment of the security of digital signatures we refer the reader to \cite{Gol04,KL14}). 

Extending the security notions of digital signatures to accommodate general quantum adversaries is not quite straightforward. The main difficulty is in the setting where the adversary has access to sign messages of their own, and they can sign a \emph{superposition} of messages. The established classical notions of security do not generalize directly to this setting, since the notion of "a message she did not use the oracle to sign" becomes ill-defined when discussing superimposed queries.

However, in the setting of cryptocurrencies, the signature oracle reflects the adversary's ability to read transactions signed by the same public-key off the blockchain. This ability is captured even when assuming that the adversary, albeit quantum, may only ask the oracle to sign classical messages. A scheme that remains secure against quantum adversaries with classical oracle access is called \emph{post-quantum} $\mathsf{EUF\mhyphen CMA}$ secure. For brevity, we use the term $\mathsf{EUF\mhyphen CMA}$ security to mean post-quantum $\mathsf{EUF\mhyphen CMA}$ security, unless stated otherwise.

\subsubsection{Elliptic Curve based Signatures}\label{sssec:ec}

The \ecdsa and Schnorr signature schemes are based on a particular mathematical object called an \emph{elliptic curve}. Bitcoin uses the \ecdsa signature scheme instantiated with the \secp curve, which is considered to admit 128-bit security against \emph{classical} attackers \cite{Bit22b}. The Taproot update \cite{WNT20} replaces \ecdsa with Schnorr signatures instantiated with the same \secp curve, which is also considered to admit 128-bit security \cite{WNR20}.

While the way signatures are produced and verified in both these schemes is beyond the scope of the current work, we do require some understanding of how keys are generated. Fortunately, the key generation procedure is the same for \ecdsa and Schnorr signatures (provided they were instantiated with the same curve and the same basis point).

Given an elliptic curve, one can give the element of the curve the structure of an abelian group known as the \emph{elliptic curve group}. Recall that given any group $\Gamma$, we can define for any element $G\in \Gamma$ and any natural number $n$ the group element $n\cdot G$ by defining $0\cdot G = 0_\Gamma$ (where $0_\Gamma$ is the identity element of $\Gamma$) and $n\cdot G = (n-1)\cdot G + G$. We let $o(\Gamma)$ be the order of the group and say that $G$ is a \emph{generator} of $\Gamma$ if $\{n\cdot G\mid n=0,\ldots,o(\Gamma)\}=\Gamma$. The $\secp$ curve has the property that it has a prime order, whereby \emph{any} element besides the identity is a generator.

Instantiating a signature scheme requires specifying not only an elliptic curve group $\Gamma$, but also a fixed generator $G$ of the group. Given the curve and generator, the secret-key is a uniformly random number $\sk\gets \{0,\ldots,o(\Gamma)\}$ and the matching public-key is $\pk = \sk\cdot G$. As we further discuss in \cref{ssec:canarypuzzle}, the considerations behind choosing the curve and generator are highly involved, and are relevant both to the security and the efficiency of the resulting scheme. 

To recover the public-key from a secret-key, one needs to be able to compute $\sk$ from $G$ and $\pk = \sk\cdot G$. Differently stated, one needs to compute the \emph{discrete logarithm} $\log_G(\pk)$. The \emph{hardness of discrete logarithm} is the assumption that solving such equations is infeasible in the average case. As we'll shortly review, quantum computers disobey this assumption, which is the root of the problem at hand.

In practice, elements of $\ecdsa$ are encoded into binary strings, for a group element $H\in\Gamma$, let $\widetilde{H}$ denote its binary representation. We define the function $\pkec(\sk) = \widetilde{\sk\cdot G}$, and note that it is an injective function with domain $\{0,\ldots,o(\Gamma)\}$. Furthermore, $\pkec$ is a \emph{group homomorphism} from $\ZZ_{o(\Gamma)}$ to $\Gamma$: $\pkec(k+l) = (k+l)\cdot G = k\cdot G + l\cdot G = \pkec(k) + \pkec(l)$.

In practice the secret key $\sk$ is not sampled uniformly from $\{0,\ldots,o(\Gamma)\}$ but from $\zo^{\log_{2}\left\lceil o(\Gamma)\right\rceil }$ (that is from strings of the minimal length required so that each possible number in $0,\ldots,o(\Gamma)$ has a unique representation. However, the elliptic curve group is typically chosen such that almost all keys have a unique representation. For example, in the \secp curve $\sk$ is given as a string of $256$ bits whereas there are more than $2^{256} - 2^{64}$ group elements, so the fraction of elements of $\ZZ_{o(g)}$ which admit two representations is smaller than $2^{-192}$. In practice, this is overlooked and a key is generated by choosing a uniformly random string. We also overlook this detail and assume that $\pkec$ is injective.

The binary representation $\tilde{H}$ of a curve point $H$ has several different formats with different properties. In practice in general, and Bitcoin in particular, different formats appear in different contexts. Most of the current work work is agnostic to how the string is actually formatted, and treat $\pkec$ as a bijection onto its domain. The only place where we appeal to the format is in \cref{ssec:hdwallet}, where we use the fact that in HD wallets following the specification of BIP-32 \cite{Wui13}, $\sk$ is encoded as a $32$ byte string whereas $\pk$ is encoded as a $33$ byte string (and in particular it is impossible that $\sk = \pkec(\sk)$).

\subsubsection{Quantum Attacks on \textsf{ECDSA}}

Shor's algorithm \cite{Sho94} was the first to break previously considered unbreakable cryptographic schemes by efficiently factoring large numbers. Shor's techniques were generalized in \cite{BL95} to solve the discrete logarithm problem in arbitrary groups, including elliptic groups, thus proving that \ecdsa and Schnorr signatures are \emph{not} post-quantum (i.e., the number of bits of security of these schemes over any group is polylogarithmic in the number of bits required to describe an arbitrary element of the group). We refer the reader to \cite{LK21} for a fairly comprehensive survey of the state-of-the-art algorithms optimized to solve the discrete logarithm problem in several types of elliptic curve groups\footnote{The "type" of a curve is determined by the field it is defined above, the form in which the parametric equation is given, and the types of coordinates used}.

When designing quantum canaries (see \cref{sec:quantum_canaries}), we are concerned with curves similar to the \secp curve used in Bitcoin. The \secp curve is over a prime field and is given in Weierstrass form. Fortunately, quantum algorithms for solving discrete logarithms over prime curves given in Weierstrass form were analyzed by several authors. The first concrete cryptanalysis is given by Roetteler et al. \cite{RNSL17}, which provide an explicit algorithm optimized to minimize the required number of logical qubits. The algorithm in \cite{RNSL17} is vastly improved by Häner et al. \cite{HJN+20}, who provide a more efficient algorithm and also consider space-time trade-offs. They provide algorithms minimizing either the number of logical qubits, the number of $T$ gates, or the depth of the circuit. Unfortunately, \cite{HJN+20} do not analyze any of these metrics as a function of the size of the field, but rather compute them for several known curves and compare the result with \cite{RNSL17}. 

\subsubsection{Post-Quantum Signature Schemes}
To address the quantum vulnerability of contemporary signature schemes, the United States National Institute of Standards and Technology (\textsf{NIST}) invited protocol designers to submit post-quantum signature schemes. In July 2022, \textsf{NIST} announced that three post-quantum signatures will be standardized: CRYSTALS-Dilithium \cite{DKL+18}, FALCON \cite{FHK+18}, and SPHINCS+ \cite{BHK+19}.

A survey of the chosen schemes and their performance is available in \cite{RWC+21}. The blog posts \cite{Wes21,TC22,Wig22} provide a more approachable (yet less formal) survey of the state-of-the-art post-quantum signature schemes in general, as well as the \textsf{NIST} standardized schemes.
\subsubsection{\textsf{Picnic} Signatures}\label{ssec:picnicov}

The principal tool we use for signature lifting is the \emph{\picnic} signature scheme of Chase et al. \cite{CDG+17}. The \picnic scheme signature scheme can be instantiated using \emph{any} post-quantum one-way function $f$  to obtain a signature scheme which is post-quantum \seufcma secure in the \textsf{QROM} (recall \cref{ssec:rom}). In the obtained scheme, a secret-key is a random point $x$ in the domain of $f$, and the corresponding public key is $f(x)$.

The \picnic scheme, instantiated with a particular block cipher called \emph{LowMC}, was submitted to NIST for standardization. Their designed prevailed the first two rounds of the competition. It was decided that \picnic will not proceed to the third round due to the novelty of techniques it applies compared with other candidates, however, it was decided to retain \picnic as an alternative candidate \cite{AASA+20}. This means that, while \picnic was not chosen to be NIST standardized, it successfully withstood heavy scrutiny.

\subsection{The Random Oracle Model}\label{ssec:rom}


When considering the security of constructions which involve explicit hash functions such as \sha or \shaf, it is difficult to make formal arguments about their security. A common way to overcome this is using the \emph{random oracle model}. The random oracle model assumes that there is an oracle $\HH$ accessible to anyone, such that $\HH(x) = f_{|x|}(x)$ where for each $n$ the function $f_n$ was uniformly sampled from the set of functions from $\zo^n$ to $\zo^\ell$ (where $\ell$ is fixed). The \emph{quantum} random oracle model further assumes that users have access to the unitary $\ket{x,y}\mapsto \ket{x,y\oplus \HH(x)}$. 

It is common practice in cryptography to analyze constructions assuming that random oracles are used instead of hash functions, and then instantiate them with hash functions considered secure for the application at hand. For a more extensive introduction to the random oracle model and the quantum random oracle model we refer the reader to \cite[Section~8.10.2]{BS20b} and \cite{BDF+11} respectively.

Random oracles have a particularly useful property that if applied to a "sufficiently random" distribution, the output is indistinguishable from uniformly random. The measure of randomness appropriate for our settings is that of \emph{guessing probability}:

\begin{definition}[Guessing probability]\label{defn:guessing}
    Let $\sS$ be a finite distribution, the \emph{guessing probability} of $\sS$ is $\gamma_\sS\max_s\PP\left[t=s\mid t\gets \sS\right]$. 
\end{definition}

That is, the guessing probability is the probability of the most likely output of sampling from $\sS$\footnote{Readers familiar with the min-entropy function $H_{min}$ might notice that $\gamma_\sS = 2^{-H_{\min}(\sS)}$}. It is called the \emph{guessing} probability as it is also the best probability with which an adversary can guess the outcome of sampling from $\sS$, as one can prove that always guessing the most likely outcome is an optimal strategy.

Note that a distribution can have a very small guessing probability and still be easily distinguishable from uniformly random. For example, consider the distribution over $(s,pw)$ where $s$ is a uniformly random string of length $n$ and $pw$ is some fixed string of length $k$. Then $\gamma_{(s,pw)} = 2^{-n}$, though it is easily distinguishable from a random string of length $n+k$.

\begin{proposition}[\cite{BS20b} Theorem 8.10, adapted]\label{prop:kdf}
    Let $\sS$ be a finite distribution. If $\HH$ is modeled as a random oracle and $\gamma_\sS = \negl$ then the distribution $\HH(\sS)$ is computationally indistinguishable from a random distribution.
\end{proposition}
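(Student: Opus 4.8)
The plan is to prove that for every efficient distinguisher $\sA$ with (quantum) oracle access to $\HH$ the ensembles $\{\HH(t):t\gets\sS\}$ and $\{u:u\gets\zo^\ell\}$ --- $\zo^\ell$ being the codomain of $\HH$ --- are computationally indistinguishable, and to observe that the only feature of $\sS$ used is $\gamma_\sS=\negl$; in particular the hash output length $\ell$ plays no role, since the argument is entirely about the unpredictability of the preimage $t$. This is \cite[Theorem~8.10]{BS20b} rephrased so that the governing quantity is the guessing probability rather than, say, statistical distance from uniform, and the example following \cref{defn:guessing} is exactly what makes that rephrasing non-trivial.

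In the classical random-oracle model the argument is elementary and I would present it outright. The string $\sA$ receives is uniform and independent of $t$ in both experiments, so up until $\sA$ first queries $\HH$ on $t$ its behaviour is independent of $t$; hence among the at most $q=\mathrm{poly}(\lambda)$ classical queries, each equals $t$ with probability at most $\max_x\PP_{t\gets\sS}[t=x]=\gamma_\sS$, and by a union bound $\sA$ queries $\HH$ on $t$ with probability at most $q\,\gamma_\sS=\negl$. Conditioned on that not happening, all of $\sA$'s oracle answers are at points $\neq t$, so its entire view is distributed exactly as in the ideal experiment; the distinguishing advantage is therefore at most $q\,\gamma_\sS$.

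The quantum random-oracle model uses the same idea routed through a reprogramming step. I would first replace the real experiment by the equivalent one that samples $u\gets\zo^\ell$ first, lets $\HH'$ be $\HH$ reprogrammed to take the value $u$ at $t$, and runs $\sA^{\HH'}(u)$; this is perfectly equivalent to the real experiment because $\HH(t)$ is itself uniform. The ideal experiment $\sA^{\HH}(u)$ gives $\sA$ the same independent uniform $u$ but the un-reprogrammed oracle, so the two differ only in that $\sA$'s oracle disagrees at the single point $t$. The one-way-to-hiding lemma (in the semi-classical Ambainis--Hamburg--Unruh form) then bounds the distinguishing advantage by roughly $\sqrt{q\,P_{\mathrm{find}}}$, where $P_{\mathrm{find}}$ is the probability that a semi-classical measurement of $\sA$'s queries in the ideal game lands on $t$; and in the ideal game $t$ is independent of $\sA$'s view, whence $P_{\mathrm{find}}\le q\,\gamma_\sS$. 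Altogether the advantage is $O(q\sqrt{\gamma_\sS})=\negl$.

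I expect the only delicate point --- the ``main obstacle'' in the quantum case --- to be the bookkeeping around the O2H step: one must evaluate $P_{\mathrm{find}}$ in the hybrid whose adversary view is independent of $t$ (so that the crude bound $q\,\gamma_\sS$ applies) rather than in the reprogrammed hybrid, whose oracle itself encodes $t$, and one must check that the hybrid chosen for O2H is statistically identical to the experiment it is meant to replace. Everything else is routine, and no structural property of $\sS$ beyond $\gamma_\sS=\negl$ is ever invoked --- which is exactly why the statement remains true for sources that are far from uniform.
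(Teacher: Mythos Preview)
The paper does not supply its own proof of this proposition: it is stated as an adaptation of \cite[Theorem~8.10]{BS20b} and used as a black box (e.g.\ inside the proof of \cref{prop:mdkdf} and in the hybrid argument for \cref{thm:seedlift}), with no accompanying proof environment. So there is nothing in the paper to compare your argument against line by line.

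That said, your sketch is the standard and correct way to establish the result. The classical-ROM paragraph is exactly the lazy-sampling/reprogramming argument underlying \cite[Theorem~8.10]{BS20b}: after replacing $\HH(t)$ by a fresh uniform $u$ and reprogramming, the adversary's view depends on $t$ only through a possible query at $t$, and since the input $u$ is independent of $t$ each of the $q$ queries hits $t$ with probability at most $\gamma_\sS$. The QROM paragraph is the right upgrade --- going through semi-classical O2H, evaluating $P_{\mathrm{find}}$ in the ideal (un-reprogrammed) hybrid where $t$ is genuinely independent of the view, and reading off an $O(q\sqrt{\gamma_\sS})$ advantage. Your flagged ``delicate point'' (which hybrid to measure $P_{\mathrm{find}}$ in) is indeed the only place one can slip, and you handle it correctly. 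One small remark: Boneh--Shoup's Theorem~8.10 is stated for classical adversaries, so the QROM extension you give is precisely the ``adapted'' part the paper alludes to but does not spell out.
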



Throughout this work, we use $\HH$ to represent the hash used in the construction we analyze (which is always either \sha or \shaf), and assume $\HH$ is modeled as a random oracle. In practice, \sha and \shaf are based on a construction called the Merkle–Damgård transform, which is known to be an unsuitable replacement for a random oracle in some particular settings. In \cref{app:mdrom} we expand on the properties of Merkle–Damgård and justify our modeling thereof as a random oracle.

\subsubsection{Merkle–Damgård as a Random Oracle}\label{app:mdrom}

In \cref{ssec:rom}, we stated that throughout this paper we replace the \sha and \shaf hashes with a random oracle \HH. More generally, these hashes are constructed by applying a technique called the \emph{Merkle–Damgård transform}, which transforms a collision-resistant compressing function $f$ (e.g., $f:\{0,1\}^{2n}\to\{0,1\}^n$) to a collision-resistant function $\HH_f$ with a variable input length.

The purpose of this appendix is to isolate the properties we require of $\HH$, and argue that they also hold for $\HH_f$, when $f$ is modeled as a random oracle.

The "bad" property of $\HH_f$ is that it admits a \emph{length extension attack}: given a hash $\HH_f(m)$ of some \emph{unknown} string $m$, and oracle access to $f$, an adversary can easily compute $\HH_f(m\|m')$ for a large variety of strings $m'$. One can easily abuse this property to construct a scheme that is secure in the \textsf{ROM} but becomes broken when $\HH_f$ replaces the oracle (see e.g. the "prepend the key" \textsf{MAC} construction in \cite[Section 8.7]{BS20b}).

Our analysis appeals to two properties of a random oracle: that it is collision-resistant and that the output of a "sufficiently random" input is indistinguishable from uniform.

The original purpose for the Merkle–Damgård transform was to provide collision resistance, and it is indeed known to hold (see \cite[Theorem~8.3]{BS20b}).

We assume the second property two cases: for key derivation from a seed (see \ifnum\masterthesis=0{\cref{ssec:kdf}}\else{\cref{ssec:prelim_other_kdf}}\fi) where it is known that the input starts with sufficiently many random bits, and when hashing a string sampled from a distribution known to be indistinguishable from uniform (typically generated in the first case). Hence, it suffices to prove the following:

\begin{proposition}\label{prop:mdkdf}
    Let $f:\zo^{2n}\to \zo^n$ be a random function. Let $\sS$ be a distribution of strings of length at least $n$ such that the distribution on the first $n$ bits is computationally indistinguishable from uniform, then $\HH_f(\sS)$ is computationally indistinguishable from uniform for an adversary with oracle access to $f$.
\end{proposition}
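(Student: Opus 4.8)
The plan is to exploit the iterated structure of the Merkle--Damg\r{a}rd transform together with the hypothesis that the leading $n$-bit block of the hashed string is (close to) uniform. Informally: the first call $f(\mathsf{IV}\|m_1)$ is an evaluation of a random function on a point with a high-entropy suffix, so its output is indistinguishable from fresh randomness to any adversary that never queries $f$ at that point; once that first internal value is ``fresh'', the same is true one step up the chain, and so on up to the final output.

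The first step is to reduce to the case where the leading block $m_1$ is \emph{truly} uniform rather than merely computationally indistinguishable from uniform. This is a two-hybrid argument: a distinguisher $D^f$ separating $\HH_f(\sS)$ from $\HH_f$ of the same distribution with its first block resampled uniformly can be wired, together with its $f$-oracle, into a distinguisher against the hypothesis on the first $n$ bits of $\sS$. The one point that must be checked is that this reduction can produce the \emph{tail} of the hashed string given its leading block; this holds in both application scenarios (in the seed-derivation case the tail is fixed and independent of the leading block, and in the re-hashing case the string has length exactly $n$ and there is no tail), so I would record it as a standing assumption on the sampler of $\sS$.

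The second step, now with $m_1$ uniform, is the standard lazy-sampling and ``identical-until-bad'' argument. Write the padded input as $m_1\|\cdots\|m_\ell$, put $h_0=\mathsf{IV}$, $h_i=f(h_{i-1}\|m_i)$, $y=h_\ell$, and call $p_i=h_{i-1}\|m_i$ the chain points. I would walk through hybrids $H_0,\dots,H_\ell$ in which $h_1,\dots,h_\ell$ are replaced, one at a time, by fresh uniform strings (reprogramming $f$ on the corresponding $p_i$). Consecutive hybrids differ only if $D$ queries $f$ at the $p_i$ being reprogrammed, but in the earlier hybrid that point carries a uniform component ($m_1$ when $i=1$; the already-reprogrammed $h_{i-1}$ when $i>1$) that is independent of $D$'s view, so each bad event has probability $O(q/2^n)$ for $q$ the query bound; absorbing the standard negligible terms that force the $p_i$ to be pairwise distinct, $H_\ell$ is exactly the game in which $D$ is given a uniform $y$ independent of its transcript. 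Composing with the first reduction yields negligible advantage.

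I expect the heart of the matter to be the very first bad-event bound: that $D$ cannot query $f$ at $p_1=\mathsf{IV}\|m_1$ despite being handed $y=\HH_f(s)$, which statistically depends on $m_1$. The point is that over the randomness of $f$ the map $m_1\mapsto\HH_f(m_1\|m_2\|\cdots)$ behaves like a random function, so $y$ gives no usable handle on $m_1$ to an adversary that has not already traversed the chain; the chosen hybrid ordering (reprogram $h_1$ before reasoning about any later $p_i$) is exactly what turns this intuition into a clean induction. This is also why the length-extension weakness of Merkle--Damg\r{a}rd is harmless here: we never need $\HH_f$ to hide its \emph{output} under further $f$-queries, only to hide the short, high-entropy \emph{input} prefix that seeds the chain.
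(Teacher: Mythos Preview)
Your argument is correct, but it takes a different route from the paper's own proof. The paper does not first replace the leading block by a truly uniform one; instead it observes that ``computationally indistinguishable from uniform'' implies negligible guessing probability $\gamma$, and that $\gamma_{(\sT_1,\sT_2)}\le\gamma_{\sT_1}$. From $\gamma_{x_1}=\mathsf{negl}$ it gets $\gamma_{(0^n,x_1)}=\mathsf{negl}$ and then invokes \cref{prop:kdf} (the random-oracle-as-KDF lemma) as a black box to conclude that $y_1=f(0^n,x_1)$ is indistinguishable from uniform; this in turn gives $\gamma_{y_1}=\mathsf{negl}$, hence $\gamma_{(y_1,x_2)}=\mathsf{negl}$, hence $y_2$ is indistinguishable from uniform, and so on down the chain.

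The trade-off is this: the paper's proof is shorter and, by passing through guessing probability, never needs your ``tail-sampleability'' assumption---it does not matter whether the later blocks are correlated with $x_1$, because only the min-entropy of the joint input to each $f$-call is used. Your approach, by contrast, is more self-contained (no appeal to an external KDF lemma) and makes the oracle-reprogramming structure explicit, at the price of the extra standing assumption you identified. Since you correctly note that the assumption holds in both intended applications, there is no actual gap, just a different decomposition.
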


\begin{proof}
    We first observe that if a distribution $\sT$ on strings of length $n$ is indistinguishable from uniform then $\gamma_\sT=\mathsf{negl}(n)$: say that $\PP_\sT(x)=\alpha$, and consider an adversary $\sA$ that samples $x'$ and outputs $1$ iff $x'=x$. Clearly $\sA$ is polynomial, and we have $\PP[\sA^{\sT}()=1]= 1/\alpha$ and $\PP[\sA^{\mathcal{U}}()=1]= 2^{-n}$. By hypothesis we have that $|1/\alpha - 2^{-n}| = \mathsf{negl}(n)$. It follows that 
    $\PP_\sT(x) = \mathsf{negl}(n)$
    for all strings $x$, hence $\gamma_\sT = \mathsf{negl}(n)$.

    We also observe that if $\sT = (\sT_1,\sT_2)$, then $\gamma_{\sT} \le \gamma_{\sT_1}$: let $(x_1,x_2)$ that maximizes $\PP_{\sT}[(x_1,x_2)]$, then $\gamma_{\sT_1} \ge \PP_{\sT_1}[x_1] = \sum_{x_2'} \PP_{\sT}[(x_1,x_2')] \ge \PP_{\sT}[(x_1,x_2)] = \gamma_{\sT}$.

    Recall that the first step of applying Merkle-Damgård to an input $x$ is to create a padded $\hat{x}$ whose length is a multiple of $n$ (the specifics of the padding are crucial for collision resistance, but irrelevant to the current proof), and writing $\hat{x} = x_1\|\ldots\|x_k$ where each $x_i$ is a string of length $n$. Crucially, if $|x|\ge n$ then $x_1$ is exactly the first $n$ bits of $x$.

    We then compute $y_1 = f(0^n,x_1)$ and $y_j = f(y_{j-1},x_j)$ for $j=2,\ldots,k$, until finally we obtain $\HH_f(x) := y_k$.

    By hypothesis if $x\gets \sS$ then $\gamma_{x_1} = \mathsf{negl}(n)$. From the observation above, we get that $\gamma_{(0,x_1)} = \mathsf{negl}(n)$. Since $f$ is random it follows from \cref{prop:kdf} that $y_1 = f(0^n,x_1)$ is indistinguishable from uniform. Hence $\gamma_{y_1} = \mathsf{negl}(n)$, so we can repeat the argument to get that $\gamma_{(y_1,x_2)} = \mathsf{negl}(n)$ and $y_2$ is indistinguishable from random. We repeat the process $k$ times to obtain that $y_k = \HH_f(x)$ is also indistinguishable from uniform.
\end{proof}

For completeness, we point out the places where we use the assumption that $\HH$ is modeled as a random oracle, and observe that we only use the two properties above:
\begin{itemize}
    \item In \ifnum\masterthesis=0{\cref{ssec:kdf} }\else{\cref{ssec:prelim_other_kdf} }\fi we argue that $2048$ successive applications of $\shaf$ constitute a \textsf{PBKDF}. By successive applications of \cref{prop:mdkdf} we get that $\shaf^{2048}$ is indeed a \textsf{PBKDF}, as long as we assume that the first $n$ bits of the input are indistinguishable from uniform. In practice, the key-generation procedure of HD wallets uses an input that starts with a long uniform string, so this assumption is justified.
    \item Throughout \ifnum\masterthesis=0{\cref{ssec:hdwallet} }\else{\cref{ssec:sl_seed_hd} }\fi we prove that if $\HH$ is a random oracle, then several functions derived from $\HH$ have some collision resistance properties. These arguments are readily adaptable to the weaker assumption that $\HH$ is collision-resistant.
    \item In \ifnum\masterthesis=0{\cref{ssec:zk} }\else{\cref{ssec:sl_seed_proof} }\fi we assume that $\HH$ is modeled as a random oracle. We only use this fact directly to argue that $\picnic(\HH)$ is secure, which only requires $\HH$ that is one-way. The fact that $\HH$ is one-way already follows from the assumption that it is collision-resistant. All other appeals to the randomness of $\HH$ are to apply the statements of \ifnum\masterthesis=0{\cref{ssec:kdf} }\else{\cref{ssec:prelim_other_kdf} }\fi and \ifnum\masterthesis=0{\cref{ssec:hdwallet}}\else{\cref{ssec:sl_seed_hd}}\fi.
\end{itemize}


\subsubsection{Key-Derivation functions}\label{ssec:kdf}

A \emph{password-based key-derivation function} (\textsf{PBKDF}) is a function used to derive a secret-key for a cryptographic application from a string that is not necessarily uniformly random (e.g. the key also contains a user selected password, without assuming anything about the distribution the password was sampled from) called the \emph{seed}. For a formal introduction we refer the reader to \cite[Section~8.10]{BS20b}.

The property we require from a \textsf{PBKDF} is that the resulting key is indistinguishable from random, provided that the input seed was sampled from a distribution with sufficient guessing probability (recall \cref{defn:guessing}). \cref{prop:kdf} establishes that a random oracle is a \textsf{PBKDF}.

Most Bitcoin wallets use a particular \textsf{PBKDF} specified in \cite{PRV+13}, where the seed is comprised of a user-chosen password and a uniformly random binary string (encoded in the form of a \emph{mnemonic phrase} for the sake of human readability). The \textsf{PBKDF} therein serves another purpose: to provide some protection for the user from an adversary which has access to the binary string but not to the password (as the string is usually stored inside the wallet, but the user is required to input the password with each use). This is achieved by making the \textsf{PBKDF} computationally heavy to make dictionary attacks on the password less feasible. Towards this end, the \textsf{PBKDF} described therein applies 2048 iterations of \textsf{SHA-512} to the input.

\subsection{Bitcoin and Blockchain}\label{prelim:bitcoin}

In this section, we overview some of the aspects of Bitcoin relevant to our discussion. We assume the reader is familiar with core concepts of Bitcoin such as transactions and \utxos, chain reorganizations, etc. For a review of these concepts, we refer the reader to \cite{NBF+16}.

\subsubsection{Coinbase Cooldown}\label{par:cooldown} In a reorg scenario, most transactions removed by the reorg can be rebroadcast to the mempool, and any transaction broadcast by an honest user will eventually be included in the new chain. The only scenario where a transaction becomes invalid is if a dishonest user attempts a double spend by broadcasting conflicting transactions to the two sides of the fork. However, this is no longer the case for reverted coinbase transactions, as their validity relies on the block that included them. Reverting a coinbase transaction renders it, and all transactions spending coins minted in that coinbase transaction, invalid. To avoid a scenario where a valid transaction made by an honest user becomes invalid, Bitcoin imposes a cooldown of one hundred blocks before the coinbase transaction can be spent (see the \href{https://developer.bitcoin.org/devguide/block_chain.html}{Bitcoin developer guide}).

\subsubsection{Software Forks}\label{par:softfork} A \emph{software fork} is a change to the code that nodes are expected to run that affects the conditions under which a block is considered valid. There are two types of software forks, a \emph{soft-fork} and a \emph{hard-fork}. The difference is that in a hard-fork, there exist blocks that the new version considers valid while the old version does not. In order to adopt a soft-fork, only the miners are required to update their nodes, and non-mining nodes will operate correctly even with the outdated version. A hard-fork, on the other hand, causes the chain to split into two separate chains that can not accept each other's blocks. Notable examples of hard-forks include Bitcoin Cash (forked from Bitcoin) and Ethereum Classic (forked from Ethereum).

We illustrate the two types of forks with two scenarios which occur in our methods:
\begin{itemize}
    \item Paying the coin in a \utxo to a different address than the one specified in the transaction (e.g., paying the \utxo of a non-revealed commitment to a miner who posted a proof of ownership in Lifted FawkesCoin, see \cref{sec:lfc}). This behavior can be implemented in a soft-fork by not spending to transaction to its intended address, but rather spending it to a new anyone-can-spend \utxo. Updated miners who enforce the rules of the protocol will know to consider attempts to spend the anyone-can-spend \utxo to any address but the one dictated by the protocol invalid. However, from the point of view of non-updated nodes, spending the \utxo to \emph{any} address is valid.
    \item Spending a \utxo without including a valid signature (e.g., spending a \lost \utxo in permissive FawkesCoin, see \cref{ssec:pfc}). This scenario can not be implemented in a soft-fork, as currently there is no way to spend a \utxo without including a signature.
\end{itemize}

\subsubsection{Dust} \label{par:dust}  \utxos whose value is too small to cover the costs of spending them are called \emph{Dust}. The difference between dust and unspendability is that a \utxo can be spendable with respect to one spending method but not the other. For a \utxo to be considered \emph{dust}, it must not be spendable using \emph{any} method available. Since the transaction fees fluctuate, the set of dust \utxos changes in time.

Any changes to the operation of Bitcoin that increase the transaction sizes will also incur an increase in the amount of dust. In particular, current post-quantum signature schemes require much larger keys and/or signature sizes than the  \ecdsa scheme currently used in Bitcoin. Moreover, variations on the Bitcoin protocol can cause the dust threshold to increase beyond the transaction fee, as indeed happens in Lifted FawkesCoin (see \cref{sec:lfc}).

\subsubsection{Quantum Threats on Blockchains}\label{ssec:qthreats}

Roughly speaking, quantum computers affect Bitcoin on two different fronts: quantum mining and attacks on pre-quantum cryptography.

A common misconception is that quantum computers have no drastic effects on Bitcoin mining (beyond increased difficulty due to Grover's quadratic speedup). This was debunked in~\cite{Sat20,LRS19}. Our work is orthogonal to aspects related to quantum mining, especially since quantum attacks on \secp are projected to occur a few years before quantum mining starts~\cite{ABL+17}.

The most immediate risk is in the form of \utxos with leaked public-keys. A quantum adversary could use the public-key to sign arbitrary messages and, in particular, spend any \utxo whose address is this public-key. The public-key can be exposed in a variety of ways, including (but not limited to):
\begin{itemize}
    \item \ptpk \utxos: when Bitcoin just launched, all transactions would contain the \ecdsa public-key of the recipient. Hence, all \utxos in the \utxo set would contain the public-key used to verify transactions spending them. That is, \emph{none} of the \utxos were \hashed. In 2009, in order to conserve space, \ptpkh transactions which only contain a hash of the public-key were introduced. \ptpkh soon became the standard, and as of today, no \ptpk \utxos are created. However, the \utxo set still contains a few old \ptpk \utxos whose balance totals about 2 million bitcoin \cite{Del22}.
    \item Reused addresses: while considered bad practice, reusing the same address for several \utxos is a very common habit. A reused address in itself is not quantumly compromised, but once such a \utxo is spent, the remaining \utxos with the same address become compromised. A survey by Deloitte estimates \cite{Del22} that at least 2 million bitcoins are stored in reused \utxos whose key has been exposed this way. A survey by BitMex \cite{Bit22} estimates that about half of the Bitcoin transaction throughput is to reused addresses\footnote{It should be noted that reusing addresses poses security risks beyond quantum vulnerabilities -- particularly compromising the anonymity of users -- and is generally considered a bad security practice.}.
    \item Taproot: the Taproot update to Bitcoin \cite{WNT20} uses a new form of \utxo called \textsf{P2TR} in which the public-key itself is posted like in \ptpk \utxos. Taproot was created with the intention to become the new standard and replace other spending methods, which might greatly increase the amount of leaked public-keys. Since its deployment in November 2021 the adoption of Taproot has been steadily increasing and as of August 2023, about 23\% of newly created \utxos are \textsf{P2TR} \cite{Tra23}.
    \item Software forks: when a chain forks into two independent chains (such as Bitcoin and Bitcoin Cash, or Ethereum and Ethereum classic), any \utxo from before the splitting point coexists on both chains. Thus, in order to spend money on one of the chains, the owner must expose her key, which could be used by an attacker to spend the owner's corresponding \utxo on the other chain. See \cite{IKK20} for an analysis of address reuse across Bitcoin and Bitcoin Cash.
\end{itemize}

\subsubsection{Using Post-Quantum Signatures on the Blockchain}\label{ssec:nist-qpds}
The most straightforward way to address \ifnum\masterthesis=1 the threats described in the previous section\else quantum threats\fi is to introduce post-quantum signatures. However, this requires that all users migrate their \utxos to post-quantum \utxos \emph{before} a quantum adversary emerges.

According to bitcoinvisuals.com, the average Bitcoin transaction in the last two years is about 600 bytes long and contains three inputs. Each such input contains a compressed \ecdsa signature and an \ecdsa public-key whose combined length is at most 103 bytes. Hence, at full capacity, about half of the block space is used for signature data.

As of the time of writing, there are about 85 million \utxos in the Bitcoin \utxo set \cite{Blo22}. Assuming a block size of one megabyte and a block delay of 10 minutes, it follows that migrating every single \utxo to a post-quantum address would require about five months. Though, even if it was possible to rally all users to migrate their \utxos, a quantum adversary could still loot \lost \utxos.

Another crucial aspect of post-quantum signature schemes is their increased storage and computational requirements compared to current signatures. The most immediate consequence is the increase in public-key and signature lengths, which directly decreases the throughput of Bitcoin.

The most space-efficient post-quantum signature endorsed by NIST is the Falcon512 scheme, whose combined public-key and signature size is 1532 bytes, which is about 15 times larger than the \ecdsa equivalent, making the average transaction about eight times larger. Crystal-DILITHIUM is the second-best scheme in terms of combined public-key and signature storage, with a combined length of 2740 bits, inflating the size of an average transaction by a factor of about 14.

\begin{remark}
    One concern about implementing post-quantum primitives too soon is that they have not undergone the same scrutiny as contemporary cryptosystems, and weaknesses might be revealed therein. One solution for that is to use an \emph{hybrid} approach, where (as long as pre-quantum cryptography is still considered safe), the signer must produce \emph{both} an \ecdsa (or any other widely used signature scheme) and a signature produced by the newly implemented post-quantum scheme. This way, users are still protected by the pre-quantum scheme should such vulnerabilities arise. Using a hybrid approach to transition to post-quantum cryptography was discussed in \cite{BHMS17}. Such a hybrid approach was recently applied in \cite{GKP+22}.
\end{remark}

\section{Hierarchical Deterministic wallets}\label{ssec:hdwallet}

Bitcoin HD wallets \cite{Wui13} allow a user to only store one \emph{master secret-key} from which many $\ecdsa$ key-pairs can be derived. The master secret-key itself is usually derived using a \textsf{PBKDF} (see \cref{ssec:kdf}). In this appendix, we present and analyze the construction of \cite{Wui13}. Our goal is proving \cref{prop:preimage}, which is essential for proving \cref{thm:seedlift}, establishing the security of seed-lifting.

All keys derived by an HD wallet are actually \emph{extended keys}. An extended secret-key is of the form $\xsk = (\sk,c)$ where $\sk$ is an \ecdsa secret-key and $c$ is a pseudo-random string. The corresponding public-key is $\xpk = (\pk,c)$ where $\pk = \pkec(\sk)$ (recall \ifnum\masterthesis=0{\snote{ref}}\else{\cref{ssec:prelim_ds_ec}}\fi).


Given a parent extended secret-key, $\xsk = (\sk,c)$ we define the \emph{$i$th non-hardened} child $\HH^{nh}(\xsk)$ as following:
\begin{itemize}
    \item Let $\HH$ be collision resistant with output length 512 (BIP-32 uses \shaf).
    \item Let $\HH_L$ and $\HH_R$ be the first 256 bits and last 256 bits of $\HH$ respectively.
    \item Set $c_i = \HH_R(c,\pk,i)$ and $\sk_i = \HH_L(c,\pk,i) + \sk \mod N$ where $N = |\secp|$.
    \item Set $\pk_i = \pkec(\sk_i)$.
\end{itemize}

In $i$th \emph{hardened} child $\HH^h(\xsk)$ is defined similarly except $\sk$ is used instead of $\pk$ in the third step.

Differently stated, we define $\HH^h(c,x,i) = (\HH_L(c,x,i) + x, \HH_R(c,x,i))$ and $\HH^{nh}(c,x,i) = (\HH_L(c,PK^{EC}(x),i)+x,\HH_R(c,PK^{EC}(x),i))$ and get that $\xsk_i = \HH^h(c,x,i)$ in the hardened case and $\xsk_i = \HH^{nh}(c,x,i)$ in the non-hardened case. 

We call a pair $s=(i,p)$ where $p\in\{h,nh\}$ a \emph{derivation step}. A \emph{derivation path} $P$ is a sequence of derivation steps. We recursively define $\xsk_\emptyset = \msk$ and $\xsk_{P,(i,p)} = \HH^{p}(\xsk_P,i)$. For each derivation step $s=(i,p)$ we define $\HH^s(x) = \HH^p(x,i)$. Finally, for each derivation path $P$ we define $\HH^P$ recursively as $\HH^\emptyset(x) = x$ and $\HH^{P,s}(x) = \HH^{s}(\HH^P(x))$.

Note that given $\xpk = (c,\pk)$ one can compute $\pk_i^{nh} = \pkec(\HH_L(c,\pk,i)+\sk) = \pkec(\HH_L(c,\pk,i)) + \pkec(\sk) = \pkec(\HH_L(c,\pk,i)) + \pk$ (where we used the fact that $\pkec$ is a group homomorphism). Also note that given $\sk_i^{nh}$ and $\xpk = (c,\pk)$ one can compute $\sk = \sk_i - \HH_L(c,\pk,i)$, so it is not secure to provide an extended public-key along with a non-hardened non-extended child secret-key. 

We will need the following:

\begin{proposition}\label{prop:derstap}
    If $\HH$ is a random oracle and $s$ is a derivation step then $\HH^{s}$ is a random oracle.
\end{proposition}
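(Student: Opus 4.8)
The plan is to show that for a fixed derivation step $s=(i,p)$, the function $\HH^s$ induced by a random oracle $\HH$ is distributed (up to negligible statistical distance) exactly as a uniformly random function from the space of extended secret-keys to itself, which is what is needed in order to treat $\HH^s$ as a random oracle in the sequel. The whole argument rests on two elementary observations: (i) on input $(c,x)$, the function $\HH^s$ queries $\HH$ at a single point that depends \emph{injectively} on $(c,x)$, so the responses $\HH$ returns during different evaluations of $\HH^s$ are independent and uniform; and (ii) the post-processing that turns such a response into the output of $\HH^s$ — namely replacing the left half by its sum with the parent secret-key modulo $N$ — is a bijection of the codomain, hence preserves uniformity.

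Concretely, I would first handle the hardened case $p=h$, where $\HH^s(c,x)=(\HH_L(c,x,i)+x \bmod N,\ \HH_R(c,x,i))$. For a fixed $i$ the map $(c,x)\mapsto(c,x,i)$ is injective, so as $(c,x)$ ranges over the extended-secret-key domain, the values $\HH(c,x,i)$ are independent and uniform in $\zo^{512}$; writing $\HH(c,x,i)=(A_{c,x},B_{c,x})$ with $A_{c,x},B_{c,x}\in\zo^{256}$ independent and uniform, we have $\HH^s(c,x)=(A_{c,x}+x \bmod N,\ B_{c,x})$. For each fixed $(c,x)$, $A_{c,x}\bmod N$ is within negligible statistical distance of the uniform distribution on $\mathbb{Z}_N$ (since $N=|\secp|$ is negligibly close to $2^{256}$; cf.\ the discussion in \cref{sssec:ec}), hence so is $A_{c,x}+x\bmod N$ because translation by $x$ is a bijection of $\mathbb{Z}_N$, and $B_{c,x}$ is uniform and independent of it. As the pairs $(A_{c,x},B_{c,x})$ are independent across distinct inputs $(c,x)$, so are the outputs $\HH^s(c,x)$. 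Therefore $\HH^s$ is statistically indistinguishable from a uniformly random function, i.e.\ it is a random oracle. The non-hardened case $p=nh$ is identical, the only change being that $\HH$ is invoked at $(c,\pkec(x),i)$ instead of $(c,x,i)$; since $\pkec$ is injective, the map $(c,x)\mapsto(c,\pkec(x),i)$ is still injective and the argument goes through verbatim.

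The only point requiring care — and hence the closest thing to an obstacle — is the reduction modulo $N$: one has to verify that reinterpreting a uniform $256$-bit string modulo $N$ costs only negligible statistical distance and that this commutes with the shift by the parent secret-key, which is precisely the slight imprecision already tolerated in this paper when sampling raw \secp secret-keys (see \cref{sssec:ec}). I would also flag the implicit caveat that $\HH^s$ behaves like a \emph{fresh} random oracle only from the viewpoint of a party that does not additionally query $\HH$ on the internal points $(c,x,i)$ (resp.\ $(c,\pkec(x),i)$); this holds automatically in the settings where \cref{prop:derstap} is later invoked, but it is worth stating explicitly so that the composition $\HH^P=\HH^{s_k}\circ\cdots\circ\HH^{s_1}$ can be analysed cleanly in the proof of \cref{prop:preimage}.
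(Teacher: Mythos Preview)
Your proposal is correct and follows essentially the same approach as the paper: both arguments rest on (i) the injectivity of the map from $(c,x)$ to the point at which $\HH$ is queried (using that $\pkec$ is injective in the non-hardened case), and (ii) the observation that adding the parent secret-key is a bijection, so uniformity is preserved. The paper phrases step (ii) as ``if $f$ is uniformly random and $g$ is fixed then $f+g$ is uniformly random'', which is your pointwise translation argument viewed globally. Your treatment is in fact slightly more careful than the paper's, since you explicitly address the negligible statistical slack from reducing a uniform $256$-bit string modulo $N$ and flag the caveat about freshness with respect to direct queries to $\HH$; the paper glosses over both points.
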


\begin{proof}
    Let $s = (i,p)$. By assumption, $\HH$ is random, which implies that $\HH_L$ is random. Since $\pkec$ is injective, it follows that $(c,x,i)\mapsto \HH_L(c,\pkec(x),i) = \HH^{nh}_R$ is also random. Finally, we note that if $f$ is uniformly random and $g$ is fixed then $f+g$ is uniformly random, from which it follows that $\HH^{nh}_R$ is also random. It follows that $\HH^{nh}$ is random as needed. The proof for $\HH^h$ is similar.

    We now note that $\HH^{(i,p)}$ is simply $\HH^p(\cdot,i)$, but the result of fixing a part of the input of a random function is also a random function (with a smaller domain), which completes the proof.
\end{proof}

From this follows by induction:

\begin{corollary}\label{prop:derpath}
    If $\HH$ is a random oracle and $P$ is a non-empty derivation path then $\HH^{P}$ is collision-resistant.
\end{corollary}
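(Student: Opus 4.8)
The plan is to prove \cref{prop:derpath} by induction on the length of the derivation path $P$, building on \cref{prop:derstap} together with two standard facts: that a random oracle with range of size $\approx 2^{512}$ is collision-resistant against any \qpt adversary (finding a collision requires $\Theta(2^{512/3})$ quantum queries, which is super-polynomial in $\lambda$), and that a composition of collision-resistant functions is again collision-resistant. Unwinding the recursion $\HH^\emptyset(x)=x$, $\HH^{P,s}(x)=\HH^s(\HH^P(x))$, a path $P=(s_1,\dots,s_k)$ satisfies $\HH^P=\HH^{s_k}\circ\cdots\circ\HH^{s_1}$, so the statement is exactly that an arbitrary finite composition of derivation-step maps is collision-resistant.

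First I would handle the base case $k=1$: here $\HH^P=\HH^{s_1}$, which is a random oracle by \cref{prop:derstap} and hence collision-resistant. For the inductive step, write $P=(P',s)$ with $P'$ non-empty, assume $\HH^{P'}$ is collision-resistant, and note $\HH^P=\HH^s\circ\HH^{P'}$. Any collision $x\ne y$ with $\HH^P(x)=\HH^P(y)$ falls into one of two cases: if $\HH^{P'}(x)=\HH^{P'}(y)$, then $(x,y)$ is a collision for $\HH^{P'}$; otherwise $(\HH^{P'}(x),\HH^{P'}(y))$ is a collision for $\HH^s$. Thus, given a \qpt adversary $\sA$ that produces a collision for $\HH^P$ with probability $\varepsilon$, I would construct two reductions --- one returning $\sA$'s output verbatim as a candidate collision for $\HH^{P'}$, the other applying $\HH^{P'}$ to $\sA$'s output and returning the resulting pair as a candidate collision for $\HH^s$ --- at least one of which succeeds with probability $\ge\varepsilon/2$. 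By the inductive hypothesis and \cref{prop:derstap} both succeed only with negligible probability, hence $\varepsilon=\negli$, completing the induction.

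I do not expect a genuine obstacle here; this is a routine induction, which is presumably why it is phrased as a corollary. The only points needing minor care are (i) checking that the encodings line up so that the composition $\HH^s\circ\HH^{P'}$ is well-defined --- each $\HH^s$ sends an extended key $(\sk,c)$ to another object of the same shape, so the types match and the composition typechecks --- and (ii) being explicit that collision resistance of the underlying random oracle is invoked in its quantum (QROM) form, so that the guarantee obtained is against \qpt adversaries, as required by the application to \cref{thm:seedlift}.
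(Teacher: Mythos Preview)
Your proposal is correct and takes essentially the same approach as the paper, which simply states ``From this follows by induction'' immediately after \cref{prop:derstap}. Your fleshing out of the induction via the standard collision-splitting argument for compositions is exactly what the paper leaves implicit.
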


Finally we define the function $\der(\msk,P) = \HH^{P}(\msk)$. The function $\der$ is not collision-resistant: if $P_1$ and $P_2$ are two non-empty derivation paths, and $P=P_1\|P_2$ is their concatenation, then it holds for any $\msk$ that $\der(\msk,P) = \der(\der(\msk,P_1),P_2)$. However, these are the only forms of collision a bounded adversary could feasibly produce. This property will be useful to us in \ifnum\masterthesis=0\cref{ssec:seedlift}\else\cref{ssec:sl_seed_proof}\fi, so we define and prove it formally.

\begin{definition}\label{defn:nscrh}
    Let $f(x,y)$ be a two-variable function. A point $(x',y')$ is \emph{an $f$-suffix} of $(x,y)$ if there exists $y''$ such that $y=y''\|y'$ and $x'=f(x,y'')$.

    A pair of inputs $(x,y),(x',y')$ is a \emph{non-suffix collision} if $f(x,y)=f(x',y')$ but neither of the input is an $f$-suffix of the other.

    The function $f$ is \emph{collision resistant up to suffixes} if it is infeasible to find a non-suffix collision.
\end{definition}

\begin{proposition}\label{prop:dernscrh}
    If $\HH$ is a random oracle then $\der$ is collision resistant up to suffixes.
\end{proposition}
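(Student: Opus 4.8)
The plan is to argue by a ``peel off the last derivation step'' induction on the length of the shorter derivation path, reducing any non-suffix collision of $\der$ to a collision, or near-collision, of $\HH$.

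First I set up the reduction. Suppose a \qpt adversary outputs $(\msk_1,P_1)$ and $(\msk_2,P_2)$ with $\HH^{P_1}(\msk_1)=\der(\msk_1,P_1)=\der(\msk_2,P_2)=\HH^{P_2}(\msk_2)$, and assume without loss of generality $|P_1|\le|P_2|$. I will show that, except with negligible probability, $(\msk_1,P_1)$ is a $\der$-suffix of $(\msk_2,P_2)$ --- which is precisely the assertion that no non-suffix collision was produced. The argument is by induction on $k=|P_1|$. For the base case $k=0$ we have $\msk_1=\HH^{\emptyset}(\msk_1)=\HH^{P_2}(\msk_2)=\der(\msk_2,P_2)$, and since $P_2=P_2\,\|\,P_1$ with $P_1=\emptyset$, the pair $(\msk_1,P_1)$ is a $\der$-suffix of $(\msk_2,P_2)$ by definition (if one insists that derivation paths be non-empty, this case is vacuous).

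For the inductive step $k\ge 1$, write $P_1=P_1'\,\|\,s_1$ and $P_2=P_2'\,\|\,s_2$, peeling off the final derivation steps; note $P_2$ is non-empty and $|P_1'|\le|P_2'|$. Put $a=\HH^{P_1'}(\msk_1)$ and $b=\HH^{P_2'}(\msk_2)$, so $\HH^{s_1}(a)=\HH^{s_2}(b)$. If $s_1=s_2=:s$, then since $\HH^{s}$ is a random oracle (\cref{prop:derstap}) and hence collision resistant, except with negligible probability $a=b$; the induction hypothesis applied to $P_1',P_2'$ then gives $P_2'=Q\,\|\,P_1'$ with $\msk_1=\der(\msk_2,Q)$, whence $P_2=Q\,\|\,P_1'\,\|\,s_1=Q\,\|\,P_1$ and $(\msk_1,P_1)$ is a $\der$-suffix of $(\msk_2,P_2)$, as needed. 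If $s_1\ne s_2$, write $s_j=(i_j,p_j)$. Equality of the extended keys $\HH^{s_1}(a)$ and $\HH^{s_2}(b)$ forces equality of their chaincode components, i.e. $\HH_R(c_a,w_1,i_1)=\HH_R(c_b,w_2,i_2)$, where $w_j$ is the secret part of the relevant argument if $p_j$ is hardened and its image under $\pkec$ if $p_j$ is non-hardened. When $i_1\ne i_2$ these two inputs to $\HH$ already differ in the index slot; when $i_1=i_2$ we must have $p_1\ne p_2$, and then one of $w_1,w_2$ is a $32$-byte secret key while the other is a $33$-byte public key, so the inputs differ in length. Either way the adversary has exhibited two distinct inputs to $\HH$ whose images agree on their last $256$ bits, a near-collision of $\HH$, which a \qpt adversary finds only with negligible probability in the (quantum) random oracle model. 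Summing the polynomially many, each negligible, failure probabilities over the $k\le\mathrm{poly}$ induction levels completes the proof.

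The main obstacle is the cross-step case $s_1\ne s_2$: this is not a black-box consequence of \cref{prop:derstap}, and genuinely relies on the concrete BIP-32 encoding --- that the child index is hashed as part of the input, and that secret keys ($32$ bytes) and public keys ($33$ bytes) are never confusable --- together with the observation that equality of extended keys only yields equality of the \emph{chaincode halves} of the two hash outputs, so what one can extract from $\HH$ is a near-collision (distinct inputs agreeing on $256$ of the $512$ output bits) rather than a full collision; one must therefore invoke near-collision resistance of $\HH$, which still holds in the (Q)ROM. A secondary, purely bookkeeping point is verifying that a $\der$-suffix relation between $(\msk_1,P_1')$ and $(\msk_2,P_2')$ lifts to one between $(\msk_1,P_1)$ and $(\msk_2,P_2)$ after re-appending the common last step, which is immediate.
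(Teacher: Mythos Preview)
Your argument is correct and arrives at the same extraction as the paper, but organized differently. The paper does a one-shot case split on how the two paths relate as strings (equal; one a suffix of the other; neither), and in the last case explicitly takes the longest common suffix $S$ and examines the single step just before it. Your induction peels off last steps one at a time until either a mismatch is hit or the shorter path is exhausted; this is just a recursive way of locating the same divergence point. Both proofs ultimately produce either a collision in some $\HH^s$ (your $s_1=s_2$, $a\ne b$ branch, corresponding to the paper's ``collision in $\HH^{P}$'' cases) or a collision in $\HH_R$ in the cross-step branch, and both invoke the BIP-32 byte-length mismatch ($32$ vs.\ $33$) for the hardened/non-hardened case. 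Your ``near-collision of $\HH$'' is nothing more than a collision in $\HH_R$, which is precisely what the paper uses, so you are not assuming anything stronger.

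Two small cleanups. First, the probabilistic bookkeeping is looser than needed: what your induction actually establishes is a \emph{deterministic} extraction---any non-suffix $\der$-collision yields, in polynomial time, an explicit collision in $\HH_R$ or in some $\HH^s$---so a single invocation of collision resistance suffices and there is no need to ``sum negligibles over levels.'' Second, when you write ``except with negligible probability $a=b$,'' the cleaner phrasing is the dichotomy ``either $a=b$ or $(a,b)$ is a collision in $\HH^s$,'' which makes the deterministic nature of the reduction explicit. Neither point affects correctness.
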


\begin{proof}
    Let $\sA$ be a \qpt adversary which outputs a non-suffix collision  $(\msk,P),(\msk',P')$ with probability $\varepsilon$. We prove that we can recover from $(\msk,P),(\msk',P')$ a collision in a function known to be a random-oracle, and in particular collision-resistant. From this will follow that $\varepsilon = \negl$, completing the proof.

    In most cases, we will find a collision in $\HH^Q$ for some non-empty path $Q$, which is a random oracle by \cref{prop:derpath}. In the remaining case we will find a collision in $\HH_L$, which is a random oracle by the hypothesis that $\HH$ is a random oracle.

    If $P=P'$ we get that $\msk\ne \msk'$ is a collision for $\HH^{P}$. Assume that $P\ne P'$.

    If there is a $P''$ such that $P = P''\|P'$, then by the assumption that $(\msk,P),(\msk',P')$ is non-suffix we have that $\msk' \ne \HH^{P''}(\msk)$. However it does hold that $\HH^{P'}(\msk') = \HH^{P'}(\HH^{P''}(\msk))$, so we found a collision in $\HH^{P'}$.

    The remaining case is that neither $P,P'$ is a suffix of the other. Let $S$ be the longest shared suffix of $P,P'$. Let $Q,Q'$ be paths such that, $P=Q\|S$ and $P'=Q'\|S$ (note that both $Q$ and $Q'$ are necessarily non-empty). If $\HH^Q(\msk) \ne \HH^{Q'}(\msk')$ then these points constitute a collision of $\HH^S$. Otherwise, let $s$ and $s'$ be the last steps in $Q$ and $Q'$ respectively, and let $\tilde{Q}$ and $\tilde{Q'}$ be $Q$ and $Q'$ with the last step removed. By the maximality of $S$ we have that $s\ne s'$. However, we have that $\HH^s(\HH^{\tilde{Q}}(\msk)) = \HH^Q(\msk) = \HH^{Q'}(\msk') = \HH^{s'}(\HH^{\tilde{Q'}}(\msk'))$. Let $s=(i,p)$ and $s'=(i',p')$. If $p=p'$ we get from $s\ne s'$ that $i\ne i'$, and since $\HH^p(\HH^{\tilde{Q}}(\msk),i)=\HH^p(\HH^{\tilde{Q'}}(\msk'),i')$ we found a collision in $\HH^p$. If $p\ne p'$ assume without loss that $p = h$ and $p' = nh$, and also set $\HH^Q(\msk) = \HH^{Q'}(\msk') = (\sk,c)$, then we have that $H_R(c,\sk,i) = H_R(c,\pkec(\sk),i')$. However, recall that in \cite{Wui13} secret-keys are encoded as 32 byte strings whereas public-keys are encoded as 33 byte strings (see \ifnum\masterthesis=0\cref{sssec:ec}\else\cref{ssec:prelim_ds_ec}\fi), whereby $(c,\sk,i)\ne (c,\pkec(\sk),i')$ so we found a collision in $\HH_R$.
\end{proof}

So far we have considered $\HH$ to have fixed output length of $512$ bits. The next property we require is stated more naturally when we think of $\HH_\ell$ as having output length $2\ell$ and of $\pkec_\ell$ as an arbitrary efficiently computable injection from $\zo^\ell$ to $\zo^{\ell + c}$ for some $c>0$ (when instantiated with $\HH = \shaf$ and $\pkec$ specified in BIP-39 we have that $\ell = 256$ and $c=8$). For a string $x\in \zo^{2\ell}$ we use $x_L$ and $x_R$ to denote its first and last $\ell$ bits respectively.

\begin{proposition}\label{prop:preimage}
    Fix a path $Q$. Let $\sA$ be a \qpt adversary with oracle access to $\HH_\ell$ that gets as input $x\in\zo^{2\ell}$ and outputs a path $Q$ and a string $x'$ of length at least $\ell$. If $\HH_\ell$ is a random oracle, then $$\PP[\HH_\ell(x)=\HH_\ell^Q(x')\mid x\gets \zo^{2\ell}, (x',Q)\gets \sA(x)]=\mathsf{negl}(\ell)\text{.}$$
\end{proposition}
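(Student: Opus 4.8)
The plan is to show that the target event lets one efficiently extract a collision in the random oracle obtained by truncating $\HH_\ell$, and then invoke collision resistance of the latter. Note first that the claim needs $Q$ non-empty: for the empty path $\HH_\ell^Q(x')=x'$, so $\sA$ wins by simply outputting $x'\gets\HH_\ell(x)$. Hence take $Q=s_1\cdots s_k$ with $k\ge1$ (the only case that arises in the proof of \cref{thm:seedlift}), and write $E$ for the event $\HH_\ell(x)=\HH_\ell^Q(x')$.

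The crux is a short structural observation about the BIP-32 derivation. Write $y=\HH_\ell^{s_1\cdots s_{k-1}}(x')$ (so $y=x'$ when $k=1$) and $s_k=(i_k,p_k)$. By the definition of the derivation steps $\HH^{h}$ and $\HH^{nh}$, the value $\HH_\ell^{s_k}(y)$ is computed from a \emph{single} query $z$ to $\HH_\ell$, where $z$ encodes the chain-code part of $y$, the secret-key part of $y$ (or, for a non-hardened step, its image under the injective public-key map), and the index $i_k$. This yields two consequences. First, the chain-code half of $\HH_\ell^{s_k}(y)$ — equivalently of $\HH_\ell^Q(x')$ — is the new chain code $(\HH_\ell(z))_R$, copied \emph{verbatim} from $\HH_\ell(z)$; only the secret-key half receives the additive tweak. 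Second, we may assume $x'$ is a well-formed extended key (otherwise $\HH_\ell^Q(x')$ is undefined for non-empty $Q$, so $E$ fails); then $z$ encodes a chain code, a key, and a nonempty index, whence $|z|>2\ell$ and $z\ne x$ for every $x\in\zo^{2\ell}$.

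Now the reduction. Given $\sA$, build $\sA'$ that runs $\sA$ on input $x$, receives its output $(x',Q)$, uses $\HH_\ell$-queries to recompute the derivation chain $x',\HH_\ell^{s_1}(x'),\ldots,\HH_\ell^{Q}(x')$ — in particular querying the last-step input $z$ — and outputs the pair $(x,z)$ (outputting $\bot$ if $Q$ is empty or $x'$ malformed). Since $|Q|$ is polynomially bounded, $\sA'$ is \qpt. Whenever $E$ occurs, the first consequence gives $(\HH_\ell(x))_R=(\HH_\ell(z))_R$ and the second gives $x\ne z$, so $(x,z)$ is a collision for the random oracle obtained from $\HH_\ell$ by keeping only its last $\ell$ output bits. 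Hence $\PP[E]\le\PP[\sA'\text{ outputs a collision in that truncated oracle}]$. That truncated oracle is again a random oracle, now with $\ell$-bit output, and is therefore collision resistant even against \qpt adversaries (by the standard quantum collision lower bound, $\Omega(2^{\ell/3})$ queries). Thus $\PP[E]=\mathsf{negl}(\ell)$, as claimed.

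I expect essentially all the genuine work to sit in the structural observation: correctly singling out the unmodified chain-code half of a derived extended key, and the encoding bookkeeping ensuring the last-step query $z$ is strictly longer than $2\ell$. Nothing about the internals of \picnic is needed here — only the explicit definitions of the BIP-32 derivation steps enter, just as in the proof of \cref{prop:derstap}.
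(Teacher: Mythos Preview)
Your proof is correct and follows essentially the same route as the paper's: both observe that the chain-code half of $\HH_\ell^Q(x')$ is a verbatim copy of $(\HH_\ell(z))_R$ for the last-step query $z$, and reduce $E$ to a collision in the truncated oracle $(\HH_\ell)_R$. The paper treats the single-step case and leaves the general $Q$ to an omitted induction, whereas you handle arbitrary $Q$ directly by isolating the final step; your length observation $|z|>2\ell$ also lets you dispense with the paper's subsequent case analysis (``with overwhelming probability $x=(x',i)$, whence $x_L=0$'') and conclude from collision resistance alone.
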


\begin{proof}
    We show that this is infeasible for $Q$ which only contains a single hardened derivation step. The proof is identical for a non-hardened derivation step and easily extends to a general $Q$ by induction, which is omitted.
    
    For the sake of readability let $\HH$ denote $\HH_\ell$. Suppose an adversary finds $i$ such that $\HH(x) = \HH^{(i,h)}(x')$. Then in particular $\HH_R(x) = \HH_R(x',i)$. Since $\HH_R$ is a random oracle and is thus collision-resistant, it follows that with overwhelming probability $x=(x',i)$. Thus, to have an equality, we must have that $\HH_L(x) = \HH_L(x) + x_L$, implying that $x_L=0$. However, since $x$ is uniformly random, this only happens with probability $2^{-\ell}=\mathsf{negl}(\ell)$.
\end{proof}





\section{Seed Lifting Security Proof}

We provide a full proof in \cref{thm:seedlift}. Loosely speaking, \cref{prop:dernscrh} implies that we need only consider three types of attacks:
\begin{itemize}
    \item The adversary can create their own seed and master secret-key, and find a derivation path from their master secret-key to an existing public-key. We show that such an attack implies a collision in the hash function.
    \item The adversary manages to \emph{extend} the derivation path. This is similar to the previous attack, except the derivation path in the signature is a \emph{suffix} of the derivation path produced by the adversary. The analysis of this attack is similar to the previous one.
    \item In the third type of attack, the adversary \emph{truncates} the given derivation path. That is, the derivation path they provide is a \emph{suffix} of the derivation path in the signature. Given a derivation path and a master secret-key, it is trivial to create a master secret-key for a suffix (by only applying a part of the path), so the argument used in the previous forms of attack does not carry over. In order to prove this attack infeasible, we first prove in \cref{ssec:preimage} that given an adversary that can forge a \picnic signature, it is possible to extract a preimage of the master secret-key with respect to the key-derivation function. We use this property to show that in this scenario it is also possible to recover a collision in the hash function.
\end{itemize}

\subsection{Preimage Extractability from \textsf{Picnic} Signatures}\label{ssec:preimage}

In the next section, we introduce \emph{seed-lifted} schemes. As we explain therein, the seed-lifted scheme is constructed such that an adversary with access to valid signatures with respect to some keys can't feasibly create a signature that is valid with respect to a \emph{different} (but related) public-key. This type of security can not be implied by the \seufcma security of $\picnic$ alone, as such security only prohibits creating signed documents verifiable by \emph{the same} public-key.
In order to overcome this, we want to exploit the relations between the two public-keys -- the one given to the adversary, and the one with respect to which their output passes verification -- to produce a collision in a function known to be collision resistant. However, to do that, we need to compute the \emph{secret}-key corresponding to the public-key produced by the adversary.

Roughly speaking, the property we need is that if it is feasible to produce a signature verifiable by some public-key, then it is also feasible to produce the corresponding secret-key.

To see this holds for the scheme $\picnic(f)$, we note that in this scheme a signature is an accepting transcript of a slightly modified version of the Unruh transform \cite{Unr15}, applied to a particular $\Sigma$-protocol used as an argument of knowledge for a preimage of $f$ used to instantiate the scheme\footnote{For an overview of $\Sigma$-protocols and how they are used to construct signature schemes, we refer the reader to \cite[Chapter~8]{Kat10}}. The purpose of the modification is twofold: to incorporate the message $m$ into the transcript, and to slightly generalize Unruh's transform, whose original formulation does not apply to the particular $\Sigma$-protocol used in \cite{CDG+17}.

In particular, part of the proof of \cite[Corollary~5.1]{CDG+17} extends \cite[Theorem~18]{Unr15}, proving that for any post-quantum one-way function $f$, a valid $\picnic(f)$ signature (on any message) is an \emph{argument of knowledge} whose error is negligible in the length of the input.

We summarize this discussion in the following:

\begin{proposition}[\picnic extractability]\label{prop:zkpicnic}
    Let $f$ be a post-quantum one-way function with input length $\secpar$. Assume there exists a \qpt adversary $\sA$ that, on a uniformly random input $\pk\gets\zo^\secpar$, outputs with probability $\varepsilon$ a signed document $(m,\sigma)$ such that $\picnic(f).\ver_{\pk}(m,\sigma)$ accepts. Then there exists a \qpt \emph{extractor} $\sE$ that, on a uniformly random input $\pk\gets\zo^\secpar$, outputs $\sk$ such that $f(\sk)=\pk$ with probability $\varepsilon - \negl$.
\end{proposition}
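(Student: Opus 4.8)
The plan is to recognize a valid $\picnic(f)$ signature as an accepting transcript of a non-interactive proof of knowledge and then let $\sE$ be the associated knowledge extractor. I would first recall the architecture of $\picnic(f)$ from \cite{CDG+17}: one starts from the MPC-in-the-head $\Sigma$-protocol $\Pi_f$ (the ZKB++ protocol), which is a zero-knowledge proof of knowledge for the relation $R_f=\{(y,x):f(x)=y\}$, and then applies a slightly generalized Unruh transform \cite{Unr15}, folding the message into the hash that derives the challenge, to obtain a non-interactive proof system secure in the QROM. In this scheme $\sign_{\sk}(m)$ is exactly the Unruh prover run on statement $\pk=f(\sk)$ with witness $\sk$ and context string $m$, and $\ver_\pk(m,\sigma)$ is the corresponding Unruh verifier (see \cite{Kat10} for how $\Sigma$-protocols yield signatures this way). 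The external ingredient I would invoke is that this proof system is an \emph{online-extractable} (straight-line) argument of knowledge with negligible knowledge error---this is \cite[Theorem~18]{Unr15}, extended to $\Pi_f$ and to the message-bearing variant inside the proof of \cite[Corollary~5.1]{CDG+17}: there is a \qpt procedure that, given the list of random-oracle queries made while producing an accepting proof $\sigma$ for a statement $y$, outputs $x$ with $f(x)=y$, failing only with probability $\negl$ in $\secpar$ (absorbing the $\mathrm{poly}(q)/2^{\secpar}$-type terms from the bound, where $q$ bounds the number of oracle queries).

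Given this, $\sE$ is immediate. On input $\pk\gets\zo^\secpar$, I would have $\sE$ run $\sA(\pk)$ while simulating $\sA$'s random oracle itself so that it records every query; when $\sA$ halts with output $(m,\sigma)$, feed $\sigma$, the statement $\pk$, the context $m$, and the recorded transcript to the online extractor above to obtain a candidate $\sk$, and output $\sk$ if $f(\sk)=\pk$ and $\bot$ otherwise. Since $\sA$ is \qpt and the extractor adds only polynomial overhead, $\sE$ is \qpt. By definition of $\picnic(f)$, the event that $\ver_\pk(m,\sigma)$ accepts is exactly the event that $\sigma$ is an accepting non-interactive proof for statement $\pk$ with context $m$; so, conditioned on that event (probability $\varepsilon$ over $\pk$, over $\sA$'s randomness, and over the simulated oracle), the extractor returns a valid witness except with probability $\negl$, giving $\sE$ success probability at least $\varepsilon-\negl$. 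The knowledge-soundness guarantee is per-statement, so averaging over the uniform $\pk$ changes nothing; in particular I would not need $\pk$ to lie in the image of $f$---if it does not then $\varepsilon$ is already negligibly small, and on any successful run $\sE$ has in fact exhibited a preimage.

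The step I expect to be the main obstacle is the careful invocation of \cite[Theorem~18]{Unr15} for this precise object. Two points need attention. First, I must use \emph{online} (rewinding-free) extraction: a classical special-soundness extractor would rewind $\sA$, which is problematic for a quantum adversary; it is exactly Unruh's straight-line extractor that avoids this, and one must confirm (as \cite{CDG+17} do) that applying their message-bearing variant of the transform to ZKB++, rather than to Unruh's original protocol, preserves this property. Second, I must track the quantitative loss: Unruh's bound degrades by a term polynomial in the number of random-oracle queries over the challenge-space size, which is negligible, justifying the clean ``$\varepsilon-\negl$'' form; I would also make sure the final test $f(\sk)\stackrel{?}{=}\pk$ is performed so the guarantee is genuinely one-sided. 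Everything else is bookkeeping: identifying $\sign/\ver$ with the Unruh prover/verifier and observing that the extra input $m$ is merely context hashed into the challenge, playing no role in extraction beyond being recorded.
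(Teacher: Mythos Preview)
Your proposal is correct and matches the paper's approach: the paper does not give a standalone proof but rather states the proposition as a summary of the discussion preceding it, which identifies a $\picnic(f)$ signature with an accepting transcript of the (message-bearing) Unruh transform applied to the ZKB++ $\Sigma$-protocol and invokes exactly the online-extractability established in \cite[Theorem~18]{Unr15} as extended inside the proof of \cite[Corollary~5.1]{CDG+17}. Your write-up is in fact more detailed than the paper's, but the key ingredients---straight-line extraction in the QROM, the role of $m$ as context, and the negligible knowledge error---are the same.
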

\subsection{Security Proof}


\ifnum\masterthesis=0{In this section we prove }\else{We now turn to prove }\fi\cref{thm:seedlift}. We do so in two steps: we show that the seed-lifting is a post-quantum lifting, and then infer from that that it is actually a strong lifting. The first step requires most of the effort, so we start with the short and straightforward second step.

\newcommand{\chng}[2]{\sout{#1} \colorbox{Gray!40!white}{\textcolor{Red}{#2}}}

\begin{proposition}
    If the seed-lifting is \seufcma secure, then it is \sseuflcma secure.
\end{proposition}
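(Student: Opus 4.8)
The plan is a direct reduction showing that the extra $\sign_\sk$ oracle available in the \sseuflcma game gives a \qpt adversary nothing it does not already have. Given an \sseuflcma adversary $\sA$ against the seed-lifted scheme with advantage $\varepsilon$, I will build a \qpt \seufcma adversary $\sB$ against the lifted scheme $(\keygen,\widetilde{\sign},\widetilde{\ver})$ whose advantage is $\varepsilon-\negl$; \seufcma security of the lifting then forces $\varepsilon=\negl$. The crucial observation is that the pre-quantum oracle $\sign_\sk$ only returns fresh \ecdsa signatures under the key $\sk'=\der(\kdf(s,pw),P)_L$, and that the lifted public-key is exactly $\pk=\pkec(\sk')$. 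Since $\pkec$ is a bijection that an efficient quantum algorithm inverts (\cite{BL95}, see \cref{sssec:ec}), a \qpt algorithm holding only $\pk$ can recover $\sk'$ and simulate $\sign_\sk$ perfectly, so the additional oracle can be absorbed into the reduction.

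Concretely, $\sB$ receives $\pk$ from its challenger and first runs the quantum discrete-logarithm algorithm on $\pk$ to obtain $\sk'=\pkec^{-1}(\pk)$, verifying $\pkec(\sk')=\pk$ and re-running if necessary so that this step fails only with negligible probability. It then invokes $\sA(\pk)$, answering each $\widetilde{\sign}_\sk$ query by forwarding it verbatim to its own lifted signing oracle and relaying the answer, and answering each $\sign_\sk$ query on a message $m$ by computing $\ecdsa.\sign_{\sk'}(m)$ with \emph{fresh} signing randomness. When $\sA$ halts with output $(m^\ast,\sigma^\ast)$, $\sB$ outputs the same pair. Conditioned on the recovery of $\sk'$, the view $\sB$ presents to $\sA$ is distributed exactly as in the real \sseuflcma game: the simulated \ecdsa signatures are identically distributed to genuine oracle answers, and the $\widetilde{\sign}_\sk$ answers are genuine. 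Moreover $\sB$ queries $\widetilde{\sign}_\sk$ on precisely the messages $\sA$ does, and neither winning predicate refers to $\sign_\sk$ queries; hence whenever $\sA$'s output satisfies the \sseuflcma winning condition ($\widetilde{\sign}_\sk$ never queried on $m^\ast$ and $\widetilde{\ver}_\pk(m^\ast,\sigma^\ast)$ accepts) it also satisfies the \seufcma winning condition for $\sB$. Therefore $\sB$ wins with probability at least $\varepsilon-\negl$.

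I do not expect a genuine obstacle here — this is the short half of the proof of \cref{thm:seedlift}, with the substantive work residing in establishing that the scheme is a post-quantum lifting. The only points needing a word of care are that the simulated pre-quantum signatures must be produced with independent fresh nonces, so they are perfectly (not merely computationally) indistinguishable from the oracle's outputs, and that the failure probability of the quantum inversion of $\pkec$ is folded into the $\negl$ term, leaving $\sB$ a genuinely \qpt adversary of exactly the type excluded by \seufcma security. It is worth remarking that this argument is precisely the feature separating seed-lifting from key-lifting: here the pre-quantum oracle leaks only \ecdsa signatures under $\sk'$ — equivalently, $\sk'$ itself — which is useless for forging a seed-lifted signature, since such a forgery additionally requires a $\kdf$-preimage mapping onto $\sk'$; whereas in key-lifting the pre-quantum oracle leaks the value $\pk$ that itself acts as the lifted secret-key, which is exactly why the analogous reduction fails there and key-lifting is not a strong lifting.
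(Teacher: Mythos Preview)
Your reduction is correct, but the paper takes a different (and arguably more structural) route. Rather than quantumly inverting $\pkec$ to obtain $\sk'$, the paper observes that a seed-lifted signature has the form $(\sigma,\msk,P)$ and therefore \emph{explicitly} contains $\msk$; one query to the $\widetilde{\sign}_\sk$ oracle on a fresh random message $m_0$ suffices to recover $\msk$ and $P$, after which $\sign_\sk$ queries are answered classically by $\ecdsa.\sign_{\der(\msk,P)_L}(\cdot)$. The only wrinkle is that this extra query might land on the message $\sA$ eventually outputs, but since $m_0$ is uniform and independent of $\sA$'s view this costs only a negligible term.

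Your approach is slightly tighter (no extra oracle call, no $m_0$-collision argument) but leans on the quantum breakability of $\pkec$; the paper's approach is agnostic to that and instead exploits the fact that the lifted signature format was \emph{designed} to expose $\msk$. Both proofs are valid, and your closing remark contrasting seed-lifting with key-lifting is on point and captures exactly why the analogous statement fails for the key-lifted scheme.
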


\begin{proof}
        The idea is that all signatures provided by the seed-lifted schemes contain information that could be used to perfectly simulate the signature oracle of the modified scheme.
    
        Let $\sA$ win the \sseuflcma game with probability $\varepsilon$, and consider the adversary $\sA'$ for the \seufcma game which operates as follows:

        \begin{itemize}
            \item $\sA'$ initiates the simulation of $\sA$ by providing her the $\pk$ she obtained from the challenger.
            \item $\sA'$ queries the signature oracle (recall that since $\sA'$ is an \seufcma adversary, she only has access to the signature oracle of the lifting) on a uniformly random $m_0\gets \zo^\secpar$ to obtain $(\sigma,\msk,P)$.
            \item $\sA'$ resumes simulation of $\sA$, using her own oracle access to answer queries on $\widetilde{\sign}_\sk$, and responding to queries of the form $\sign_\sk(m)$ with $\sigma \gets \ecdsa.\sign_{\der(\msk,P)}(m)$.
            \item Outputs the output of $\sA$
        \end{itemize}

        Note that the view of $\sA$ in the real and simulated games is identical.
    
        By hypothesis, $\sA$ outputs $(\sigma,m)$ such that $\ver_{\pk}(m,\sigma)$ accepts and $\sA$ never made a query on $m$ is $\varepsilon$. In this event, $\sA'$ wins the game unless $m=m_0$. However, since $m_0$ was chosen uniformly, and the view of $\sA$ is independent of $m_0$, it follows that $m\ne m_0$ with overwhelming probability. Hence $\sA_p$ wins the \seufcma game with probability $\varepsilon - \negl$. But by hypothesis the seed lifting is post-quantum, so it follows that $\varepsilon = \negl$, whereby the lifting is strong.
\end{proof}

The first step is the following statement:

\begin{proposition}
    If $\HH$ is modeled as a random oracle, the seed-lifting is \seufcma secure.
\end{proposition}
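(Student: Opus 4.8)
The plan is to first absorb the randomness of the seed, and then reduce any forgery---according to a short case analysis on the derivation path it carries---to one of three facts already established in the excerpt: the post-quantum \seufcma security of $\picnic(\HH)$ (which holds since a random oracle is post-quantum one-way, \cite[Theorem~4]{CLQ20}, and $\picnic(\HH)$ is \seufcma secure for any post-quantum one-way $\HH$, \cite[Corollary~5.1]{CDG+17}); the collision-resistance up to suffixes of $\der$, \cref{prop:dernscrh}; and the preimage-resistance of the iterated derivation, \cref{prop:preimage}, used in tandem with the extractability of \picnic signatures, \cref{prop:zkpicnic}.

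First I would observe that the entire view of an \seufcma adversary $\sA$ is a function of $x := \kdf^{pre}(s,pw)$ alone: the public key is $\pkec(\der(\HH(x),P)_L)$ (recall $\kdf=\kdf^{pq}\circ\kdf^{pre}=\HH\circ\kdf^{pre}$, so $\HH(x)=\msk$), and each reply of $\widetilde{\sign}_{\sk}$ on a message $m$ has the form $(\picnic(\HH).\sign_{x}(m,P),\HH(x),P)$. Since $s\gets\zo^\secpar$ has guessing probability at most $2^{-\secpar}$, \cref{prop:kdf} (applied to $\kdf^{pre}=\HH^{2047}$, iterating if desired) shows $x$ is computationally indistinguishable from a uniformly random point in the domain of $\HH$; hence it suffices to bound $\sA$'s advantage $\varepsilon$ in the modified game in which $x$ is uniform, $\msk=\HH(x)$ and $\pk=\pkec(\der(\msk,P)_L)$.

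Now write a winning forgery as $(m^*,(\sigma^*,\msk^*,P^*))$; then $m^*$ was never queried, $\picnic(\HH).\ver_{\msk^*}((m^*,P^*),\sigma^*)$ accepts, and---by injectivity of $\pkec$---$\der(\msk^*,P^*)_L=\der(\msk,P)_L$. I would then distinguish three cases along the lines of the overview. \emph{Identical path, $P^*=P$} (the case $\msk^*\ne\msk$ reducing to a hash collision as below): then $\sigma^*$ is a valid $\picnic(\HH)$ signature on $(m^*,P)$ under $\msk=\HH(x)$, and since $\widetilde{\sign}_{\sk}$ only ever calls the $\picnic$ signing oracle on $(m,P)$ for queried $m$, this breaks the \seufcma security of $\picnic(\HH)$ (after the reduction step the key $\msk$ has exactly the $\picnic$ key distribution). \emph{Non-suffix case}: if $(\msk^*,P^*)$ and $(\msk,P)$ are not in the $\der$-suffix relation (\cref{defn:nscrh}) in either direction and the two derived extended keys agree fully, we have a non-suffix collision of $\der$, excluded by \cref{prop:dernscrh}. \emph{Remaining cases}---the two derived keys agree only in their secret-key halves $(\cdot)_L$, or one pair is a proper $\der$-suffix of the other (including the degenerate $P^*=\emptyset$, where the adversarially chosen $\msk^*$ is not pushed through any derivation hash): here I would first invoke the argument-of-knowledge property behind \cref{prop:zkpicnic} to extract, while running $\sA$, a value $x^*$ with $\HH(x^*)=\msk^*$; this exposes $\msk^*$ as a hash image, after which $\der(\msk^*,P^*)_L=\der(\msk,P)_L$ either produces a collision of the half-hashes $\HH_L,\HH_R$ or---through the presence of a non-empty iterated derivation on one side, together with $\msk\approx$ uniform---contradicts \cref{prop:preimage}. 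Adding the three negligible bounds gives $\varepsilon=\negl$.

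The hard part will be the last bundle of cases, and more generally making the use of $\picnic$ extractability rigorous: \cref{prop:zkpicnic} is stated for a uniformly sampled $\picnic$ public key, whereas here $\msk^*$ is chosen by $\sA$ and is tied, through the BIP-32 derivation (which interleaves hashing with elliptic-curve group additions), to the very key $\msk$ under which $\sA$ receives signatures; one must check the extractor still succeeds, and that the extracted preimage really does collapse into an instance forbidden by \cref{prop:preimage} or into a direct half-hash collision. A secondary but pervasive nuisance is that the forgery equates only the $(\cdot)_L$-projections of the two derived extended keys, so at every step a projected equality must be separately reduced either to a full $\der$-collision or to a collision of $\HH_L,\HH_R$.
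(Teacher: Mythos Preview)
Your plan is essentially the paper's: first pass to a uniform $x$ in place of $\kdf^{pre}(s,pw)$ (the paper's hybrid $H_0\to H_1$), then do a case split on how $(\msk^*,P^*)$ relates to $(\msk,P)$ and reduce each branch to \cref{prop:dernscrh}, \cref{prop:preimage}, or the \seufcma security of $\picnic(\HH)$.

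The one organizational difference worth flagging is that the paper keeps the two suffix directions separate, and this directly dissolves the extractability worry you raise. In the ``extend'' direction ($(\msk,P)$ a $\der$-suffix of $(\msk^*,P^*)$, the paper's event $E_2$), the adversary hands you $\msk^*$ explicitly in the forgery; the suffix relation gives $\HH(x)=\msk=\der(\msk^*,Q)=\HH^Q(\msk^*)$, which is exactly an instance of \cref{prop:preimage} with the uniform element being $x$---no extraction needed, and no uniformity issue for $\msk^*$. Extraction via \cref{prop:zkpicnic} is used only in the ``truncate'' direction ($E_3$), and there the paper first fixes the path $\tilde P$ and the truncation point $\tilde t$, so that $\msk^*=\der(\msk,Q)$ is a determined function of the uniform $\sk_p$; this is what lets the extractor be invoked and yields $x^*$ with $\HH(x^*)=\HH^Q(\msk)$, again fed into \cref{prop:preimage}. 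So rather than wrestling with extractability against an adversarially chosen $\msk^*$, you can simply avoid the issue by treating the two suffix directions asymmetrically. As for the $(\cdot)_L$-projection nuisance you point out, the paper's hybrids simply work with the full equality $\der(\msk',P')=\der(\msk,P)$ rather than equality of left halves; you are right that closing this gap formally costs one more appeal to the collision-resistance of $\HH_R$.
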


\begin{proof}
We prove this by a sequence of hybrids.

The first hybrid $H_0(\sA,\secparam)$ is just a restatement of the $\seufcma$ game, explicated for the seed-lifting:

\begin{enumerate}
    \item $\sC$ samples $(\sk,\pk)\gets \keygen(\secparam)$ where $\sk=(s,pw,P)$
    \item Set $\sk_p = \kdf^{pre}(s,pw)$
    \item Set $\msk = \kdf^{pq}(\sk_p)$, note that by definition $\pk = \der(\msk,P)$
    \item $\sC$ send $\pk$ to $\sA$
    \item Let $q$ be the number of queries made by $\sA$, for $i=1,\ldots,q$:
    \begin{enumerate}
        \item $\sA$ sends $m_i$ to $\sC$
        \item $\sC$ computes $\sigma \gets \picnic(\kdf^{pq}).\sign_{\sk_p}(m,P)$
        \item $\sC$ sends $(\sigma,\msk,P)$ to $\sA$
    \end{enumerate}
    \item $\sA$ sends $(m,(\sigma,\msk',P'))$ to $\sC$
    \item if $\exists i : m_i = m$, output 0
    \item if $\der(\msk',P')\ne \pk$, output 0
    \item output $1$ if $\picnic(\kdf^{pq}).\ver_{\msk'}((m,P'),\sigma)$ accepts, $0$ otherwise.
\end{enumerate}

In the next hybrid $H_1(\sA,\secparam)$ we choose $\sk_p$ uniformly at random:

\begin{enumerate}
    \setcounter{enumi}{1}
    \item \chng{Set $\sk_p = \kdf^{pre}(s,pw)$\\}{$\sC$ samples $\sk_p \gets \zo^\ell$} (where $\ell$ is the secret-key length) 
\end{enumerate}

\begin{claim}
    For any \qpt $\sA$ it holds that 
    $|\PP[H_0(\sA,\secparam)=1] - \PP[H_1(\sA,\secparam)=1]| = \negl$.
\end{claim}

\begin{proof}
    Consider the distributions $\DD_0$ and $\DD_1$ where $\DD_0$ is uniform on $\zo^\ell$ and $\DD_1$ is sampled by sampling $(\sk=(s,pw,p),\pk)\gets \keygen(\secparam)$ and outputting $\kdf^{pre}(s,pw)$.

    Recall that $\kdf^{pre} = \HH^{2047}$. That $\HH$ is uniformly random does not imply that $\kdf^{pre}$ is uniformly random. However, \cite{BDD+17} prove that the distribution of iterating a random function (with an exponentially large domain) a constant number of times is computationally indistinguishable from uniformly sampling a function. Combined with the fact that $\gamma_{(s,pw)}=\negl$ (recall \cref{defn:guessing}) we get that if from \cref{prop:kdf} that $|\PP[\sA_D^{\DD_0}(\secparam)=1] - \PP[\sA_D^{\DD_0}(\secparam)=1| = \negl$, where $\sA_D^\DD$ is any \qpt procedure that is given oracle access to a distribution $\DD$.

    Now consider the following procedure $\sA_D^{\DD}(\secparam)$:
    \begin{itemize}
        \item $\sA_D$ samples $\sk_p \gets \DD$
        \item $\sA_D$ samples $(\sk,\pk)\gets\keygen(\secparam)$, she discards all data but $P$
        \item $\sA_D$ computes $\msk = \kdf^{pq}(\sk_p)$ and $\pk = \der(\msk,P)$
        \item $\sA_D$ simulates the $\seufcma$ game with $\sA$ by exactly emulating $\sC$, and outputs the output of $\sC$
    \end{itemize}

    Note that the view of $\sA$ in the simulation carried by $\sA_D^{\DD_b}(\secparam)$ is identical to the view in the hybrid $H_b(\sA,\secparam)$. Hence $\PP[H_b(\sA,\secparam)=1]=\PP[\sA_D^{\DD_b}(\secparam)=1]$ and it follows that $|\PP[H_0(\sA,\secparam)=1] - \PP[H_1(\sA,\secparam)=1]| = \negl$.
\end{proof}

In the hybrid $H_2$ we add a new failure condition that trivially never holds, it will be useful for the following hybrids:

\begin{enumerate}
     \setcounter{enumi}{6}
    \item ...
    \item if $\der(\msk',P')\ne \pk$, output 0
    \item \chng{}{Set $E = \mathsf{True}$}
    \item \chng{}{if $E = \mathsf{False}$, output $0$}
    \item output $1$ if $\picnic(\kdf^{pq}).\ver_{\msk'}((m,P'),\sigma)$ accepts, $0$ otherwise.
\end{enumerate}

Since $E$ is always true, it follows that $H_1$ and $H_2$ execute identically, proving:

\begin{claim}
    For any \qpt $\sA$ it holds that $\PP[H_1(\sA,\secparam)=1] = \PP[H_2(\sA,\secparam)=1]$.
\end{claim}

In the hybrid $H_3$ we decompose $E$ into a conjunction of several events $E_1,\ldots,E_4$:

\begin{enumerate}
     \setcounter{enumi}{8}
    \item \chng{Set $E = \mathsf{True}$\\}{Set $E = E_1 \vee E_2 \vee E_3 \vee E_4$ where}:
    \begin{enumerate}
        \item \chng{}{$E_1 = \mathsf{True}$ if $(\msk,P)\ne (\msk',P')$ and neither is a $\der$-suffix of the other}
        \item \chng{}{$E_2 = \mathsf{True}$ if $(\msk,P)$ is a $\der$-suffix of $(\msk',P')$}
        \item \chng{}{$E_3  = \mathsf{True}$ if $(\msk',P')$ is a $\der$-suffix of $(\msk,P)$}
        \item \chng{}{$E_4  = \mathsf{True}$ if $(\msk,P) = (\msk',P')$}
    \end{enumerate}
    \item if $E = \mathsf{False}$, output $0$
\end{enumerate}

\begin{claim}\label{test}
    For any \qpt $\sA$ it holds that $\PP[H_2(\sA,\secparam)=1] = \PP[H_3(\sA,\secparam)=1]$.
\end{claim}

\begin{proof}
    It suffices to show that in $H_3$ it holds with certainty that $E=\mathsf{True}$.

    Consider the strings $(\msk,P)$, $(\msk',P')$. If they are equal, then $E_4 = \textsf{True}$. If $\der(\msk,P) \ne \der(\msk',P')$ then neither can be a suffix of the other (since a suffix is a type of collision), so $E_1 = \textsf{True}$. If $\der(\msk,P) = \der(\msk',P')$ then they constitute a collision in $\der$. If it is a suffix collision, then either $E_2 = \textsf{True}$ or $E_3 = \textsf{True}$ (depending on which of the strings is a suffix of the other). If it is not a suffix collision, then $E_1 = \textsf{True}$.
\end{proof}

In the hybrid $H_4$ we set $E_1$ to false:

\begin{enumerate}
     \setcounter{enumi}{8}
    \item Set $E = E_1 \vee E_2 \vee E_3 \vee E_4$ where:
    \begin{enumerate}
        \item \chng{$E_1 = \mathsf{True}$ if $(\msk,P)\ne (\msk',P')$ and neither is a $\der$-suffix of the other\\} {$E_1 = \mathsf{False}$}
        \item $E_2 = \mathsf{True}$ if $(\msk,P)$ is a $\der$-suffix of $(\msk',P')$
        \item $E_3  = \mathsf{True}$ if $(\msk',P')$ is a $\der$-suffix of $(\msk,P)$
        \item $E_4  = \mathsf{True}$ if $(\msk,P) = (\msk',P')$
    \end{enumerate}
    \item if $E = \mathsf{False}$, output $0$
\end{enumerate}

\begin{claim}
    For any \qpt $\sA$ it holds that $|\PP[H_3(\sA,\secparam)=1] - \PP[H_4(\sA,\secparam)=1]| = \negl$.
\end{claim}

\begin{proof}

    We note that $H_3$ conditioned on $E_1 = \textsf{False}$ is identical to $H_4$. Hence, $|\PP[H_3(\sA,\secparam)=1] - \PP[H_4(\sA,\secparam)=1]|$ is at most the probability that $E_1 = \textsf{True}$ in $H_3$.
    
    Consider a \qpt adversary $\sA_c$ that perfectly simulates $H_3(\sA,\secparam)$ until receiving an output. $\sA_c$ then outputs $((\msk,P),(\msk',P'))$ where $(\msk',P')$ was taken from $\sA$'s output. The event that $((\msk,P),(\msk',P'))$ is a non-suffix collision in $\der$ is exactly the event that $E_1 = \textsf{True}$. \cref{prop:dernscrh} asserts this event has negligible probability.
\end{proof}

In the hybrid $H_5$ we set $E_2$ to false:

\begin{enumerate}
     \setcounter{enumi}{8}
    \item Set $E = E_1 \vee E_2 \vee E_3 \vee E_4$ where:
    \begin{enumerate}
        \item $E_1 = \mathsf{False}$
        \item \chng{$E_2 = \mathsf{True}$ if $(\msk,P)$ is a $\der$-suffix of $(\msk',P')$ \\ } {$E_2 = \mathsf{False}$}
        \item $E_3  = \mathsf{True}$ if $(\msk',P')$ is a $\der$-suffix of $(\msk,P)$
        \item $E_4  = \mathsf{True}$ if $(\msk,P) = (\msk',P')$
    \end{enumerate}
    \item if $E = \mathsf{False}$, output $0$
\end{enumerate}

\begin{claim}
    For any \qpt $\sA$ it holds that $|\PP[H_4(\sA,\secparam)=1] - \PP[H_5(\sA,\secparam)=1]| = \negl$.
\end{claim}

\begin{proof}
    Like in the previous claim, we have that $|\PP[H_4(\sA,\secparam)=1] - \PP[H_5(\sA,\secparam)=1]|$ is bound by the probability that $E_2 = \mathsf{True}$ in $H_4$.
    
    Consider the following adversary $\sA_s$ that given a uniformly random $\sk_p$ does the following:
    \begin{itemize}
        \item calculates $\msk=\kdf^{pq}(\sk_p)$
        \item uses $\keygen$ to sample a path $P$
        \item calculates $\pk = \der(\msk,P)$
        \item simulates $\sA$ using $\pk$ as input and $\sk_p,P$ to answer oracle calls
        \item recovers $(\msk',P')$ from the output of $\sA$
        \item if $(\msk,P)$ is a $\der$-suffix of $(\msk',P')$, let $Q$ such be a path such that $P'=Q\|P$, and output $(\msk',Q)$, otherwise output $\bot$.
    \end{itemize}

    Note that the event that $E_2=\mathsf{True}$ in $H_4$ is exactly the event that $\sA_s$ didn't output $\bot$. In this case, the output $(\msk',Q)$ if $\sA$ satisfies that $\der(\msk',Q) = \kdf^{pq}(\sk_p)$. Recalling that $\sk_p$ is uniformly random, $\kdf^{pq} = \HH$ and $\der(\msk',Q) = \HH^Q(\msk')$, it follows from \cref{prop:preimage} that this could only happen with negligible probability.
\end{proof}

In the hybrid $H_6$ we set $E_3$ to false:

\begin{enumerate}
     \setcounter{enumi}{8}
    \item Set $E = E_1 \vee E_2 \vee E_3 \vee E_4$ where:
    \begin{enumerate}
        \item $E_1 = \mathsf{False}$
        \item $E_2 = \mathsf{False}$
        \item \chng{$E_3  = \mathsf{True}$ if $(\msk',P')$ is a $\der$-suffix of $(\msk,P)$ \\}{$E_3  = \mathsf{False}$}
        \item $E_4  = \mathsf{True}$ if $(\msk,P) = (\msk',P')$
    \end{enumerate}
    \item if $E = \mathsf{False}$, output $0$
\end{enumerate}

\begin{claim}
    For any \qpt $\sA$ it holds that $|\PP[H_5(\sA,\secparam)=1] - \PP[H_6(\sA,\secparam)=1]| = \negl$.
\end{claim}

\begin{proof}
    Keeping the same line of reasoning, it suffices to show that the event that $E_3 = \textsf{True}$ in $H_5$ has negligible probability.

    The event $E_3$ is the event that $\sA$ managed to produce an output $((\sigma,\msk',P'),m)$ with the following properties:
    \begin{itemize}
        \item $(\msk',P')$ is a $\der$-suffix of $(\msk,P)$. That is, there is some non-empty path $Q$ such that $P=Q\|P'$ and $\msk'=\der(\msk,Q)$.
        \item $\picnic(\kdf^{pq}).\ver_{\msk'}((m,P'),\sigma)$ accepts.
        \item $\sA$ has never made a signature query on $m$.
        \item $\der(\msk',P') = \pk$.
    \end{itemize}

    Say this happens with probability $\varepsilon$. Since $P$ is sampled independently of $\sk_p$, there has to be a particular path $\tilde{P}$ such that the probability of $E$ conditioned on $P=\tilde{P}$ is at least $\varepsilon$.

    Let $q+1$ be the length of $P$ (recall that $P$ must have length at least two for the event $E_3$ to hold), for any $t=1,\ldots,q$ let $P_t$ be the suffix of $\tilde{P}$ of length $t$. Then there must exist some $\tilde{t}$ such that $P' = P_{\tilde{t}}$ with probability at least $\varepsilon/q$.

    Let $\sA_z$ be the procedure that on input $\sk_p$ does the following:
    \begin{itemize}
        \item Calculates $\msk = \HH(\sk_p)$ and $\pk = \pkec(\der(\msk,\tilde{P})_L)$ and gives $\pk$ as input to $\sA$
        \item Simulates the hybrid $H_5$ responding to signature queries with $(\sigma,\msk,\tilde{P})$ where $\sigma \gets \picnic(\HH).\sign_{\sk_p}(m,\tilde{P})$
        \item Resumes operations until obtaining output $((\sigma,\msk',P'),m)$.
        \item Outputs $\sigma$ if all the following hold: $P'=\tilde{P}_{\tilde{t}}$, $\msk' = \der(\msk,Q)$ where $\tilde{P}=Q\|P'$, $m$ was never the input to a query and $\picnic(\HH).\ver_{\msk'}((m,P),\sigma)$ accepts. Otherwise, output $\bot$.
    \end{itemize}

    By design. If $\sk_p$ is uniformly random, then the probability that $\sA_z$ doesn't output $\bot$ is at least $\varepsilon/q$. In this case, $\sA_z$ managed to output a signature that passes the $\picnic(\HH)$ verification with respect to the public-key $\msk'$.

    It follows from \cref{prop:zkpicnic} that there exists an extractor $\sE$ which outputs an $\HH$-preimage $x$ of $\msk'$ with probability $\varepsilon/q - \negl$. We thus have that $\HH(x) = \HH^Q(\msk)$. By \cref{prop:preimage} this can only happen with negligible probability. We get that $\varepsilon/q-\negl$ is negligible and so $\varepsilon$ is also negligible.
\end{proof}

In the last hybrid $H_7$ we set $E_4 = \mathsf{False}$:

\begin{enumerate}
     \setcounter{enumi}{8}
    \item Set $E = E_1 \vee E_2 \vee E_3 \vee E_4$ where:
    \begin{enumerate}
        \item $E_1 = \mathsf{False}$
        \item $E_2 = \mathsf{False}$
        \item $E_3  = \mathsf{False}$
        \item \chng{$E_4  = \mathsf{True}$ if $(\msk,P) = (\msk',P')$ \\}{$E_4  = \mathsf{False}$}
    \end{enumerate}
    \item if $E = \mathsf{False}$, output $0$
\end{enumerate}

\begin{claim}
    For any \qpt $\sA$ it holds that $|\PP[H_5(\sA,\secparam)=1] - \PP[H_6(\sA,\secparam)=1]| = \negl$.
\end{claim}

\begin{proof}
    Keeping the same line of reasoning, it suffices to show that the event that $E_4 = \textsf{True}$ in $H_6$ has negligible probability.

    Let $\varepsilon$ be the probability that $E_4 = \mathsf{True}$ in $H_6$, let $\sA_p$ be the following \seufcma adversary for $\picnic(\kdf^{pq})$:
    
    \begin{itemize}
        \item Let $(\sk_p,\pk_p)$ denote the keys generated by $\sC$.
        \item After receiving $\pk_p$, $\sA_p$ samples $((s,pw,P),\pk)\gets \keygen(\secparam)$ and discards all data but $P$ (recall that $P$ is sampled independently of $(s,pw)$.
        \item $\sA_p$ uses $\pk = \der(\pk_p,P)$ as input to $\sA$.
        \item $\sA_p$ responds to a signature query on $m$ by querying the $\picnic(\kdf^{pq}).\sign_{\sk_p}$ oracle on $(m,P)$ to obtain $\sigma$ and outputting $(\sigma,\pk_p,P)$.
        \item $\sA_p$ resumes the simulation until obtaining an output $((\sigma,\msk',P'),m)$. If $(\pk_p,P)=(\msk',P')$, $m$ was never queried and $\picnic(\kdf^{pq}).\ver_{\pk_p}(m,\sigma)$ accepts, $\sA_p$ outputs $m$. Otherwise $\sA_p$ outputs $\bot$.
    \end{itemize}
    
    Note that the view of $\sA$ in the simulation and in $H_5$ is identical. Also note that $\sA_p$ either wins the \seufcma game or outputs $\bot$, and the former happens exactly when $\sA$ outputs a winning output. That is, the probability that $\sA_p$ wins the game is exactly $\varepsilon$.

    Now note that, since $\kdf^{pq} = \HH$ is modeled as a random oracle, it follows from \cite[Theorem~4]{CLQ20} that it is post-quantum one-way. It then follows from \cite[Corollar~5.1]{CDG+17} that $\picnic(\kdf^{pq})$ is \seufcma secure. Since $\sA_p$ is a \qpt adversary that wins the \seufcma game for $\picnic(\kdf^{pq})$ with probability $\varepsilon$, it follows that $\varepsilon = \negl$.
        
\end{proof}

By stringing all claims above together and applying the triangle inequality, we obtain that it holds for any \qpt adversary $\sA$ that $|\PP[H_0(\sA,\secparam) = 1] - \PP[H_7(\sA,\secparam) = 1] | =  \negl$.

We conclude the proof by noting that in $H_7$ it always holds that $E=\textsf{False}$, whereby, for any \qpt adversary $\sA$ we have that $\PP[H_7(\sA,\lambda)=1]=0$. It follows that $\PP[H_0(\sA,\secparam) = 1] =  \negl$ as needed.

\end{proof}


\section{Quantum Canaries}\label{sec:quantum_canaries}


In this section, we introduce \emph{quantum canaries}: puzzles designed to be intractable for classical computers but solvable for quantum computers whose scale is significantly smaller than required to compromise \ecdsa signatures. A solution to the puzzle will then act as a heads-up for the network that the quantum era is near. To incentivize quantum entities to solve the puzzle, thus alerting the network to the presence and compromising the quantum loot, we propose awarding a monetary prize to the first solver. In \cref{ssec:canarygame}, we provide a game theoretic analysis of two entities competing for the canary bounty and quantum loot.

Once the canaries are set up, they could be used to implement policies in consensus, such as: any non quantum-cautious transaction posted more than 10,000 blocks after the canary puzzle was solved is considered invalid.

The idea of using cryptographic canaries to award bounties for discovering exploits was first discussed in \cite{Dra18}. We discuss in more depth the specific application of canaries to detect quantum entities. In particular, we analyze the behavior of two entities competing for the quantum loot and discuss the option to fund the bounty for the canary based on \utxos that would be burned.

\subsection{Properties of Good Canaries}

We expect a quantum canary to satisfy the following properties:

\begin{itemize}
    \item \textbf{Similarity}\quad The task of killing the canary should be as similar as possible to the task of forging an \ecdsa signature for a selected message given access to the public-key. If the challenge we choose for killing the canary is vastly different from breaking \ecdsa, it becomes more plausible that a future optimization could reduce the scale required for one task over the other, leading to a scenario where killing the canary becomes nearly as hard or even harder than breaking \ecdsa, allowing a future attacker to claim the bounty \emph{and} the loot.
    There are several incomparable complexity metrics for "hardness" in quantum computing\ifnum\masterthesis=1{ (see \cref{sec:prelim_qc} for an overview)}\fi, the most studied being the circuit size, and circuit depth (directly related to the time complexity), T-count, and T-depth (which are the important measure when dealing with certain quantum error-correcting codes, in which Clifford gates are cheap to implement) and the number of qubits (space complexity).
    Different problems offer different trade-offs, and since it is impossible to predict which quantum resource will develop more quickly, it is desirable that killing the canary will have trade-offs similar to breaking \ecdsa.
    
    \item \textbf{Incentive}\quad 
    There should be a clear incentive for quantum entities that \emph{can} kill the canary to actually do so, rather than hiding their quantum capabilities until they mature enough to loot pre-quantum coins.
    
    \item \textbf{Security}\quad The mechanism for posting the solution to the challenge should be secure against forking attacks. In particular, it should be infeasible for anyone listening to the mempool to claim the solution as their own.
    \item \textbf{Nothing Up My Sleeve}\quad The procedure for generating the challenge should be publicly known. Any randomness used for generating the challenge should be sampled from a verifiably random source. All steps in the generating procedure should be justified, and arbitrary choices (that could conceal backdoor solutions) should be avoided.
\end{itemize}

\subsection{Choosing the Puzzle}\label{ssec:canarypuzzle}

Since our ultimate concern is entities that are able to break \ecdsa signatures over the \secp curve, a natural contender for a puzzle is to forge an \ecdsa signature over a curve with less security\footnote{Another natural contender is to skip the signing mechanism and simply require the solver to solve a logarithmic equation over a suitable elliptic curve. However, it is hard to argue that the difficulty of this task scales the same as creating an \ecdsa signature. An \ecdsa signature contains many steps besides solving the logarithmic equation, so implementing the entire signing mechanism is better in terms of similarity.}.

A necessary (but not sufficient) condition for this solution to have an untrusted setup is that we are able to generate a public-key without learning anything about the matching secret-key. Fortunately, \ecdsa has the nice feature that the public-key is merely a random point on the elliptic curve. This leads to the following approach:
\begin{itemize}
    \item select an appropriate curve,
    \item sample a random point $\pk$ on the curve and a random string $r$,
    \item post $(\pk,r)$ to the blockchain,
    \item to kill the canary, post $\sigma$ which passes \ecdsa verification as a signature of $r$ with respect to the chosen curve. 
\end{itemize}

The missing components to fully specify the canary are the curve itself, and a method for generating a public-key which does not provide information about the corresponding secret-key.

It is desirable that the family of curves we choose from bears as much similarity as possible to \secp, or we risk the scenario previously described where an optimization is found to computing discrete logarithm on \secp but not on the challenge curve (or vice versa).

The \secp curve is a curve over a prime field whose order is also prime, given in Weierstrass form. Recent works \cite{RNSL17,HJN+20} provide concrete cryptanalysis for computing discrete logarithms in such elliptic groups as a function of the order of the prime field above which they are defined.

The discussion above suggests that a good choice for a curve family might be the family of all Weierstrass curves over prime fields of prime order whose binary representation has length $b$, for some appropriately chosen $b$.

However, simply choosing an arbitrary curve of convenient parameters and counting our blessings is not a sensible approach. Appropriately choosing a secure elliptic curve is a very difficult task, and a carelessly chosen curve could easily exhibit classically feasible exploits (for an overview of selecting elliptic curves, see e.g. \cite{BCL+16}). Unfortunately, our current application is unusual in the sense that we deliberately seek out curves whose bit security is \emph{intermediate}, in the sense that it is lower than currently used curves but still high enough to withstand classical attacks. Virtually all available literature focuses on curves chosen to be as \emph{strong} as possible.

Moreover, the \secp curve was chosen partly due to its convenient parameterization, which allows for optimizing arithmetic operations over the curve group. This has no effect on the asymptotic computational cost of any algorithm computing the discrete logarithm, but has a significant impact on the constants. It is plausible that computing a discrete logarithm over \secp might be more efficient than computing a discrete logarithm over a general curve with fewer bits of security due to these optimizations. The curve selection process must be aware of these nuances. It might be desirable to require that the canary curves also furnish a compact binary representation. We leave the task of choosing a concrete curve to further discussion and future work.

\subsection{Funding the Bounty}\label{ssec:canary_incent}

The bounty could be funded from several sources: it could be raised from the community, freshly minted for that purpose, or "borrowed" from future inflation (e.g. allocating 5\% of all future block rewards). Another funding source unique to our setting is the funds destined to burn. \ifnum\masterthesis=0{Recall that in our proposed solution (see \cref{intro:proposal})}\else{In the final specification of the protocol (see \cref{sec:procrast_full})}\fi, all \utxos whose addresses were posted to the blockchain prior to 2013 will become forever unspendable two months after the canary is killed. The bounty for killing the canary could be taken from these funds.

However, we do not recommend using the burned funds, as there is uncertainty regarding how much of these funds will remain by the time the quantum era starts: if all users will switch to post-quantum addresses during the period between the canary is killed and the quantum era starts, no bounty will be left. This uncertainty could deter entities from claiming the bounty, and encourage them to wait for the loot.

\subsection{Adaptation to Taproot}

So far we implicitly assumed that quantum loot is locked behind \ecdsa signatures. However, as the newly deployed Taproot update \cite{WNT20} gains adoption, more quantum loot is accumulated behind Schnorr signatures (see \ifnum\masterthesis=0{\cref{ssec:qthreats}}\else{\cref{ssec:intro_bitcoin_attacks}}\fi). Fortunately, the implementation of Schnorr signatures in Taproot is instantiated with the same \secp curve used to instantiate the \ecdsa scheme.

However, one might argue that Schnorr and \ecdsa are not similar enough, and by specializing the puzzle to \ecdsa we take the risk that a quantum entity will be able to loot \textsf{P2TR} \utxos before they would be able to collect the bounty.

Fortunately, in both the \ecdsa and Schnorr schemes, the public-key is a uniformly random point on \secp. Hence, a wide adoption of \secp-instantiated Schnorr signatures (e.g. via Taproot) could be addressed by modifying the puzzle such that a valid solution is a signature which passes \emph{either} \ecdsa or Schnorr verification (w.r.t. to the sampled public-key).

\subsection{Game Theoretic Analysis}\label{ssec:canarygame}

\newcommand{\fw}[1]{
\begin{minipage}{2.3ex}
  \centering$#1$
\end{minipage}
}

\newcommand{\fcolor}[1]{\textcolor{ForestGreen}{#1}}
\newcommand{\scolor}[1]{\textcolor{Bittersweet}{#1}}

\newcommand{\fstyle}[1]{\fcolor{\fw{#1}}}
\newcommand{\sstyle}[1]{\scolor{\fw{#1}}}

\newcommand{\fl}{\fstyle{\LEFTCIRCLE}}
\newcommand{\fm}{\fstyle{\CIRCLE}}
\newcommand{\fr}{\fstyle{\RIGHTCIRCLE}}
\newcommand{\sle}{\sstyle{\ll}}
\newcommand{\sm}{\sstyle{\texttt{\DOWNarrow}}}
\newcommand{\sr}{\sstyle{\gg}}
\newcommand{\spa}{\hspace{2.3ex}}
\newcommand{\FE}{\fstyle{E}}
\newcommand{\FL}{\fstyle{L}}
\newcommand{\SE}{\sstyle{E}}
\newcommand{\SL}{\sstyle{L}}

\newcommand{\fln}{\fcolor{\LEFTCIRCLE}\xspace}
\newcommand{\fmn}{\fcolor{\CIRCLE}\xspace}
\newcommand{\frn}{\fcolor{\RIGHTCIRCLE}\xspace}
\newcommand{\slen}{\scolor{$\ll$}\xspace}
\newcommand{\smn}{\scolor{\texttt{\DOWNarrow}}\xspace}
\newcommand{\srn}{\scolor{$\gg$}\xspace}

\newcommand{\spaces}[1]{
    \foreach \n in {1,...,#1}{\spa}
}

\ifnum\masterthesis=0{
In this section we argue that quantum-capable entities, including dishonest ones, are incentivised to claim the bounty, even at the price of relinquishing the loot. To do so, we consider an idealized setting in which exactly two quantum entities exist. Furthermore, for each entity, it is known (to both entities) exactly when they would be able to claim the loot and the bounty. 
}\else{
In this section, we consider two quantum-capable entities that are competing for Bitcoin profits. We assume the two entities do not collude, and that their only utility is maximizing their profit, with no regard to ethical considerations. We further assume that the entities are completely aware of the quantum capabilities of each other. We present a simple model capturing this scenario (in particular, the "perfect knowledge" is modeled by assuming that it is known for each entity exactly when both entities would be able to claim the loot and the bounty). We map out the possible scenarios in this setting and find that most of them are "good" scenarios in which quantum-capable entities are incentivized to relinquish the loot and claim the bounty. Furthermore, we that all scenarios converge to "good" scenarios as the value of the bounty increases and the waiting time between claiming the bounty and prohibiting quantum unsafe transaction decreases. The scenarios and the convergences between them are detailed in \cref{tab:timeline_and_payoffs}.
}\fi

We consider two strategies: the \emph{early strategy}, in which the entity claims the bounty as soon as possible, and the \emph{late strategy}, in which the entity claims the bounty as soon as possible \emph{without relinquishing the loot}. That is, if an entity can claim the bounty at time $t_b$ and the loot at time $t_\ell$, then the early strategy is to try to claim the bounty at time $t_b$; whereas the late strategy is to try to claim the bounty at time $\max\{t_b, t_\ell - w \}$ and the loot at time $t_\ell$, where $w$ is the quantum adjustment period between the time the canary is killed and the time the quantum loot is burned (Of course, whether one party's attempt is successful depends on the other party's strategy).

Note that if $t_b > t_\ell - w$ then both strategies coincide, and the entity becomes \emph{degenerate}.

In \cref{tab:timeline_and_payoffs} we analyze two non-degenerate entities with all their possible timelines. We use \fln and \frn (resp. \slen, \srn) to denote the points in time in which the first (resp. second) entity is able to claim the bounty and loot, respectively.
Note that by the design of the canary, \fln always happens \emph{before} \frn, since killing the canary is strictly easier than forging a standard \ecdsa signature, which is required for claiming (part of) the loot.

Since the pay-offs are symmetric, we assume without loss of generality that the \fcolor{first (faster)} player is faster, and can claim the bounty before the \scolor{second (slower)} player. We use \fmn (resp. \smn) to denote the point in time in which the entity would claim the bounty if they follow the late strategy. That is, the time difference between \fmn and \frn (resp. \smn and \srn) is always $w$.

\renewcommand{\arraystretch}{1.5}

\begin{table}[!htb]
    \centering
    
\begin{tabular}{l||l}
\toprule
\textbf{Timelines} & \textbf{Payoffs}\\
\midrule
$\begin{array}{l}
\tikzmark{TL1}\fl\sle\spaces{5}\sm\spaces{3}\sr\fm\spaces{3}\fr\\
\tikzmark{TL2}\fl\spaces{4}\sle\spaces{5}\sm\fm\spaces{2}\sr\fr
\end{array}
$
 & 
 $ \begin{array}{c|cc} & \SE & \SL \\
 \hline 
 \fw{\FE} & \underline{(\fcolor{b},\scolor{0})} &  \underline{(\fcolor{b},\scolor{0})} \\[3pt]
 \fw{\FL} & (\fcolor{0},\scolor{b}) & (\fcolor{0},\scolor{b+}\,\,\textcolor{red}{\ell}) 
 \end{array}$
\\
\midrule
$ \begin{array}{l}
\tikzmark{TL3}\fl\spaces{4}\sle\spaces{3}\fm\spa\sm\spa\fr\sr\\
\tikzmark{TL4}\fl\spaces{7}\sle\fm\spaces{3}\fr\sm\spaces{3}\sr
\end{array}
$

 & 
 $\begin{array}{c|cc} & \SE & \SL \\
 \hline 
 \fw{\FE} & \underline{(\fcolor{b},\scolor{0})} &  (\fcolor{b},\scolor{0}) \\[3pt]
 \fw{\FL} &  (\fcolor{0},\scolor{b})& (\fcolor{b+}\,\textcolor{red}{\ell},\scolor{0}) 
 \end{array}$
\\
\midrule

 $ \begin{array}{l}
\tikzmark{TL5}\fl\spaces{8}\fm\sle\spa\sm\fr\spaces{2}\sr\\
\tikzmark{TL6}\fl\spaces{8}\fm\spaces{2}\sle\fr\sm\spaces{3}\sr\\
\tikzmark{TL7}\fl\spaces{8}\fm\spaces{3}\fr\sle\spa\sm\spaces{3}\sr
\end{array}
$

 & 
 $ 
 \begin{array}{c|cc} & \SE & \SL \\
 \hline 
 \fw{\FE} & (\fcolor{b},\scolor{0}) &   (\fcolor{b},\scolor{0})\\[3pt]
 \fw{\FL} & \underline{(\fcolor{b+}\,\textcolor{red}{\ell},\scolor{0})} & \underline{(\fcolor{b+}\,\textcolor{red}{\ell},\scolor{0})} 
 \end{array}$
\\
\bottomrule

\end{tabular}

 \begin{tikzpicture}[remember picture, overlay]
   \draw[->,bend left] (TL2.east) to (TL1.east);
   \draw[->,bend right] (TL3.east) to (TL4.east);
   \draw[->,bend left] (TL5.east) to (TL3.north east);
   \draw[->,bend right] (TL5.east) to (TL6.east);
   \draw[->,bend left] (TL6.west) to (TL4.west);
   \draw[thick, dotted,->,bend right, transform canvas={xshift=.7em}] (TL7) to (TL6);
   \draw[dotted,->,bend right, transform canvas={xshift=.7em}] (TL6) to (TL4);
   \draw[thick, dotted,->,bend right, transform canvas={xshift=.7em}] (TL5) to (TL3);
 \end{tikzpicture}
    \label{tab:timeline_and_payoffs}

    \caption{Best viewed in color. An analysis of two non-degenerate dishonest quantum entities with respect to all possible timelines. The timelines are grouped into scenarios, where each scenario corresponds to a different payoff matrix. Each timeline advances in time from left to right. \fln (resp. \slen) and \frn (resp. \srn) represent when the \fcolor{faster} (resp. \scolor{slower}) entity is capable of claiming the bounty and the quantum loot respectively. \fmn (resp. \smn) represents the earliest point in which the \fcolor{faster} (resp. \scolor{slower}) entity can claim the bounty without forfeiting the loot, and is always at distance $w$ from \frn (resp. \srn). The payoff matrices list the outcome of the game for both players given their timeline and the strategies they chose. The headings of the matrix rows and columns represent the strategies chosen by both players, where "E" stands for the early strategy, and "L" stands for the late strategy. The rows (resp. columns) of the matrices represent the strategy chosen by the  \fcolor{faster} (resp. \scolor{slower}) entity. Each entry of the matrix specifies the outcome for the \fcolor{faster} and \scolor{slower} players, in that order. Pure Nash equilibria are denoted by an underline. Solid arrows describe how timelines transform as the waiting time decreases, and dotted arrows describe how timelines transform as the bounty increases. We find that in the first two scenarios, the equilibria strategies do not claim the loot.}
\end{table}

The matrices on the right column show the payoffs for each entity in each possible timeline given the strategy of both entities. The rows of the matrices represent the strategy of the \fcolor{fast} entity, whereas the columns represent the strategy of the \scolor{slow} entity. The headings "E" and "L" denote the early and late strategies, respectively. The utility of the bounty is denoted by $b$ and the loot by $\ell$. The pure Nash equilibria of the payoff matrices are underlined.

We find that in the first two scenarios, the equilibria strategies do not claim the loot. Hence, the network is given the desired heads-up before quantum looting becomes feasible. Furthermore, we find that the third scenario transforms into the second scenario as we decrease $w$ or increase the bounty.
Decreasing $w$ is modeled by moving \fmn to the right (towards \frn) and, similarly, \smn towards \srn, where both are moved the same distance and the other points remain fixed. The solid arrows show how timelines change as $w$ is decreased. We find that as $w$ decreases, all timelines but the last converge to scenarios where equilibria strategies do not claim the loot. The last timelime models a faster entity that is strong enough to claim the loot before the slow entity is able to claim the bounty, so it will win both the bounty and the loot for any value of $w$.

However, note that increasing the bounty might cause a player to increase the investment in building a small quantum computer, sufficiently large to win the bounty. This could be interpreted as moving \fln and \slen to the left. Since the bounty is orders of magnitude smaller than the loot, increasing the bounty does not seem to expedite the emergence of looting-capable quantum entities, so \fr and \sr (and consequentially \fm and \sm) remain fixed. Dashed arrows describe how the different timelines change as the bounty increases. In particular, the last timeline converges to a timeline whose equilibria strategy does not claim the loot. This motivates picking a fairly large bounty, to increase the chances that this dynamic would occur.




We do not provide a detailed analysis of the scenario where one (or both) entities are degenerate, as it does not provide much insight, and degenerate entities transform to non-degenerate ones as $w$ decreases.

\fi

\ifnum\masterthesis=1
    \bibliography{main}
\fi

\end{document}